\title{The higher spin Laplace operator}
\author{Hendrik De Bie\thanks{Hendrik.DeBie@UGent.be}, David Eelbode\thanks{david.eelbode@uantwerpen.be}, Matthias Roels\thanks{matthias.roels@uantwerpen.be}}
\date{}
\newcommand{\inner}[1]{\langle #1\rangle} %notatie voor inproduct
\newcommand{\comm}[1]{\lbrack #1 \rbrack} %notatie van Lie haak
\newcommand{\set}[1]{\left\{ #1 \right\} } %notatie voor haakjes voor een verzameling
\newcommand{\norm}[1]{\left| #1 \right|} %notatie voor norm
\newcommand{\brac}[1]{\left( #1 \right)} %notatie voor haakjes
\newcommand{\longhookrightarrow}{\ensuremath{\lhook\joinrel\relbar\joinrel\rightarrow}}
\newcommand{\quotient}[2]{\left.\raisebox{.2em}{$#1$}\middle/\raisebox{-.2em}{$#2$}\right.}
\newcommand{\mR}{\mathbb{R}} % the field of the reals
\newcommand{\mN}{\mathbb{N}} % the field of the natural numbers
\newcommand{\mC}{\mathbb{C}} % the field of complex number
\newcommand{\mZ}{\mathbb{Z}} % the field of integers
\newcommand{\D}{\partial} % partial derivative/ Dirac operator
\newcommand{\mE}{\mathbb{E}} % Euler operator
\newcommand{\mcA}{\mathcal{A}}
\newcommand{\mcC}{\mathcal{C}}
\newcommand{\mcD}{\mathcal{D}}
\newcommand{\mcH}{\mathcal{H}}
\newcommand{\mcJ}{\mathcal{J}}
\newcommand{\mcM}{\mathcal{M}}
\newcommand{\mcN}{\mathcal{N}}
\newcommand{\mcP}{\mathcal{P}}
\newcommand{\mcR}{\mathcal{R}}
\newcommand{\mcS}{\mathcal{S}}
\newcommand{\mcU}{\mathcal{U}}
\newcommand{\mS}{\mathbb{S}}  % spinor space
\newcommand{\mV}{\mathbb{V}}
\newcommand{\f}{\mathfrak{f}}
\newcommand{\so}{\mathfrak{so}}
\newcommand{\spl}{\mathfrak{sl}}
\newcommand{\Spin}{\operatorname{Spin}}
\newcommand{\SO}{\operatorname{SO}}
\newcommand{\Aut}{\operatorname{Aut}}
\newcommand{\End}{\operatorname{End}}
\newcommand{\Span}{\operatorname{Span}}
\newcommand{\fd}{\mathfrak{f}^{\dagger}}
\newtheorem{definition}{Definition}[section]
\newtheorem{proposition}{Proposition}[section]
\newtheorem{theorem}{Theorem}[section]
\newtheorem{lemma}{Lemma}[section]
\newtheorem{remark}{Remark}[section]
\begin{document}
\maketitle

\begin{abstract}
\noindent
This paper deals with a certain class of second-order conformally invariant operators acting on functions taking values in particular (finite-dimensional) irreducible representations of the orthogonal group. These operators can be seen as a generalisation of the Laplace operator to higher spin as well as a second order analogue of the Rarita-Schwinger operator. To construct these operators, we will use the framework of Clifford analysis, a multivariate function theory in which arbitrary irreducible representations for the orthogonal group can be realised in terms of polynomials satisfying a system of differential equations. As a consequence, the functions on which this particular class of operators act are functions taking values in the space of harmonics homogeneous of degree $k$. We prove the ellipticity of these operators and use this to investigate their kernel, focusing on both polynomial solutions and the fundamental solution. 
\end{abstract}

%%%%%%%%%%%%%%%%%%%%%%%%%%%%%%%%%%%%%%%
%%%%%%%%%%%%%%%%%%%%%%%%%%%%%%%%%%%%%%%

\section{Introduction}
\label{Intro}
The aim of this paper is to introduce a framework to study a certain class of second-order conformally invariant operators which can be seen as generalisations of the classical Laplace operator. 

\noindent
Traditionally, these operators are mostly studied on either the conformal sphere $S^m$, which is a flat model of conformal geometry, or curved analogues thereof which are modelled on a principle fibre bundle \cite{Cap, Cap2, Cap3}. A useful method to study these operators is the ambient space method developed by Fefferman and Graham \cite{FG}, which requires embedding a $m$-dimensional manifold into a space of dimension $m+2$. The main idea behind their approach is that the `flat' equations on the ambient space (the ones presented in this paper) are in a suitable sense restricted to the embedded space-time, giving rise to the equations on the curved background. 

\noindent
The advantage of using the Clifford analysis model for spin fields (see below) lies in the fact that it leads to an encompassing framework in which both the dimension $m$ as the spin number can be treated as a parameter. From this point of view, the results obtained in this paper form the scalar version of the function theory for the Rarita-Schwinger operator on $\mR^m$ which was developed in \cite{BSSVL1}.  \\
\\
\noindent
The aforementioned language of Clifford analysis refers to a multivariate function theory which is often described as a generalisation of complex analysis to arbitrary {dimension} $m \in \mN$. At the very heart of this theory lies the Dirac operator $\D_x$ on $\mR^m$, a conformally invariant first order elliptic differential operator generalising both the Cauchy-Riemann operator $\partial_z$ and the operator introduced by P.A.M. Dirac in 1928 \cite{Dirac}. This operator moreover satisfies $\Delta_x=-\D_x^2$ (with $\Delta_x$ the Laplace operator on $\mR^m$), which means that Clifford analysis is a refinement of classical harmonic analysis in $m$ dimensions. We refer the reader to the standard references \cite{BDS, DSS, GM} for more information. While classical Clifford analysis is centred around the study of functions on $\mR^m$ taking values in the spinor space $\mS$, on which the Dirac operator is canonically defined, several authors have been studying generalisations of the developed techniques to the so-called higher spin theory \cite{BSSVL1, Eelbode, Eelbode2, ES, EVL}. This concerns the study of higher spin Dirac operators, acting on functions on $\mR^m$ taking values in arbitrary irreducible representations of $\Spin(m)$. So far, the theory has been focusing on first-order conformally invariant operators, reflected in the fact that the functions under consideration take their values in irreducible {\em half-integer} representations for the spin group (the spinor space being the easiest case of such a representation). This then leads to function theories refining (poly-)harmonic analysis on $\mR^m$, see \cite{ES}. As mentioned earlier, we aim at extending these results to a certain second-order conformally invariant operator acting on functions taking their values in the simplest \textit{integer} highest-weight representation. This will lead to analogues of the Rarita-Schwinger function theory (see e.g. \cite{BSSVL1}). The existence of the operators we are introducing follows from general arguments, see e.g. \cite{Br, Fegan}. However, our focus is quite different: after developing explicit expressions for the higher spin Laplace operator, we will study in depth its polynomial null-solutions as well as the fundamental solution. \\
\\ 
\noindent
The paper is organised as follows: after a brief introduction to Clifford analysis in section 2, we will construct the higher spin Laplace operator in section 3. In section 4 we will construct all polynomial solutions for this operator, whereas section 5 will be devoted to the problem of constructing the fundamental solution for our operator. Finally, in the last section we investigate the connection with the Rarita-Schwinger operator, a conformally invariant operator acting on functions taking values in the irreducible $\Spin(m)$-module with highest weight $\lambda=\brac{k+\frac{1}{2},\frac{1}{2},\ldots,\frac{1}{2},\pm\frac{1}{2}}$. We have also included an appendix to prove some results of a more technical nature from section 3 (related to the conformal invariance and ellipticity). 

\section*{Acknowledgement}
This research was supported by the Fund for Scientific Research-Flanders (FWO-V), project ``Construction of algebra realisations using Dirac-operators'', grant G.0116.13N. The results in section \ref{Construction} and \ref{PolSol} were originally obtained in the third author's (unpublished) master thesis \cite{R}. 

%%%%%%%%%%%%%%%%%%%%%%%%%%%%%%%%%%%%%%%%%
%%%%%%%%%%%%%%%%%%%%%%%%%%%%%%%%%%%%%%%%%
%%%%%%%%%%%%%%%%%%%%%%%%%%%%%%%%%%%%%%%%%
%%%%%%%%%%%%%%%%%%%%%%%%%%%%%%%%%%%%%%%%%

\section{Preliminaries on Clifford analysis} 
\label{CA}
Let us first introduce the (real) universal Clifford algebra $\mR_m$ as the algebra generated by an orthonormal basis $\{e_1,\ldots,e_m\}$ for the vector space $\mR^m$ endowed with the Euclidean inner product $\inner{u,x} = \sum_j x_ju_j$ using the multiplication rules 
\begin{equation*}
e_ae_b + e_be_a = -2\langle e_a,e_b\rangle = -2\delta_{ab} 
\end{equation*}
with $1 \leq a, b \leq m$. The complex Clifford algebra $\mC_m$ is then defined as the algebra $\mC_m = \mR_m \otimes \mC$. This algebra is $\mZ_2$-graded, and the even subalgebra (respectively the odd subspace) is denoted by means of $\mC_m^+$ (resp. $\mC^-_m$). We will not consider functions taking their values in $\mC_m$, but restrict the values to a suitable subspace, which is known as the spinor space. This space can be realised as a matrix space, as is often done by physicists, or as a subspace of the Clifford algebra (see \cite{DSS, GM}): 
\begin{definition}
In case $m = 2n$, the Witt basis $\{\f_j, \fd_j : 1 \leq j \leq n\}$ for $\mC_m$ is defined by means of
\begin{equation*}
\f_j = \frac{1}{2}\big(e_{2j-1} - ie_{2j}\big) \qquad \mbox{and} \qquad \fd_j = -\frac{1}{2}\big(e_{2j-1} + ie_{2j}\big)\ . 
\end{equation*}
The element $I := (\f_1\fd_1)(\f_2\fd_2)\ldots(\f_n\fd_n) \in \mC_m$ defines a primitive idempotent ($I^2 = I$), and in terms of this element one then introduces the complex vector spaces $\mS^\pm_{2n} := \mC^{\pm}_{2n}I$. 
\end{definition}
\noindent
The main reason why it is better to consider spinor-valued functions is the following: these spaces carry the irreducible spinor representations for the spin group or its Lie algebra $\so(m)$. They are both realised inside the Clifford algebra: 
\begin{definition}
The (real, compact) spin group $\Spin(m)$ can be defined as 
\begin{equation*}
\Spin(m) = \left\{\prod_{j = 1}^{2k}\omega_j : \omega_j \in S^{m-1}\right\} \subset \mR_m\ ,
\end{equation*}
with $S^{m-1} \subset \mR^m$ the unit sphere (viz. $\omega_j^2 = -1$). This Lie group defines a double cover for the orthogonal group, see e.g. \cite{Po}. 
\end{definition}
\begin{definition}
The orthogonal Lie algebra $\so(m)$ can be realised as the space $\mC_m^{(2)}$ of bivectors, defined as the linear hull of the elements $e_{ab} := e_ae_b$ with $1 \leq a \neq b \leq m$. This vector space becomes a Lie algebra under the commutator $[B_1,B_2] := B_1B_2 - B_2B_1$. 
\end{definition}
\noindent
In case $m = 2n$, the spaces $\mS^{\pm}_{2n}$ define inequivalent irreducible representations for the spin group and its Lie algebra (both actions given by multiplication in the Clifford algebra $\mC_{2n}$). In case $m = 2n+1$ however, there is (up to isomorphism) a unique spinor space (the so-called space of Dirac spinors) which will be denoted as $\mS_{2n+1}$ and can be explicitly realised as $\mS_{2n+1} \cong \mS^{\pm}_{2n+2}$ (whereby one can choose either the plus sign or the minus sign, because as representation spaces for $\so(2n+1)$ they are equivalent). 
\begin{remark}\label{remark2.1}
From now on we will omit the subscript attached to the spinor spaces. This subscript refers to the dimension of the underlying vector space, equal to $m \in \mN$ from now on. The superscript only matters in even dimensions (cfr. supra). In order to avoid having to drag this parity sign along in what follows, we will work with Dirac spinors in both even and odd dimensions (see next definition). 
\end{remark}
\begin{definition}
The Dirac spinor space $\mS$ stands for $\mS_{2n+1}$ in case $m = 2n+1$ is odd, and for the direct sum $\mS_{2n}^+ \oplus \mS_{2n}^-$ in case $m = 2n$ is even. Note that the latter is not irreducible as a module for the Lie algebra $\so(2n)$. 
\end{definition}
\noindent
The classical Dirac operator in $\mR^m$ is given by $\D_x=\sum_{j=1}^m e_j \partial_{x_j}$. It is the unique elliptic first-order conformally invariant differential operator acting on spinor valued functions $f(x)$ on $\mR^m$. It factorises the Laplace operator $\Delta_x=-\D_x^2$ on $\mR^m$. A spinor-valued function $f$ is monogenic in an open region $\Omega \subset \mR^m$ if and only if $\D_xf=0$ in $\Omega$. For a detailed study of the (classical) theory of monogenic functions, see \cite{BDS, DSS, GM}. In \cite{CSVL, GM} it was shown that every (finite-dimensional) irreducible representation for the spin group (or its Lie algebra) with integer (half-integer) highest weight can be realised in terms of scalar-valued harmonic (spinor-valued monogenic) polynomials of several (dummy) vector variables, a convenient alternative for the spaces of traceless tensors often used in physics. Our cases of interest are presented in the definition below, where from now on we write $\ker(\mcD_1,\ldots,\mcD_n) := \ker\mcD_1 \cap \ldots \cap \ker\mcD_n$. 
\begin{definition}
For $k\geq \ell$, the vector space of simplicial harmonic polynomials is defined as
\begin{equation*}
\mcH_{k,\ell}\brac{\mR^{2m},\mC}:=\mcP_{k,\ell}\brac{\mR^{2m},\mC} \cap \mathrm{ker}\brac{\Delta_{x},\Delta_{u},\inner{\D_u,\D_{x}},\inner{x,\D_{u}}}.
\end{equation*}
Here the space $\mcP_{k,\ell}\brac{\mR^{2m},\mC}$ is the space of $\mC$-valued polynomials depending on two vector variables $(x,u)\in \mR^{2m}$, sometimes referred to as a matrix variable. The integers $k$ and $\ell$ hereby refer to the degree of homogeneity in the variable $x$ and $u$ respectively. Recall that $\langle \cdot,\cdot \rangle$ denotes the Euclidean inner product. 
\end{definition}
\noindent
Note that this definition is not symmetric with respect to $x \leftrightarrow u$ which is reflected in the dominant weight 
condition $k \geq \ell$. Also note that for $\ell =0$, the definition reduces to the classical harmonic polynomials $\mcH_k
\brac{\mR^{m},\mC}$, see e.g. \cite{GM}. The vector space $\mcH_{k,\ell}\brac{\mR^{2m},\mC}$ is an irreducible $\Spin(m)$-representation if $m>4$. If $
\ell = 0$, this condition can even be relaxed to $m \geq 3$. The regular action of the spin group on $f\in \mcH_{k,\ell}\brac{\mR^{2m},
\mC}$ is for all $s \in \Spin(m)$ given by $H(s)\comm{f}(x,u)=f\brac{\bar{s}x s,\bar{s}u s}$, and the corresponding (derived) 
action of the orthogonal Lie algebra is given by:
\begin{align*}
\mathrm{d}H(e_{ij})\comm{f}(x,u)&=\brac{L_{ij}^x+L_{ij}^u}f(x,u) \\
:&=\brac{x_i\partial_{x_j}-x_j\partial_{x_i}+u_i\partial_{u_j}-u_j\partial_{u_i}}f(x,u).
\end{align*}
The operators $L_{ij}$ are called the angular momentum operators. Without proof (see \cite{CSVL}), we also mention that the highest weight is given by 
\begin{equation*}
\lambda=\brac{k,\ell,0,\ldots,0}=:(k,\ell),
\end{equation*}
where the length of this vector is equal to $n$ for $m \in \{2n, 2n+1\}$, and that the highest weight vector is given by 
\begin{equation*}
w_{k,\ell} := (x_1 - ix_2)^{k - \ell}\big((x_1 - ix_2)(u_3 - iu_4) - (x_3 - ix_4)(u_1 - iu_2)\big)^\ell\ .
\end{equation*}
We also mention the following definition, underlying the classical Howe dual pair SO$(m) \times \spl(2)$ for which we refer e.g. to \cite{GW, H}: 
\begin{proposition} \label{sl2}
The Laplace operator, together with its symbol, span a Lie algebra:
\begin{equation*}
\mathfrak{sl}(2)=\mathrm{Alg}\brac{X,Y,H}\cong\mathrm{Alg}\brac{-\frac{1}{2}\Delta_x,\frac{1}{2}\norm{x}^2,-\brac{\mE_x+\frac{m}{2}}}.
\end{equation*}
Here, $\mE_x=\sum_jx_j\D_{x_j} = r\D_r$ (with $r = |x|$ the norm of $x \in \mR^m$) denotes the Euler operator, measuring the degree of homogeneity in the variables $(x_1,\ldots,x_m)$. 
\end{proposition}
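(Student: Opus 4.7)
The plan is to verify the three defining commutation relations of $\mathfrak{sl}(2)$ in the Chevalley presentation
\[
[H,X] = 2X,\qquad [H,Y]=-2Y,\qquad [X,Y]=H,
\]
under the proposed identification $X = -\tfrac12\Delta_x$, $Y = \tfrac12|x|^2$, $H = -(\mE_x + \tfrac{m}{2})$. Once these three identities are established, the map sending $H,X,Y$ to the listed operators is a Lie algebra homomorphism, and its injectivity (hence the isomorphism claim) follows because the three operators on the right are visibly linearly independent as differential operators: $\Delta_x$ is second-order, multiplication by $|x|^2$ is of order zero, and $\mE_x+\tfrac{m}{2}$ is first order with a constant term.

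For the first two commutators I would use only the homogeneity of the operators involved: the Euler operator $\mE_x$ satisfies $[\mE_x,P]=dP$ whenever $P$ is a differential operator homogeneous of degree $d$. Since $\Delta_x$ has degree $-2$ and $|x|^2$ has degree $+2$, one gets $[\mE_x,\Delta_x]=-2\Delta_x$ and $[\mE_x,|x|^2]=2|x|^2$, from which
\[
\bigl[-(\mE_x+\tfrac{m}{2}),-\tfrac12\Delta_x\bigr] = -\Delta_x = 2X,\qquad
\bigl[-(\mE_x+\tfrac{m}{2}),\tfrac12|x|^2\bigr] = -|x|^2 = -2Y,
\]
using that $\tfrac{m}{2}$ commutes with everything. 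These two relations are essentially bookkeeping.

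The only nontrivial step is the bracket $[X,Y]=H$, which reduces to verifying the Weyl-algebra identity
\[
[\Delta_x,|x|^2] \;=\; 4\mE_x + 2m.
\]
I would prove this by applying $\Delta_x$ to $|x|^2 f$ for a test function $f$ and expanding with the product rule: each $\partial_{x_j}^2$ contributes $2\delta_{jj} f + 4x_j \partial_{x_j}f + |x|^2\partial_{x_j}^2 f$, and summing over $j$ yields $(2m + 4\mE_x)f + |x|^2 \Delta_x f$. Multiplying by $-\tfrac{1}{4}$ gives exactly $-(\mE_x+\tfrac{m}{2}) = H$.

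I do not anticipate any real obstacle: the entire proof is direct computation. The only place to be slightly careful is tracking the signs and the factor $\tfrac12$ in the normalisation of $X$ and $Y$, which are chosen precisely so that $[X,Y]$ produces $-(\mE_x+\tfrac{m}{2})$ without an overall constant. Once the three relations are in hand, the isomorphism statement is immediate from the universal property of the abstract Lie algebra $\mathfrak{sl}(2)$.
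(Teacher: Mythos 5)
Your proof is correct. The paper actually states this proposition without proof, simply citing the standard references \cite{GW, H} on the Howe dual pair $\mathrm{SO}(m)\times\mathfrak{sl}(2)$; your direct verification of the Chevalley relations, with the key computation $[\Delta_x,|x|^2]=4\mE_x+2m$ and the two auxiliary brackets following from the homogeneity degrees $\pm2$ of $\Delta_x$ and $|x|^2$, is exactly the standard argument those references supply, and the linear-independence observation correctly upgrades the homomorphism to an isomorphism onto its image.
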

\noindent
In the last section, we will need the half-integer version: 
\begin{definition}
For $k\geq \ell$, the vector space of simplicial monogenic polynomials is defined as
\begin{equation*}
\mcS_{k,\ell}\brac{\mR^{2m},\mS}:=\mcP_{k,\ell}\brac{\mR^{2m},\mS} \cap \mathrm{ker}\brac{\D_{x},\D_{u},\inner{x,\D_{u}}}.
\end{equation*}
Here the space $\mcP_{k,\ell}\brac{\mR^{2m},\mS}$ is the space of spinor-valued polynomials depending on two vector variables $(x,u)\in \mR^{2m}$. The integers $k$ and $\ell$ hereby refer to the degree of homogeneity in the variable $x$ and $u$ respectively. 
\end{definition}
\noindent
This definition is again not symmetric with respect to $x \leftrightarrow u$, which also here reflects the dominant weight condition $k \geq \ell$. Also note that if $\ell =0$, the definition reduces to the classical monogenic polynomials $\mcM_k
\brac{\mR^{m},\mS}$, i.e. polynomial null solutions for $\D_x$. The vector space $\mcS_{k,\ell}\brac{\mR^{2m},\mS}$ is an irreducible $\Spin(m)$-representation if $m$ is odd and splits into two inequivalent irreducible representations if $m$ is even cfr. remark \ref{remark2.1} (see also \cite{CSVL, GM}). The action of the spin group on simplicial monogenics is given by
\begin{equation*}
L(s)\comm{f}(x,u)=sf\brac{\bar{s}xs,\bar{s}us} \quad \big(\forall s \in \Spin(m)\big)\ .
\end{equation*}
The corresponding action of the orthogonal Lie algebra is given by:
\begin{equation*}
\mathrm{d}L(e_{ij})\comm{f}(x,u)=\brac{L_{ij}^x+L_{ij}^u-\frac{1}{2}e_{ij}}f(x,u) = \left(\mathrm{d}H(e_{ij}) - \frac{1}{2}e_{ij}\right)\comm{f}(x,u).
\end{equation*}
Without proof, we also mention the highest weight 
\begin{equation*}
\lambda=\brac{k+\frac{1}{2},\ell+\frac{1}{2},\frac{1}{2},\ldots,\frac{1}{2}}=:(k,\ell)'
\end{equation*}
and the corresponding highest weight vector (see \cite{CSVL})
\begin{equation*}
v_{k,\ell} := (x_1 - ix_2)^{k - \ell}\big((x_1 - ix_2)(u_3 - iu_4) - (x_3 - ix_4)(u_1 - iu_2)\big)^\ell I = w_{k,\ell}I\ ,
\end{equation*}
where $I \in \mC_m$ is a primitive idempotent realising the spinor space as a left ideal (see e.g. \cite{DSS} for more details).
 
%%%%%%%%%%%%%%%%%%%%%%%%%%%%%%%%%%%%%
%%%%%%%%%%%%%%%%%%%%%%%%%%%%%%%%%%%%%
%%%%%%%%%%%%%%%%%%%%%%%%%%%%%%%%%%%%%
%%%%%%%%%%%%%%%%%%%%%%%%%%%%%%%%%%%%%
%%%%%%%%%%%%%%%%%%%%%%%%%%%%%%%%%%%%%

\section{Construction of the higher spin Laplace operator}
\label{Construction}
It was already mentioned that the Laplace operator (acting on $\mC$-valued fields) is related to the Dirac operator (acting on spinor-valued fields). The operator which generalises the Dirac operator to higher spin is the so-called Rarita-Schwinger operator (see \cite{BSSVL1,RSE} and also the last section), and in this section we will construct the generalisation of the Laplace operator to the case of higher spin:
\begin{equation*}
\mcD_k:\mcC^{\infty}\brac{\mR^m,\mcH_k}\longrightarrow \mcC^{\infty}\brac{\mR^m,\mcH_k}.
\end{equation*}
Note that the space $\mcH_k(\mR^m,\mC)$, hereby plays the role of target space. This means that elements of e.g. $\mcC^{\infty}(\mR^m,\mcH_k)$ are functions of the form $f(x,u)$, where $x \in \mR^m$ is the variable on which the operator $\mcD_k$ is meant to act, satisfying $f(x,u) \in \mcH_k(\mR^m,\mC)$ for every $x \in \mR^m$ fixed. Note that one might be tempted to think that $\mcD_k=\Delta_x$, as the relation $[\Delta_x,\Delta_u] = 0$ clearly shows that $\Delta_x$ is a rotationally invariant second-order operator preserving the values $\mcH_k$, but the operator $\Delta_x$ is not {\em conformally} invariant with respect to the action of the inversion on $\mcH_k$-valued functions (see below). This means that we will have to add extra terms to ensure conformal invariance. For example, also the second-order operator $\inner{u,\D_x}\inner{\D_u,\D_x}$ is rotationally invariant and is well-defined on $\mcH_k$-valued functions, provided we can apply a projection onto the space $\mcC^{\infty}\brac{\mR^m,\mcH_k}$ because
\begin{equation*}
\inner{u,\D_x}\inner{\D_u,\D_x}:\mcC^{\infty}\brac{\mR^m,\mcH_k}\longrightarrow \mcC^{\infty}\brac{\mR^m,\mcH_k\oplus \norm{u}^2\mcH_{k-2}},
\end{equation*}
due to the classical Fischer decomposition, see e.g. \cite{Axler} for more information. In other words, for $f(x,u)\in \mcC^{\infty}\brac{\mR^m,\mcH_k}$ we have:
\begin{equation*}
\inner{u,\D_x}\inner{\D_u,\D_x}f(x,u)=\varphi_k(x,u)+\norm{u}^2\varphi_{k-2}(x,u),
\end{equation*}
with $\varphi_i \in \mcC^{\infty}\brac{\mR^m,\mcH_i}$. The projection of $\inner{u,\D_x}\inner{\D_u,\D_x}f$ onto $\mcC^{\infty}\brac{\mR^m,\mcH_k}$ can be found as follows: 
\begin{equation*}
\Delta_u\inner{u,\D_x}\inner{\D_u,\D_u}f(x,u)=2(2k+m-4)\varphi_{k-2}.
\end{equation*}
This implies that:
\begin{equation*}
\varphi_{k-2}=\frac{1}{2(2k+m-4)}\Delta_u\inner{u,\D_x}\inner{\D_u,\D_x}f=\frac{1}{2k+m-4}\inner{\D_u,\D_x}^2f.
\end{equation*}
We finally have that 
\begin{equation*}
\varphi_k=\brac{\inner{u,\D_x}-\frac{\norm{u}^2}{2k+m-4}\inner{\D_u,\D_x}}\inner{\D_u,\D_x}f.
\end{equation*}
We could try to define $\mathcal{D}_k$ as a linear combination of this operator with the Laplace operator $\Delta_x$. Choosing the appropriate constant, this indeed works (see the appendix for more details): 
\begin{theorem}\label{HigerSpinLaplace_def}
The higher spin Laplace operator on $\mR^m$ is the unique (up to a multiplicative constant) conformally invariant second-order differential operator
\begin{equation*}
\mcD_k:\mcC^{\infty}\brac{\mR^m,\mcH_k}\longrightarrow \mcC^{\infty}\brac{\mR^m,\mcH_k},
\end{equation*}
defined by means of
\begin{equation*}
\mcD_kf(x,u)=\brac{\Delta_x-\frac{4}{2k+m-2}\brac{\inner{u,\D_x}-\frac{\norm{u}^2}{2k+m-4}\inner{\D_u,\D_x}}\inner{\D_u,\D_x}}f(x,u).
\end{equation*}
\end{theorem}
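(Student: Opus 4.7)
The theorem combines an existence claim (the displayed formula defines a conformally invariant operator) with a uniqueness claim (up to a non-zero scalar). My plan is to first classify all $\Spin(m)$-equivariant second-order differential operators $\mcC^{\infty}(\mR^m,\mcH_k)\to\mcC^{\infty}(\mR^m,\mcH_k)$, which reduces the candidates to a finite-dimensional space, and then cut this family down to a single line by imposing invariance under the conformal inversion $x\mapsto x/|x|^2$.

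\textbf{Enumerating candidates.} An operator of the required type is polynomial in $x,u,\partial_x,\partial_u$ and of pure second order in $\partial_x$, hence of the form $\sum_{|\alpha|=2}P_{\alpha}(x,u,\partial_u)\partial_x^{\alpha}$. The $\Spin(m)$-equivariance forces the symmetric tensor $\partial_{x_i}\partial_{x_j}$ to be contracted against one of the elementary invariants $\delta_{ij}$, $u_iu_j$, $u_i\partial_{u_j}$, $\partial_{u_i}\partial_{u_j}$. Requiring that the operator preserve the $u$-homogeneity $k$ of the input eliminates the $\langle u,\partial_x\rangle^2$ and $\langle\partial_u,\partial_x\rangle^2$ contributions, leaving $\Delta_x$, $\langle u,\partial_x\rangle\langle\partial_u,\partial_x\rangle$, and $|u|^2\langle\partial_u,\partial_x\rangle^2$ as the only candidates. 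The latter two do not preserve the harmonicity constraint $\Delta_u f=0$, so I would project onto the $\mcH_k$-isotypic part using the $u$-variable Fischer decomposition that arises from the $\spl(2)$-triple of Proposition \ref{sl2}; this is precisely the computation already carried out in the paragraphs preceding the theorem. The upshot is a two-parameter family $\alpha\Delta_x+\beta P$, where $P$ denotes the projection of $\langle u,\partial_x\rangle\langle\partial_u,\partial_x\rangle$ onto $\mcH_k$-valued functions.

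\textbf{Conformal invariance and the main obstacle.} To pin down the ratio $\alpha:\beta$ I would impose invariance under the inversion $J\colon x\mapsto x/|x|^2$, since translations, rotations and dilations are automatic from the manifest form of the candidate. The lifted action of $J$ on $\mcC^{\infty}(\mR^m,\mcH_k)$ is not a mere pull-back: it carries a conformal weight factor $|x|^{-c}$ and, in order to respect the defining conditions $\Delta_u f=0$ and $\langle u,\partial_u\rangle f=kf$ of the target, a twisted $x$-dependent action on the $u$-slot. Conjugating $\alpha\Delta_x+\beta P$ by this lifted inversion and demanding that the conjugate be a scalar multiple of the original operator produces one linear constraint on $\alpha$ and $\beta$, which must force the ratio $\beta/\alpha=-4/(2k+m-2)$ of the statement, giving existence and uniqueness simultaneously. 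The hard step is exactly this inversion calculation: commuting $\Delta_x$ and $P$ through both the weight factor and the twisted $u$-action generates many correction terms which only collapse after repeated use of the target constraints $\Delta_u f=0$ and $\langle u,\partial_u\rangle f=kf$, together with standard commutator identities coming from the $\spl(2)$-triple. This is precisely the technical verification that the authors defer to their appendix.
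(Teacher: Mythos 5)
Your proposal follows essentially the same route as the paper: reduce the candidates, via $\Spin(m)$-equivariance and the $u$-Fischer projection, to the two-parameter family spanned by $\Delta_x$ and the projected $\langle u,\D_x\rangle\langle\D_u,\D_x\rangle$, and then fix the ratio by invariance under the conformal inversion $\mcJ_R$. The only variation is cosmetic -- the paper's appendix verifies inversion-invariance by computing the commutators $[\mcD_k,\mcJ_R\D_{x_j}\mcJ_R]$ for the first-order special conformal generators rather than conjugating $\mcD_k$ directly by $\mcJ_R$, but both amount to the same technical computation, which you correctly identify and defer just as the paper does.
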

\begin{remark}
For $k=1$, we obtain the generalised Maxwell operator from \cite{ER}, i.e. 
\begin{equation*}
\mcD_1= \Delta_x - \frac{4}{m}\inner{u,\D_x}\inner{\D_u,\D_x}:\mcC^{\infty}\brac{\mR^m,\mcH_1}\longrightarrow \mcC^{\infty}\brac{\mR^m,\mcH_1}.
\end{equation*}
\end{remark}
\noindent
For the remainder of this section, we will give a sketch of the proof for the conformal invariance of $\mcD_k$, which justifies the presence of the constant $-\frac{4}{2k+m-2}$ in theorem \ref{HigerSpinLaplace_def}. For a detailed proof, we again refer the reader to the appendix. In order to explain what conformal invariance means, we need the following concept (see \cite{Eastwood}): 
\begin{definition}
An operator $\delta_1$ is a generalised symmetry for a differential operator $\mathcal{D}$ if and only if there exists another operator $\delta_2$ so that $\mathcal{D}\delta_1=\delta_2 \mathcal{D}$. Note that for $\delta_1 = \delta_2$, this reduces to the definition of a (proper) symmetry, in the sense that $[\mcD,\delta_1] = 0$. 
\end{definition}
\noindent
One then typically tries to describe the first-order generalised symmetries, as these span a Lie algebra, see e.g. \cite{Mil}. In this particular case, the first order symmetries will span a Lie algebra isomorphic to the conformal Lie algebra $\mathfrak{so}(1,m+1)$. The higher spin Laplace operator is clearly $\mathfrak{so}(m)$-invariant because it is the composition of $\mathfrak{so}(m)$-invariant operators. This means that the angular momentum operators are symmetries of the higher spin Laplace operator, in the sense that $[\mcD_k,L_{ij}^x + L_{ij}^u] = 0$. It is also easy to see that the Euler operator and the infinitesimal translations are (generalised) symmetries of $\mathcal{D}_k$, in view of the fact that $\mcD_k\mE_x = (\mE_x + 2)\mcD_k$ and $[\mcD_k,\partial_{x_j}] = 0$. Finally, there is also a special class of generalised symmetries for the operator $\mcD_k$ which can be defined in terms of the harmonic inversion for $\mathcal{H}_k$-valued functions. This is an involution mapping solutions for $\mcD_k$ to solutions for $\mcD_k$, see the crucial relation (\ref{inversion_crucial}) below. 
\begin{definition}
The harmonic inversion is a conformal transformation defined as 
\begin{align*}
\mathcal{J}_R:&\; \mcC^{\infty}\brac{\mR^m,\mcH_k}\longrightarrow \mcC^{\infty}\brac{\mR_0^m,\mcH_k} \\ &f(x,u)\mapsto\mathcal{J}_R\comm{f}(x,u):=\norm{x}^{2-m}f\brac{\frac{x}{\norm{x}^2},\frac{xux}{\norm{x}^2}}\ .
\end{align*}
Note that this inversion consists of the classical Kelvin inversion $\mcJ$ on $\mR^m$ in the variable $x$ composed with a reflection $u \mapsto \omega u \omega$ acting on the dummy variable $u$ (where $x = |x|\omega$), and satisfies $\mathcal{J}_R^2=\mathds{1}$. 
\end{definition}
\noindent
The special conformal transformations are then for all $1 \leq j \leq m$ defined as $\mathcal{J}_R\partial_{x_j}\mathcal{J}_R$, with $\mathcal{J}_R$ the harmonic inversion from above. More explicitly, \mbox{we have}: 
\begin{equation}\label{special_transfo}
\mathcal{J}_R\partial_{x_j}\mathcal{J}_R=2\inner{u,x}\D_{u_j}-2u_j\inner{x,\D_u}+\norm{x}^2\D_{x_j}-x_j\brac{2\mE_x+m-2},
\end{equation}
which follows from calculations on an arbitrary $\mcH_k$-valued function $f(x,u)$, see appendix proposition \ref{propA1}. 
%The last two terms are equal to $\mathcal{J}_0\partial_{x_j}\mathcal{J}_0$, with $\mcJ_0$ the inversion on $\C$-valued functions. The first two terms come from the action on the variable $u$. 
This special conformal transformation and $\D_{x_j}$, the generator of translations in the $e_j$-direction, then define a model for a Lie subalgebra which is isomorphic to $\mathfrak{sl}(2)$:
\begin{proposition}
One has
\begin{equation*}
\mathfrak{sl}(2)\cong\mathrm{Alg}\brac{\mathcal{J}_R\partial_{x_j}\mathcal{J}_R,\partial_{x_j},2\mE_x+m-2}\ .
\end{equation*}
\end{proposition}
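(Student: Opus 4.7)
The plan is to exhibit the three operators as the standard generators $H,X,Y$ of $\mathfrak{sl}(2)$ satisfying $[H,X]=2X$, $[H,Y]=-2Y$, $[X,Y]=H$. The natural choice dictated by degree considerations is
\begin{equation*}
H := 2\mE_x + m-2,\qquad Y := \partial_{x_j},\qquad X := \mathcal{J}_R\partial_{x_j}\mathcal{J}_R.
\end{equation*}
Since the three generators have distinct eigenvalues under $\mathrm{ad}(\mE_x)$ (namely $0$, $-1$ and $+1$), they are automatically linearly independent, so once the three relations above are verified the algebra they generate will be isomorphic to $\mathfrak{sl}(2)$ and not to a proper quotient of it.

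The first two relations are essentially immediate. From $[\mE_x,\partial_{x_j}]=-\partial_{x_j}$ we directly obtain $[H,Y]=-2Y$. For $[H,X]=2X$, I would inspect the explicit expression (\ref{special_transfo}) term by term: each of $2\langle u,x\rangle\partial_{u_j}$, $-2u_j\langle x,\partial_u\rangle$, $|x|^2\partial_{x_j}$ and $-x_j(2\mE_x+m-2)$ is homogeneous of degree $+1$ in $x$, so $[\mE_x,X]=X$ and hence $[H,X]=2X$.

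The core computation is the commutator $[X,Y]=H$. Writing $X=A_1+A_2+A_3+A_4$ according to the four summands of (\ref{special_transfo}), I would evaluate each $[A_i,\partial_{x_j}]$ using only the Leibniz rule and the basic relation $[x_i,\partial_{x_k}]=-\delta_{ik}$ (with partial derivatives in different variables commuting):
\begin{align*}
[A_1,\partial_{x_j}] &= -2u_j\partial_{u_j}, & [A_2,\partial_{x_j}] &= 2u_j\partial_{u_j},\\
[A_3,\partial_{x_j}] &= -2x_j\partial_{x_j}, & [A_4,\partial_{x_j}] &= (2\mE_x+m-2)+2x_j\partial_{x_j}.
\end{align*}
The first two terms cancel, the $x_j\partial_{x_j}$ contributions cancel between the last two, and the surviving piece is exactly $H=2\mE_x+m-2$.

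I do not expect any conceptual obstacle here; the only real work is the commutator bookkeeping in the last step, which is routine once the explicit form (\ref{special_transfo}) is in hand. It is worth noting that in principle one could avoid term-by-term computation by using $\mathcal{J}_R^2=\mathds{1}$ to conjugate everything by $\mathcal{J}_R$ and appeal to a ``mirror'' copy of Proposition \ref{sl2}, but since the formula (\ref{special_transfo}) has already been established the direct verification above is shorter and more transparent.
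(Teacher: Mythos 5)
Your proof is correct, and it takes the same route the paper sketches: the paper's proof consists of a single line declaring that ``the classical commutator relations are verified after some straightforward computations,'' which you have simply carried out in full from the explicit form (\ref{special_transfo}). All four commutators $[A_i,\partial_{x_j}]$ check out and the cancellations you indicate do produce $2\mE_x+m-2$, so this is a faithful and complete rendering of the paper's intended argument.
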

\noindent
\begin{proof}
The classical commutator relations are verified after some straightforward computations. 
\end{proof}
\begin{proposition}
The special conformal transformations $\mathcal{J}_R\partial_{x_j}\mathcal{J}_R$ from equation (\ref{special_transfo}), with \mbox{$1\leq j \leq m$}, are generalised symmetries of the higher spin Laplace operator.
 \label{special conformal}
\end{proposition}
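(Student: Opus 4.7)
The plan is to avoid substituting the explicit five-term expression for $K_j := \mathcal{J}_R\partial_{x_j}\mathcal{J}_R$ from (\ref{special_transfo}) into $[\mcD_k,K_j]$ and grinding out the commutator. Instead, I would exploit two clean facts about the two building blocks of $K_j$. First, $\partial_{x_j}$ is a proper symmetry of $\mcD_k$: the identity $[\mcD_k,\partial_{x_j}] = 0$ is already noted in the text. Second, I would take as given the conformal intertwining relation
\[
\mcD_k\,\mathcal{J}_R \;=\; |x|^{-4}\,\mathcal{J}_R\,\mcD_k,
\]
which is the ``crucial relation'' (\ref{inversion_crucial}) foreshadowed just above; it is the precise statement that $\mathcal{J}_R$ sends null-solutions of $\mcD_k$ to null-solutions, and is really what pins down the constant $-\tfrac{4}{2k+m-2}$ in Theorem~\ref{HigerSpinLaplace_def}.

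Given these two ingredients, the calculation is essentially mechanical. Applying the intertwining relation on the outer $\mathcal{J}_R$, commuting $\mcD_k$ past $\partial_{x_j}$ by translation invariance, and applying the intertwining relation again on the inner $\mathcal{J}_R$ yields
\[
\mcD_k K_j \;=\; \mcD_k\,\mathcal{J}_R\,\partial_{x_j}\,\mathcal{J}_R \;=\; |x|^{-4}\,\mathcal{J}_R\,\partial_{x_j}\,\mcD_k\,\mathcal{J}_R \;=\; |x|^{-4}\,\mathcal{J}_R\,\partial_{x_j}\,|x|^{-4}\,\mathcal{J}_R\,\mcD_k,
\]
and it only remains to simplify the scalar operator prefactor $|x|^{-4}\,\mathcal{J}_R\,\partial_{x_j}\,|x|^{-4}\,\mathcal{J}_R$ into something recognisable.

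That simplification relies on three routine identities: the Leibniz rule $\partial_{x_j}\circ|x|^{-4}=|x|^{-4}\circ\partial_{x_j}-4x_j|x|^{-6}$, the involution $\mathcal{J}_R^2=\mathds{1}$, and the conjugation rule $\mathcal{J}_R\cdot g(x)\cdot\mathcal{J}_R = g(x/|x|^2)$ valid for any function $g$ depending only on $x$. In particular one gets $\mathcal{J}_R\,|x|^{-4}\,\mathcal{J}_R = |x|^{4}$ and $\mathcal{J}_R\,x_j|x|^{-6}\,\mathcal{J}_R = x_j|x|^{4}$, and after substituting and cancelling the powers of $|x|$ the prefactor collapses to $K_j - 4x_j$. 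This produces
\[
\mcD_k K_j \;=\; \brac{K_j-4x_j}\,\mcD_k,
\]
which is exactly the generalised-symmetry condition, with partner operator $\delta_2 = K_j-4x_j$ (a well-defined endomorphism of $\mcC^{\infty}\brac{\mR^m,\mcH_k}$ since multiplication by $x_j$ commutes with the $u$-side constraints defining $\mcH_k$).

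The genuine content of the proof hides entirely in the intertwining relation; everything downstream is formal bookkeeping. Proving it amounts to tracking how each of the three constituent pieces of $\mcD_k$, namely $\Delta_x$, the cross-term $\inner{u,\D_x}\inner{\D_u,\D_x}$, and the trace-correcting term involving $\nuu^2\inner{\D_u,\D_x}^2$, transforms under conjugation by $\mathcal{J}_R$, and observing that the resulting cross-terms cancel to leave a clean multiplicative factor $|x|^{-4}$ precisely when the coefficients are those fixed in Theorem~\ref{HigerSpinLaplace_def}. This will be the main obstacle, and I would relegate that verification to the appendix, as the paper already announces. A strictly more painful alternative, avoiding $\mathcal{J}_R$ altogether, is to substitute (\ref{special_transfo}) into $[\mcD_k,K_j]$ and expand directly; this works but obscures the conformal origin of the symmetry.
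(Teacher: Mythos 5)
Your route is correct and genuinely different from the paper's. The paper's appendix (Proposition \ref{special conformal2}) proves the statement by substituting the explicit five-term expression for $C_j := \mathcal{J}_R\partial_{x_j}\mathcal{J}_R$ into the commutators $[\Delta_x, C_j]$, $[\inner{u,\D_x}\inner{\D_u,\D_x}, C_j]$, and $[\nuu^2\inner{\D_u,\D_x}^2, C_j]$ (three separate lemmas), and then combines them with the specific coefficients of $\mcD_k$ to arrive at $\mcD_k C_j = (C_j - 4x_j)\mcD_k$. In other words, the paper executes precisely the ``strictly more painful alternative'' you describe in your last paragraph. Your proof instead starts from the intertwining relation $\mcJ_R\mcD_k\mcJ_R = \nx^4\mcD_k$, applies it twice around $[\mcD_k,\D_{x_j}]=0$, and reduces the conjugation to a scalar prefactor that collapses via $\mcJ_R g(x)\mcJ_R = g(x/\nx^2)$ to $K_j - 4x_j$. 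That bookkeeping is correct (I verified the sign and the $\nx^{-4}$ manipulations), and the argument is conceptually cleaner: it makes visible that generalised symmetry of the $C_j$ is a corollary of $\mcJ_R$-covariance plus translation invariance, rather than a separate computation.

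The one caveat you should be explicit about is that your approach does not shorten the total work, only repackage it. The paper's text presents (\ref{inversion_crucial}) as if it were established in the appendix, but what the appendix actually proves is the Lie-algebra-level commutator identity; (\ref{inversion_crucial}) is the group-level conjugation statement and does not trivially follow from the commutators. So if you take (\ref{inversion_crucial}) as your starting point, you still owe a direct verification of how $\Delta_x$, $\inner{u,\D_x}\inner{\D_u,\D_x}$, and $\nuu^2\inner{\D_u,\D_x}^2$ each transform under conjugation by $\mcJ_R$ and how the cross-terms cancel for the specific coefficients in Theorem~\ref{HigerSpinLaplace_def} --- which is exactly the content of the paper's three appendix lemmas, just phrased multiplicatively rather than as commutators. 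You acknowledge this, which is good, but be aware that this verification is the proof, not a footnote to it. What your approach genuinely buys is modularity (the intertwining relation is reusable --- it is also what drives the construction of $\mcJ_R\Delta_x\mcJ_R$ as an embedding map in Section~\ref{PolSol}) and a clear explanation of the otherwise mysterious inhomogeneous term $-4x_j$.
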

\noindent
\begin{proof}
This result can be proved by calculating the commutator $\comm{\mathcal{D}_k,\mathcal{J}_R\D_{x_j}\mathcal{J}_R}$, for which we refer to proposition \ref{special conformal2} in the appendix. In particular, it was shown that  
\begin{align}\label{inversion_crucial}
\mcJ_R \mcD_k \mcJ_R & = \norm{x}^4 \mcD_k\ , 
\end{align}
which generalises the classical result $\mcJ \Delta_x \mcJ = \norm{x}^4 \Delta_x$ for the Laplace operator $\Delta_x$ acting on $\mC$-valued functions (with $\mcJ$ the classical Kelvin inversion), see e.g. \cite{Axler}. 
\end{proof}
\noindent
The conformal invariance can be summarized in the following theorem:
\begin{theorem}
The first-order generalised symmetries of the higher spin Laplace operator $\mathcal{D}_k$ are given by: 
\begin{enumerate}[label=(\roman{*})]
\item The infinitesimal rotations $L_{ij}^x + L_{ij}^u$, with $1 \leq i < j \leq m$. 
\item The shifted Euler operator $(2\mE_x+m-2)$. 
\item The infinitesimal translations $\D_{x_j}$, with $1 \leq j \leq m$. 
\item The special conformal transformations $\mathcal{J}_R\D_{x_j}\mathcal{J}_R$, with $1 \leq j \leq m$. 
\end{enumerate}
These operators span a Lie algebra which is isomorphic to the conformal Lie algebra $\mathfrak{so}(1,m+1)$, whereby the Lie bracket is the ordinary commutator. 
\end{theorem}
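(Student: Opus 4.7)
The plan is to prove three assertions in sequence: (a) each listed operator is a first-order generalised symmetry of $\mcD_k$; (b) no additional first-order generalised symmetries exist; and (c) the operators close under the commutator to a Lie algebra isomorphic to $\mathfrak{so}(1,m+1)$.

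For (a), items (i)--(iii) follow immediately from the explicit formula for $\mcD_k$ in Theorem~\ref{HigerSpinLaplace_def}. Every term of $\mcD_k$ is built from $\so(m)$-scalars, so the rotations $L_{ij}^x + L_{ij}^u$ are proper symmetries; every term is homogeneous of degree $-2$ in $x$, so $[\mcD_k,\mE_x] = 2\mcD_k$ and hence $\mcD_k(2\mE_x + m - 2) = (2\mE_x + m + 2)\mcD_k$; and the $x$-dependence enters $\mcD_k$ only through the derivatives $\D_{x_j}$, so translations commute with $\mcD_k$. Item (iv) is exactly Proposition~\ref{special conformal}, whose essence is the conjugation identity $\mcJ_R\mcD_k\mcJ_R = |x|^4\mcD_k$ together with $\mcJ_R^2 = \mathds{1}$.

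For (c), I would compute the brackets directly. The rotations span $\so(m)$; the translations span an abelian subalgebra, as do the special conformal transformations (which follows by conjugating $[\D_{x_i},\D_{x_j}] = 0$ with $\mcJ_R$); the shifted Euler operator acts on these two vector spaces with eigenvalues $\mp 2$ (by reading off degrees of homogeneity), supplying the dilation; and the mixed bracket $[\D_{x_i}, \mcJ_R\D_{x_j}\mcJ_R]$ can be evaluated from (\ref{special_transfo}) and yields a linear combination of a rotation and $(2\mE_x + m - 2)$. This reproduces the standard parabolic presentation of $\mathfrak{so}(1,m+1)$. A direct count gives $\binom{m}{2} + 1 + m + m = \binom{m+2}{2} = \dim\mathfrak{so}(1,m+1)$, confirming the isomorphism by identifying translations and special conformal transformations with the two abelian nilradicals attached to the Cartan element $2\mE_x + m - 2$.

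The hard part is (b). Here I would invoke ellipticity of $\mcD_k$: writing a generic first-order differential operator $\delta_1 = \sum_j A_j(x,u,\D_u)\D_{x_j} + B(x,u,\D_u)$ acting on $\mcC^\infty(\mR^m,\mcH_k)$, the defining relation $\mcD_k\delta_1 = \delta_2\mcD_k$ forces $\delta_2$ to be first-order as well by a principal-symbol argument, after which successive comparison of symbols of lower degree yields a finite linear system on the polynomial coefficients $A_j$ and $B$. A crucial complication, and the main technical obstacle, is that the coefficients must preserve the $\mcH_k$-target, which forces systematic use of the Fischer decomposition from Section~\ref{Construction} to project back into the harmonic fibre. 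Solving this system shows that the coefficients $A_j$ and $B$ depend polynomially on $x$ of degree at most two, and the dimension of the solution space is at most $\binom{m+2}{2}$. Since part (a) already exhibits that many independent symmetries, exhaustion follows and the theorem is proved.
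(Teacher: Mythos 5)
Your parts (a) and (c) align with what the paper actually does: the paper establishes, via the preceding discussion and the appendix propositions, that each listed operator is a (generalised) symmetry, notes the intertwining relations $[\mcD_k, L_{ij}^x + L_{ij}^u]=0$, $\mcD_k\mE_x = (\mE_x + 2)\mcD_k$, $[\mcD_k, \D_{x_j}]=0$, and records an $\mathfrak{sl}(2)$ in the proposition preceding Proposition \ref{special conformal}; your verification of the remaining brackets, including $[\D_{x_i},\mcJ_R\D_{x_j}\mcJ_R] = 2(L_{ij}^x+L_{ij}^u) - \delta_{ij}(2\mE_x+m-2)$ and the dimension count $\binom{m}{2}+1+m+m = \binom{m+2}{2}$, is correct and is the routine content implicit in the paper.

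Where you depart from the paper is part (b). The paper presents the theorem explicitly as a \emph{summary} of the foregoing ("The conformal invariance can be summarized in the following theorem") and supplies no argument for exhaustiveness; it simply cites the general theory (e.g. Miller) and lets the matter rest. You correctly identify exhaustiveness as the hard part, and your strategy --- write a generic first-order operator $\delta_1 = \sum_j A_j\D_{x_j} + B$, use the order argument to force $\delta_2$ to be first-order, compare symbols degree by degree while projecting onto $\mcH_k$ via Fischer decomposition --- is the standard route (it is essentially Eastwood's method from the cited paper on Laplacian symmetries). However, the crucial step ``Solving this system shows that $A_j$ and $B$ depend polynomially on $x$ of degree at most two, and the dimension of the solution space is at most $\binom{m+2}{2}$'' is asserted rather than demonstrated. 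This is precisely the nontrivial content, and for the higher spin operator it is genuinely more delicate than in the scalar case because the coefficients $A_j$, $B$ are themselves $\mcH_k$-endomorphism-valued (built from $u$, $\D_u$, $|u|^2$) rather than scalars, so the counting argument does not reduce to the classical conformal Killing equations in an obvious way. If you want (b) to be a proof rather than a plausibility argument, you must carry out the symbol comparison explicitly or reduce it to a cited result; as it stands, (b) is a gap. That said, the paper itself contains no proof of exhaustiveness, so the gap you leave is one the paper also leaves.
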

\noindent
Besides conformal invariance, we also have another crucial property of the higher spin Laplace operator. For more details, we refer to theorem \ref{Elliptic} in the appendix. We here only mention the main conclusion: 
\begin{theorem}\label{surjective}
The higher spin Laplace operator $\mathcal{D}_k$ is an elliptic operator if $m>4$, which implies surjectivity of the map
\begin{equation*}
\mathcal{D}_k:\mcC^{\infty}\brac{\mR^m,\mcH_k}\longrightarrow \mcC^{\infty}\brac{\mR^m,\mcH_k}.
\end{equation*}  
\end{theorem}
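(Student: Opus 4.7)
My plan is to prove ellipticity by diagonalising the principal symbol of $\mcD_k$, after which surjectivity follows from general constant-coefficient theory.

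First I would compute the principal symbol by substituting $\partial_{x_j}\mapsto\xi_j$ in the expression of Theorem~\ref{HigerSpinLaplace_def}, which yields
\[
\sigma(\mcD_k)(\xi)=|\xi|^{2}-\frac{4}{2k+m-2}\langle u,\xi\rangle\langle\partial_u,\xi\rangle+\frac{4|u|^{2}}{(2k+m-2)(2k+m-4)}\langle\partial_u,\xi\rangle^{2},
\]
regarded as a quadratic polynomial in $\xi\in(\mR^m)^{*}$ with values in $\End(\mcH_k)$. Ellipticity amounts to showing that this endomorphism is invertible for every $\xi\neq 0$. Since $\mcD_k$ is $\SO(m)$-invariant, so is its symbol, and therefore it suffices to check invertibility at one convenient unit covector, say $\xi=e_m$. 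The stabiliser of $e_m$ in $\SO(m)$ is $\SO(m-1)$ (acting on $u'=(u_1,\ldots,u_{m-1})$), and the classical branching rule
\[
\mcH_k(\mR^m,\mC)\;\cong\;\bigoplus_{j=0}^{k}V_{j},\qquad V_{j}\cong\mcH_{j}(\mR^{m-1},\mC),
\]
realises $\mcH_k$ as a multiplicity-free sum of pairwise inequivalent irreducible $\SO(m-1)$-modules whenever $m\geq 4$. As $\sigma(\mcD_k)(e_m)$ commutes with this $\SO(m-1)$-action, Schur's lemma forces it to act as a single scalar $c_{k,j}$ on each $V_j$; ellipticity is then equivalent to $c_{k,j}\neq 0$ for every $0\leq j\leq k$.

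To determine $c_{k,j}$ I would test the symbol on a concrete representative of $V_j$, namely the unique harmonic extension to $\mR^m$ of $u_m^{k-j}h(u')$ with $h\in\mcH_{j}(\mR^{m-1})$ (essentially a Gegenbauer-type polynomial in $u_m$ and $|u'|^2$ multiplying $h$). On such a test vector the three summands of the symbol collapse to the scalar operators $u_m\partial_{u_m}$, $\partial_{u_m}^{2}$, and multiplication by $|u|^{2}$; using the harmonicity $\Delta_u=0$ and the Euler identity $\mE_u h=jh$ reduces the calculation to a short recursion in the depth $k-j$, and an induction produces a closed product of the form
\[
c_{k,j}=\prod_{i=0}^{k-j-1}\frac{2k+m-6-2i}{2k+m-2-2i}.
\]
The only possible zeros occur at $2k+m\in\{6,8,\ldots,2k+4\}$, equivalently $m\leq 4$, so $c_{k,j}\neq 0$ for every admissible $j$ as soon as $m>4$; this proves ellipticity and incidentally shows that the bound on $m$ is sharp.

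I expect the hardest part to be precisely this last computational step: choosing a workable representative of each $V_j$, carrying the harmonic correction through, and keeping track of the combinatorial factors produced by $\langle u,\xi\rangle\langle\partial_u,\xi\rangle$ and $|u|^{2}\langle\partial_u,\xi\rangle^{2}$ before one can read off $c_{k,j}$. Once ellipticity is established, surjectivity onto $\mcC^{\infty}(\mR^m,\mcH_k)$ is a routine consequence of the Malgrange--Ehrenpreis theorem for determined elliptic systems with constant coefficients on the convex domain $\mR^m$: the variable $u$ merely parametrises the finite-dimensional target $\mcH_k$, so with respect to $x$ the operator $\mcD_k$ really is of this type, and the invertible symbol supplies a matrix-valued fundamental solution from which one obtains $\mcD_k f=g$ for any prescribed smooth right-hand side $g$.
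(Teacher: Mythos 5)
Your approach is correct and essentially the same as the paper's: both restrict the $\SO(m)$-invariant symbol to a fixed unit covector, decompose $\mcH_k$ under the stabiliser $\SO(m-1)$ via the branching rule, and verify that the symbol acts by nonzero scalars on each component when $m>4$. The paper organises the eigenvalue computation via the Kelvin inversion in the $u$-variable together with an $\mathfrak{sl}(2)$-commutation identity, whereas you invoke Schur's lemma and induct on explicit Gegenbauer-type representatives; the resulting scalars agree, since your telescoping product simplifies to $(m+2j-4)(m+2j-2)/\big((2k+m-4)(2k+m-2)\big)$, which is exactly the paper's quadratic ratio. One small correction to your aside: the bound $m>4$ is not sharp. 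The scalar $c_{k,0}$ does vanish at $m=4$, but at $m=3$ every $c_{k,j}$ is nonzero, so $m>4$ is a sufficient --- not a necessary --- condition for ellipticity; it is simply the range in which the multiplicity-free branching into $\SO(m-1)$-irreducibles (and hence your Schur argument) is straightforwardly available.
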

\noindent
To conclude this section, we will introduce two other conformally invariant operators which are called (dual) twistor operators. These operators are used in the next section to describe the structure of the space of polynomial null solutions of the higher spin Laplace operator. The twistor operators are defined as follows: 
\begin{definition}
The twistor operator is the unique (up to a multiplicative constant) conformally invariant operator defined as
\begin{equation*}
\pi_k \inner{u,\D_x}:\mcC^{\infty}\brac{\mR^m,\mcH_{k-1}}\longrightarrow \mcC^{\infty}\brac{\mR^m,\mcH_k},
\end{equation*}
where $\pi_k$ is a projection on $\ker\Delta_u$. The dual twistor operator is the unique (up to a multiplicative constant) conformally invariant operator given by
\begin{equation*}
\inner{\D_u,\D_x}:\mcC^{\infty}\brac{\mR^m,\mcH_k}\longrightarrow \mcC^{\infty}\brac{\mR^m,\mcH_{k-1}}.
\end{equation*}
\end{definition}
\noindent
Since $\inner{\D_u,\D_x}f\in \mcC^{\infty}\brac{\mR^m,\mcH_{k-1}} $ for a function $f\in \mcC^{\infty}\brac{\mR^m,\mcH_k}$, the calculation at the beginning of this section shows that the twistor operator is explicitly given by 
\begin{equation*}
\pi_k\inner{u,\D_x}:=\inner{u,\D_x}-\frac{\norm{u}^2}{2k+m-4}\inner{\D_u,\D_x}.
\end{equation*}
This means that the higher spin Laplace operator is in fact a combination of the Laplace operator and the composition of the twistor operator with the dual twistor operator, i.e. 
\begin{equation*}
\mcD_k=\Delta_x-\frac{4}{2k+m-2}\pi_k\inner{u,\D_x}\inner{\D_u,\D_x}.
\end{equation*}

%%%%%%%%%%%%%%%%%%%%%%%%%%%%%%%%%%%%
%%%%%%%%%%%%%%%%%%%%%%%%%%%%%%%%%%%%
%%%%%%%%%%%%%%%%%%%%%%%%%%%%%%%%%%%%
%%%%%%%%%%%%%%%%%%%%%%%%%%%%%%%%%%%%
%%%%%%%%%%%%%%%%%%%%%%%%%%%%%%%%%%%%
%%%%%%%%%%%%%%%%%%%%%%%%%%%%%%%%%%%%

\section{Polynomial null solutions} 
\label{PolSol}   

In this section we study $\ell$-homogeneous polynomial solutions for $\mcD_k$, i.e. polynomials $f(x,u)$ in two vector variables satisfying $\mcD_kf(x,u)=0$ and $\mE_x f(x,u)=\ell f(x,u)$. The vector space of null solutions will from now on be denoted by $\ker_\ell\mcD_k$ and this space is not irreducible as a module for $\so(m)$, in contrast to the kernel of the Laplace operator. Note that $\bigoplus_{\ell}\ker_\ell\mcD_k$ {\em does} define an irreducible module for the real form $\so(1,m+1)$, but this is beyond the scope of the present paper. 

\noindent
First, we use theorem \ref{surjective}, to compute the dimension of $\ker_\ell\mcD_k$ as follows ($\ell \geq k$): 
\begin{equation*}
\mathrm{dim}\brac{\ker_\ell\mcD_k}=\mathrm{dim}\brac{\mcP_{\ell}\brac{\mR^m,\mcH_k}}-\mathrm{dim}\brac{\mcP_{\ell-2}\brac{\mR^m,\mcH_k}}\ .
\end{equation*}
It thus follows for all $\ell \in \mN \backslash \set{0,1}$ that 
\begin{align*}
\mathrm{dim}\brac{\ker_\ell\mcD_k}&=\left(\left(\!
    \begin{array}{c}
      m+\ell-1 \\
      m-1
    \end{array}
  \!\right)-\left(\!
    \begin{array}{c}
      m+\ell-3 \\
      m-1
    \end{array}
  \!\right)\right)\mathrm{dim}\brac{\mcH_k}\ .
\end{align*}
As this number coincides with $\mathrm{dim}\brac{\mcH_\ell\otimes \mcH_k}$, this suggests that the space $\ker_\ell\mcD_k$ can be decomposed as follows for arbitrary $\ell \geq k$ (hereby referring to \cite{Klimyk} for the tensor product decomposition rules): 
\begin{equation}\label{Decomp}
\ker_\ell\mcD_k\cong \bigoplus_{i=0}^{k} \bigoplus_{j=0}^{k-i} \mcH_{\ell-i+j,k-i-j}.
\end{equation}
A straightforward subspace of $\ker_\ell\mcD_k$, is the space of harmonics in $x$ and $u$ intersected with the kernel of the dual twistor operator $\inner{\D_u,\D_x}$. This space is known as the space of {\em Howe harmonics}, see \cite{GW, H}, and it allows for the following decomposition into $\so(m)$-irreducible summands (where we use our explicit model involving simplicial harmonics):
\begin{equation*}
\mcA_{\ell,k}\brac{\mR^{2m},\mC}:=\mcP_{\ell,k}\brac{\mR^{2m},\mC}\cap \ker\brac{\Delta_u,\Delta_x,\inner{\D_u,\D_x}}=\bigoplus_{j=0}^{k}\inner{u,\D_x}^j\mcH_{\ell+j,k-j}\brac{\mR^{2m},\mC}
\end{equation*}
It has the same structure as the space of double monogenics, and can thus be seen as the analogue of the type A solutions for the Rarita-Schwinger operator discussed in \cite{BSSVL1}. As these solutions are also killed by the operator $\inner{\D_u,\D_x}$, which then implies that the operator $\mcD_k$ reduces to the ordinary Laplace operator $\Delta_x$, they can also be described as the solutions for $\mcD_k$ in the Lorentz gauge, in analogy with what is usually done in physics. 

\noindent
We can prove by induction that decomposition (\ref{Decomp}) is in fact the correct one. To do so, we need the following lemma:
\begin{lemma}
The following operator identity holds:
\begin{equation*}
\inner{\D_u,\D_x}\mcD_k=\frac{2k+m-6}{2k+m-2}\mcD_{k-1}\inner{\D_u,\D_x}.
\end{equation*}
In particular, this means that the dual twistor operator $\inner{\D_u,\D_x}$ maps solutions of $\mcD_k$ to solutions of $\mcD_{k-1}$.
\end{lemma}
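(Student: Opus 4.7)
The plan is a direct algebraic calculation. Using the explicit formula in Theorem~\ref{HigerSpinLaplace_def} together with the expansion $\pi_k\inner{u,\D_x} = \inner{u,\D_x} - \tfrac{\norm{u}^2}{2k+m-4}\inner{\D_u,\D_x}$, the higher spin Laplace operator takes the three-term form
\[
\mcD_k = \Delta_x - \frac{4}{2k+m-2}\inner{u,\D_x}\inner{\D_u,\D_x} + \frac{4}{(2k+m-2)(2k+m-4)}\norm{u}^2\inner{\D_u,\D_x}^2,
\]
and $\mcD_{k-1}$ is given by the same expression with $k$ replaced by $k-1$. I would commute $\inner{\D_u,\D_x}$ to the right through $\mcD_k$ term by term and then match coefficients with the right-hand side of the claim.

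The only non-trivial input is the short list of commutators
\[
[\inner{\D_u,\D_x},\Delta_x]=0, \qquad [\inner{\D_u,\D_x},\inner{u,\D_x}]=\Delta_x, \qquad [\inner{\D_u,\D_x},\norm{u}^2]=2\inner{u,\D_x},
\]
all of which follow by the Leibniz rule applied component-wise. Abbreviating $S=\inner{u,\D_x}$ and $T=\inner{\D_u,\D_x}$, one has $TS = ST+\Delta_x$ and $T\norm{u}^2 = \norm{u}^2 T + 2S$, so substituting these into the three-term expansion of $T\mcD_k$ produces an expression supported on the three ``monomials'' $\Delta_x T$, $ST^2$, and $\norm{u}^2 T^3$. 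The proposed right-hand side $\tfrac{2k+m-6}{2k+m-2}\mcD_{k-1}T$ is supported on exactly the same three monomials, so the identity reduces to three rational-coefficient identities in $m$ and $k$.

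The main (and only) obstacle is keeping the coefficients straight. The required simplifications are $1-\tfrac{4}{2k+m-2}=\tfrac{2k+m-6}{2k+m-2}$ for the $\Delta_x T$ coefficient and $-\tfrac{4}{2k+m-2}+\tfrac{8}{(2k+m-2)(2k+m-4)}=-\tfrac{4(2k+m-6)}{(2k+m-2)(2k+m-4)}$ for the $ST^2$ coefficient; the $\norm{u}^2 T^3$ coefficient matches after the factor $2k+m-6$ cancels between the numerator prefactor on the right and the denominator appearing in the formula for $\mcD_{k-1}$. The ``in particular'' clause is then immediate: if $\mcD_k f = 0$ and $2k+m-6\neq 0$, then $\mcD_{k-1}\inner{\D_u,\D_x}f = 0$, and the non-vanishing condition is automatic in the regime $m>4$ underlying the ellipticity result of Theorem~\ref{surjective}.
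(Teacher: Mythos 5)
Your proposal is correct, and it is exactly the ``direct computation'' that the paper's one-line proof alludes to: expand $\mcD_k$ into the three monomials $\Delta_x$, $\inner{u,\D_x}\inner{\D_u,\D_x}$, $\norm{u}^2\inner{\D_u,\D_x}^2$, push $\inner{\D_u,\D_x}$ through using the three commutators you list, and match coefficients with $\tfrac{2k+m-6}{2k+m-2}\mcD_{k-1}\inner{\D_u,\D_x}$. Your bookkeeping of the three resulting coefficient identities, and your remark that the ``in particular'' clause requires $2k+m-6\neq 0$ (guaranteed for $m>4$, $k\geq 1$, and trivial for $k=0$), are both accurate.
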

\begin{proof}
This can be proved by direct computations.
\end{proof}
\noindent
As a consequence of this lemma, the irreducible components of $\ker_{\ell-1}\mcD_{k-1}$ also appear in decomposition (\ref{Decomp}). In particular, we obtain the following result:
\begin{lemma} 
The vector space $\ker_\ell\mcD_k$ has the following decomposition:
\begin{equation*}
\ker_\ell\mcD_k\cong\mcA_{\ell,k}\oplus \ker_{\ell-1}\mcD_{k-1}.
\end{equation*}
\end{lemma}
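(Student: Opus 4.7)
The plan is to exhibit a short exact sequence of $\so(m)$-modules
\begin{equation*}
0 \longrightarrow \mcA_{\ell,k} \longhookrightarrow \ker_\ell\mcD_k \xrightarrow{\inner{\D_u,\D_x}} \ker_{\ell-1}\mcD_{k-1} \longrightarrow 0
\end{equation*}
and then invoke complete reducibility of finite-dimensional $\so(m)$-representations (valid in the regime $m>4$ of interest) to split it into a direct sum.

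The inclusion $\mcA_{\ell,k}\subset \ker_\ell\mcD_k$ follows immediately from the explicit formula $\mcD_k=\Delta_x-\frac{4}{2k+m-2}\pi_k\inner{u,\D_x}\inner{\D_u,\D_x}$: any element of $\mcA_{\ell,k}$ is annihilated by both $\Delta_x$ and $\inner{\D_u,\D_x}$, and therefore by $\mcD_k$ as well. The preceding lemma ensures that $\inner{\D_u,\D_x}$ maps $\ker_\ell\mcD_k$ into $\ker_{\ell-1}\mcD_{k-1}$, so the connecting arrow is well-defined. To identify its kernel, I would suppose $f\in\ker_\ell\mcD_k$ with $\inner{\D_u,\D_x}f=0$; then $\mcD_k f$ collapses to $\Delta_x f$, forcing $\Delta_x f=0$. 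Combined with $\Delta_u f=0$ (which holds because $f$ is $\mcH_k$-valued) this places $f$ in $\mcA_{\ell,k}$, giving the reverse inclusion.

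The key step, and the expected obstacle, is surjectivity of the connecting map, which I would establish by induction on $k$. The base case $k=0$ is trivial: $\mcD_0=\Delta_x$, $\mcA_{\ell,0}=\mcH_\ell=\ker_\ell\Delta_x$ and $\ker_{\ell-1}\mcD_{-1}$ is vacuous. For the inductive step, theorem \ref{surjective} together with the dimension computation preceding (\ref{Decomp}) yields $\dim\ker_\ell\mcD_k=\dim(\mcH_\ell\otimes\mcH_k)$ for every $\ell\geq k$. Using the tensor product branching rules of \cite{Klimyk} and the explicit decomposition of $\mcA_{\ell,k}$ in terms of simplicial harmonics $\mcH_{\ell+j,k-j}$ displayed above, a direct reindexing of summands produces the combinatorial identity
\begin{equation*}
\dim(\mcH_\ell\otimes\mcH_k)=\dim\mcA_{\ell,k}+\dim(\mcH_{\ell-1}\otimes\mcH_{k-1}),
\end{equation*}
and the inductive hypothesis identifies the second summand with $\dim\ker_{\ell-1}\mcD_{k-1}$. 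This forces the image of $\inner{\D_u,\D_x}$ on $\ker_\ell\mcD_k$ to exhaust the target space, establishing surjectivity. Finally, complete reducibility of the semisimple Lie algebra $\so(m)$ splits the short exact sequence, producing the desired isomorphism $\ker_\ell\mcD_k\cong\mcA_{\ell,k}\oplus\ker_{\ell-1}\mcD_{k-1}$.
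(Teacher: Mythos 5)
Your proof is correct and follows essentially the same route as the paper's: you identify $\mcA_{\ell,k}$ as the kernel of the dual twistor map $\inner{\D_u,\D_x}$ restricted to $\ker_\ell\mcD_k$, and then establish surjectivity via rank-nullity and the dimension identity $\dim\ker_\ell\mcD_k=\dim(\mcH_\ell\otimes\mcH_k)$, which is exactly the ``basic linear algebra'' alluded to (rather tersely) in the paper. Two minor remarks: the induction on $k$ is not strictly needed, since the dimension formula from theorem \ref{surjective} already holds for all $k$ and $\ell\geq k$ without appealing to the inductive hypothesis; and complete reducibility is only required if you want the splitting as $\so(m)$-modules rather than merely as vector spaces, though it certainly does no harm to invoke it.
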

\begin{proof}
Since $\mcA_{\ell,k}\subset \ker\inner{\D_u,\D_x}$, we have that $\mcA_{\ell,k}\cap \ker_{\ell-1}\mcD_{k-1}=\set{0}$. The result then follows from the previous lemma and some basic linear algebra.
\end{proof}
\noindent
The irreducible components of the subspace $\mcA_{\ell-1,k-1}\brac{\mR^{2m},\mC}\subset \ker_{\ell-1}\mcD_{k-1}$ of type A solutions of $\mcD_{k-1}$, i.e.
\begin{equation*}
\mcA_{\ell-1,k-1}\brac{\mR^{2m},\mC}\cong\bigoplus_{j=0}^{k-1}\mcH_{\ell-1+j,k-1-j}\brac{\mR^{2m},\mC},
\end{equation*} 
must also appear in decomposition (\ref{Decomp}). The components of the subspace $\mcA_{\ell-2,k-2}$ of $\ker_{\ell-2}\mcD_{k-2}$ must also appear in decomposition of $\ker_{\ell-1}\mcD_{k-1}$ into irreducible components and therefore, they must also occur in the decomposition of $\ker_{\ell}\mcD_{k}$. This reasoning can be repeated until we find a single component: $\mcH_{\ell-k}=\ker_{\ell-k}\mcD_0$. Putting al the irreducible summands together, we obtain decomposition (\ref{Decomp}). 
\\ \\ \noindent 
Now that we found the decomposition of $\ker_{\ell}\mcD_k$ into irreducible representations, we still need to find the operators that embed each of these irreducible summands into $\ker_{\ell}\mcD_k$. As the Laplace operator $\Delta_x$ commutes with $\mcD_k$, the application 
of $\Delta_x$ preserves solutions of $\mcD_k$ and lowers the degree of homogeneity in $x$ by two. Since the higher spin Laplace operator is conformally 
invariant, the inversion $\mcJ_R$ preserves solutions and so does $\mcJ_R\Delta_x\mcJ_R$, raising the degree of homogeneity in $x$ by two. This operator is 
explicitly given by:
\begin{align*}
\mcJ_R\Delta_x\mcJ_R=&\norm{x}^4\Delta_x+4(2k+m-4)\pi_k\inner{u,x}\inner{x,\D_u} \\
&+4\norm{x}^2\brac{\pi_k\inner{u,x}\inner{\D_x,\D_u}-\pi_k\inner{u,\D_x}\inner{x,\D_u}},
\end{align*}
where $\pi_k\inner{u,x}$ is the (principle) symbol of the twistor operator $\pi_k\inner{u,\D_x}$. Using this operator and the twistor operator $\pi_k\inner{u,\D_x}$, which acts as $\inner{u,\D_x}$ on simplicial harmonics, we have the following conclusion (recall that it only holds for $m>4$ as we heavily relied on theorem \ref{surjective} to obtain this result):
\begin{theorem}\label{ker_decomp}
The vector space $\ker_\ell\mcD_k$ has the following decomposition into irreducible modules (for the action of the orthogonal group):
\begin{equation*}
\ker_\ell\mcD_k\ =\ \bigoplus_{i=0}^{k} \bigoplus_{j=0}^{k-i} (\mcJ_R\Delta_x\mcJ_R)^i\inner{u,\D_x}^{i+j}\mcH_{\ell-i+j,k-i-j}.
\end{equation*}
Also note that $\ker_0 \mcD_k = \mcH_k(\mR^m,\mC)$.
\end{theorem}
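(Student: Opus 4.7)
The plan is to proceed by induction on $k$. For the base case $k=0$, the higher spin Laplace operator $\mcD_0$ collapses to $\Delta_x$ and $\ker_\ell\mcD_0=\mcH_\ell(\mR^m,\mC)$, which matches the claimed formula (only the term $i=j=0$ survives). For the inductive step I would invoke the preceding lemma $\ker_\ell\mcD_k\cong\mcA_{\ell,k}\oplus\ker_{\ell-1}\mcD_{k-1}$: the Howe-harmonics piece $\mcA_{\ell,k}=\bigoplus_{j=0}^{k}\inner{u,\D_x}^j\mcH_{\ell+j,k-j}$ already gives exactly the row $i=0$ of the claimed decomposition, and applying the inductive hypothesis to $\ker_{\ell-1}\mcD_{k-1}$, after the reindexing $(i,j)\mapsto(i-1,j)$, should account for the rows with $i\geq 1$.

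Next I would check that every claimed summand really sits inside $\ker_\ell\mcD_k$ with the correct bidegree. Since $\mcH_{\ell-i+j,k-i-j}\subset\ker(\Delta_x,\Delta_u,\inner{\D_u,\D_x},\inner{x,\D_u})$ by definition, the elementary commutator identities $[\Delta_x,\inner{u,\D_x}]=0$, $[\inner{\D_u,\D_x},\inner{u,\D_x}]=\Delta_x$ and $[\Delta_u,\inner{u,\D_x}]=2\inner{\D_u,\D_x}$ show inductively that $\inner{u,\D_x}^{i+j}$ preserves the joint kernel of $\Delta_x$ and $\inner{\D_u,\D_x}$; consequently $\inner{u,\D_x}^{i+j}\mcH_{\ell-i+j,k-i-j}\subset\mcA_{\ell-2i,k}\subset\ker\mcD_k$. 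By (\ref{inversion_crucial}) and $[\Delta_x,\mcD_k]=0$, the operator $\mcJ_R\Delta_x\mcJ_R$ preserves $\ker\mcD_k$, so $(\mcJ_R\Delta_x\mcJ_R)^i$ maps the intermediate space into $\ker\mcD_k$, and a routine count of homogeneities produces the required bidegree $(\ell,k)$.

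All operators in the construction are $\so(m)$-equivariant, so by Schur's lemma each nonzero summand is an irreducible $\so(m)$-module of type $\mcH_{\ell-i+j,k-i-j}$. The index map $(i,j)\mapsto(\ell-i+j,k-i-j)$ is injective on the range $0\leq i\leq k$, $0\leq j\leq k-i$ (adding and subtracting its two components recovers $i$ and $j$), so distinct summands carry distinct $\so(m)$-highest weights and the sum is direct by Schur again. Combining this with the dimension computation carried out immediately before the theorem, which identifies the total dimension with $\dim(\mcH_\ell\otimes\mcH_k)=\dim\ker_\ell\mcD_k$, forces equality with the full kernel.

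The main obstacle will be ruling out that the composition $(\mcJ_R\Delta_x\mcJ_R)^i\inner{u,\D_x}^{i+j}$ vanishes identically on $\mcH_{\ell-i+j,k-i-j}$, since a zero summand would collapse the Schur argument. I would handle this by evaluating the composition on the explicit simplicial highest weight vector $w_{\ell-i+j,k-i-j}$ and tracking the leading terms, using the Howe-dual $\mathfrak{sl}(2)$-triple $(\inner{u,\D_x},\inner{x,\D_u},\mE_x-\mE_u)$ on $\mcP(\mR^{2m},\mC)$ in combination with the simplification $(\mcJ_R\Delta_x\mcJ_R)^i=\mcJ_R\Delta_x^i\mcJ_R$ coming from $\mcJ_R^2=\mathds{1}$.
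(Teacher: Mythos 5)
Your argument follows the same strategy as the paper: establish the abstract decomposition via the lemma $\ker_\ell\mcD_k\cong\mcA_{\ell,k}\oplus\ker_{\ell-1}\mcD_{k-1}$ together with the structure of the Howe harmonics, then realize the summands explicitly through $\inner{u,\D_x}$ and $\mcJ_R\Delta_x\mcJ_R$ and conclude by Schur's lemma and the dimension count furnished by ellipticity. You correctly single out the non-vanishing of the composed embedding maps as the point the paper leaves implicit, and your plan to verify this on the simplicial highest-weight vectors using the Howe-dual $\mathfrak{sl}(2)$-triple is a sound way to close that gap.
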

\noindent
This result can also be visualised as the triangle in Figure \ref{firstkernel}. 
\begin{figure}[H]
\centering
\begin{tikzpicture}[>=angle 90]
\coordinate[label=above:$\phantom{a}$] (D) at (0,-1.5);
\coordinate[label=below left:$\phantom{a}$] (E) at (3.5,1.2);
\coordinate[label=below right:$\phantom{a}$] (F) at (-3.5,1.2);
\coordinate[label=below left:$\phantom{a}$] (G) at (-2.5,1.2);
\coordinate[label=below right:$\phantom{a}$] (H) at (.5,-1.1);
\coordinate[label=below left:$\phantom{a}$] (A) at (-1.5,0);
\coordinate[label=below left:$\mcA_{\ell,k}{:=}\mcP_{\ell,k}\cap \ker\begin{pmatrix} \Delta_u \\ \Delta_x \\ \inner{\D_u,\D_x}  \end{pmatrix}$] (B) at (-2,-0.25);
\coordinate[label=below right: $\mcH_{\ell,k}$] (C) at (0.8,-1.5);
\coordinate[label=above: $\ker_{\ell-1}\mcD_{k-1}$] (Z) at (0.3,0);
\fill[black] (0,-1.1) circle (2pt);
\draw (D) -- (E) -- (F)-- cycle;
\draw (G) -- (H);
\path[>=stealth,->] (A) edge [bend right=20] (B);
\path[>=stealth,->] (0,-1.1) edge [bend left=20] (C);
\end{tikzpicture}
\caption{The decomposition of $\ker_\ell\mcD_k$ under $\so(m)$}
\label{firstkernel}
\end{figure}
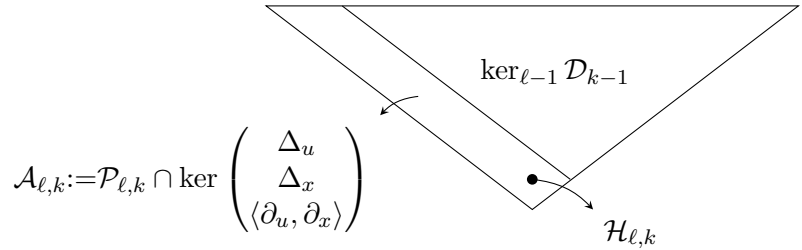
\noindent
\noindent
\begin{remark}
The case where $\ell < k$ is the degenerate case. This means that some of the summands, which do not satisfy the dominant weight condition, will be missing.
\end{remark}
\noindent
As we will see in the following theorem, each null solution of the higher spin Laplace operator is a solution of $\Delta_x^{k+1}$. As a matter of fact, the higher spin Laplace operator factorises this specific power of the Laplace operator. 
\begin{theorem}
There exist a differential operator $A_{2k}$ of order $2k$ with $k > 0$ acting between spaces of $\mcH_k$-valued functions such that:
\begin{equation*}
A_{2k}\mcD_k=\Delta_x^{k+1}=\mcD_k A_{2k}.
\end{equation*}
\end{theorem}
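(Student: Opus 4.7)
The plan is to exhibit $A_{2k}$ explicitly as a polynomial in the commuting pair of operators $\Delta_x$ and $T := \pi_k\langle u,\D_x\rangle\langle\D_u,\D_x\rangle$. Commutativity $[\Delta_x, T]=0$ is immediate: $\Delta_x$ has constant coefficients and every constituent of $T$ (namely $u$, $\D_u$, $|u|^2$, $\Delta_u$, $\D_{x_j}$) commutes with each $\D_{x_l}$. With $\alpha = 4/(2k+m-2)$, so that $\mcD_k = \Delta_x - \alpha T$, I would make the ansatz
\[
A_{2k} = \sum_{j=0}^{k} c_j\,\Delta_x^{k-j}T^j,\qquad c_0 = 1,
\]
which is of order $2k$. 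Because $A_{2k}$ is polynomial in the commuting pair $(\Delta_x,T)$, it automatically commutes with $\mcD_k$, so verifying $A_{2k}\mcD_k = \Delta_x^{k+1}$ yields $\mcD_k A_{2k}=\Delta_x^{k+1}$ for free. Expanding gives
\[
A_{2k}\mcD_k = \Delta_x^{k+1} + \sum_{j=1}^{k}(c_j-\alpha c_{j-1})\Delta_x^{k+1-j}T^j - \alpha c_k T^{k+1},
\]
reducing the theorem to eliminating the $T$-powers on the right.

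The heart of the argument is the operator identity
\[
T^{k+1} = \sum_{j=1}^{k} b_j\,\Delta_x^j T^{k+1-j}
\]
on $\mcC^\infty(\mR^m,\mcH_k)$, for constants $b_j = b_j(k,m)$; note the crucial absence of a pure $\Delta_x^{k+1}$-term. I would establish this via a principal-symbol argument. The principal symbol $\sigma(T)(\xi) = \pi_k\langle u,\xi\rangle\langle\D_u,\xi\rangle$, regarded as an endomorphism of the finite-dimensional space $\mcH_k$, is rotationally covariant, so it suffices to analyze $\sigma(T)(e_1) = \pi_k u_1\D_{u_1}$. Decomposing $\mcH_k$ by $u_1$-degree $j\in\{0,\ldots,k\}$ produces $k+1$ mutually orthogonal $\sigma(T)$-stable blocks on which $\pi_k u_1\D_{u_1}$ acts as a scalar $\lambda_j$ (with $\lambda_0=0$ from harmonics independent of $u_1$, and the remaining $\lambda_j$ computable as rational functions of $m$). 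Hence $\sigma(T)(\xi)$ satisfies the minimal polynomial $Y\prod_{j=1}^{k}(Y - \lambda_j|\xi|^2) = 0$, and since $\Delta_x$ and $T$ commute at the operator level, this symbol identity lifts to an operator identity in $\mC[\Delta_x,T]$, yielding the claim.

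Once the key identity is available, substituting it into $A_{2k}\mcD_k - \Delta_x^{k+1}$ yields the triangular system $c_j - \alpha c_{j-1} - \alpha c_k\,b_{k+1-j} = 0$ for $j = 1,\ldots,k$, which is uniquely solved by backward substitution from $c_k$ (and $c_0=1$ is automatic because the vanishing of $b_{k+1}$ makes the $j=0$ row trivial). This completes the construction of $A_{2k}$. The main obstacle is the key identity, specifically the absence of the $\Delta_x^{k+1}$-term; this hinges on $\sigma(T)(\xi)$ having $0$ as an eigenvalue, which is evident from the existence of $\xi$-independent directions in $\mcH_k$. An alternative purely computational route to the identity uses iterated commutator bookkeeping via $[\langle\D_u,\D_x\rangle,\langle u,\D_x\rangle]=\Delta_x$ and $[\langle\D_u,\D_x\rangle,|u|^2]=2\langle u,\D_x\rangle$, pushing every $\langle\D_u,\D_x\rangle$ in the expansion of $T^{k+1}$ rightward until enough of them accumulate to annihilate the $u$-degree-$k$ target values by degree exhaustion; this avoids symbol arguments at the price of significant algebra.
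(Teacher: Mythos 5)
Your approach is genuinely different from the paper's. The paper proves the theorem by induction on $k$: it writes $\Delta_x^{k+1}=\Delta_x^k(\mcD_k+\frac{4}{2k+m-2}\pi_k\langle u,\D_x\rangle\langle\D_u,\D_x\rangle)$, uses the inductive hypothesis $\Delta_x^k=A_{2k-2}\mcD_{k-1}$ to replace $\Delta_x^k$ sandwiched between the twistor and dual twistor, and finally invokes the earlier lemma $\langle\D_u,\D_x\rangle\mcD_k=\frac{2k+m-6}{2k+m-2}\mcD_{k-1}\langle\D_u,\D_x\rangle$ to move the $\mcD_{k-1}$ past $\langle\D_u,\D_x\rangle$ and extract a right factor $\mcD_k$. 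That avoids any linear algebra and produces $A_{2k}$ explicitly as $\Delta_x^k+\frac{4}{2k+m-6}\pi_k\langle u,\D_x\rangle A_{2k-2}\langle\D_u,\D_x\rangle$. You instead work in the commutative subalgebra generated by $\Delta_x$ and $T=\pi_k\langle u,\D_x\rangle\langle\D_u,\D_x\rangle$, derive an annihilating polynomial for $T$ from the minimal polynomial of its $\End(\mcH_k)$-valued symbol, and solve a linear system for the coefficients. The ingredients you use are sound: $[\Delta_x,T]=0$ is correct because $T$ has no $x$-dependence in its coefficients; $T$ equals its symbol evaluated at $\D_x$, so a polynomial identity $p(\sigma(T)(\xi),|\xi|^2)=0$ in $\End(\mcH_k)[\xi]$ does lift to $p(T,\Delta_x)=0$; and the key spectral facts about $\sigma(T)(\xi)$ — that it is diagonalizable on the $\so(m-1)$-branching decomposition of $\mcH_k$ with eigenvalues $0,\lambda_1|\xi|^2,\dots,\lambda_k|\xi|^2$ (in fact $\lambda_j=\frac{j(2k+m-j-3)}{2k+m-4}$) — are exactly what the paper establishes in its ellipticity proof (Theorem \ref{Elliptic} in the appendix). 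Your route makes the link between this theorem and ellipticity visible, which the paper's induction hides.

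There is, however, a real gap at the end. The system
\begin{equation*}
c_j-\alpha c_{j-1}-\alpha c_k\,b_{k+1-j}=0,\qquad j=1,\dots,k,\quad c_0=1,
\end{equation*}
is \emph{not} triangular: the unknown $c_k$ appears in every equation, so ``backward substitution from $c_k$'' does not resolve it. Solving forward from $c_0=1$ expresses each $c_j$ as an affine function of $c_k$, and the $j=k$ equation then collapses to
\begin{equation*}
c_k\Bigl(1-\sum_{i=1}^k\alpha^i b_i\Bigr)=\alpha^k,
\end{equation*}
which is solvable if and only if $1-\sum_{i}\alpha^i b_i\neq 0$. Writing $b_i=(-1)^{i+1}e_i(\lambda_1,\dots,\lambda_k)$ one finds $1-\sum_i\alpha^ib_i=\prod_{j=1}^k(1-\alpha\lambda_j)$, i.e.\ precisely the product of the nonzero eigenvalues of $\sigma(\mcD_k)$. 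So your argument needs the hypothesis $m>4$ (ellipticity) at exactly this point, and you should either verify $\prod_j(1-\alpha\lambda_j)\neq0$ from the formula for $\lambda_j$, or restructure to avoid the issue — for instance, apply the annihilating polynomial directly to $\sigma(\mcD_k)=|\xi|^2(1-\alpha\sigma(T)/|\xi|^2)$, whose eigenvalues $\mu_j=(1-\alpha\lambda_j)|\xi|^2$ are all nonzero; then $\prod_{j=0}^k(\sigma(\mcD_k)-\mu_j)=0$ immediately yields $\sigma(\mcD_k)\cdot q(\sigma(\mcD_k),|\xi|^2)=\bigl(\prod_j\mu_j\bigr)|\xi|^{2(k+1)}$ for an explicit degree-$k$ polynomial $q$, and $A_{2k}=q(\mcD_k,\Delta_x)/\prod_j\mu_j$ works without solving any system. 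A smaller cosmetic imprecision: the eigenspace decomposition is the $\so(m-1)$-branching decomposition realised via iterated application of $\mcJ\langle\omega,\D_u\rangle\mcJ$, not literally a grading by $u_1$-degree; the blocks do not consist of polynomials homogeneous in $u_1$.
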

\begin{proof}
We will prove this inductively. For $k=1$, we can write 
\begin{equation*}
\Delta_x^2=\brac{\Delta_x+\frac{4}{m-4}\inner{u,\D_x}\inner{\D_u,\D_x}}\mcD_1
\end{equation*}
Suppose we have constructed an operator $A_{2k-2}$ such that $A_{2k-2}\mcD_{k-1}=\Delta_x^{k}=\mcD_{k-1} A_{2k-2}$.
We then have:
\begin{align*}
\Delta_x^{k+1}&=\Delta_x^k\brac{\mcD_k+\frac{4}{2k+m-2}\pi_k\inner{u,\D_x} \inner{\D_u,\D_x}} \\
&=\Delta_x^k \mcD_k +\frac{4}{2k+m-2}\pi_k\inner{u,\D_x}\Delta_x^k\inner{\D_u,\D_x} \\
&=\Delta_x^k \mcD_k +\frac{4}{2k+m-2}\pi_k\inner{u,\D_x}A_{2k-2}\mcD_{k-1}\inner{\D_u,\D_x}\\
&=\brac{\Delta_x^k+\frac{4}{2k+m-6}\pi_k\inner{u,\D_x}A_{2k-2}\inner{\D_u,\D_x}}\mcD_k,
\end{align*}
from which the operator $A_{2k}$ follows by induction. 
\end{proof}

%%%%%%%%%%%%%%%%%%%%%%%%%%%%%%%%%%%%
%%%%%%%%%%%%%%%%%%%%%%%%%%%%%%%%%%%%
%%%%%%%%%%%%%%%%%%%%%%%%%%%%%%%%%%%%
%%%%%%%%%%%%%%%%%%%%%%%%%%%%%%%%%%%%
%%%%%%%%%%%%%%%%%%%%%%%%%%%%%%%%%%%%
%%%%%%%%%%%%%%%%%%%%%%%%%%%%%%%%%%%%

\section{Fundamental solution} 
\label{FunSol}
Before turning to the fundamental solution of $\mcD_k$, we will first consider the fundamental solution of the Laplace operator. It is given by
\begin{equation*}
N(x) = 
  \begin{dcases*}
    \frac{1}{(2-m)A_m}\norm{x}^{2-m} & \text{if $m>2$}\\
    \frac{1}{2\pi}\log{\norm{x}} &  \text{if $m=2$,}
  \end{dcases*}
\end{equation*}
where $A_m$ is the surface area of the unit sphere $S^{m-1}$. The fundamental solution of the Laplace operator is a harmonic function in $\mcC^{\infty}\brac{\mR_0^m,\mC}$, i.e. $\Delta_x N(x)=0$ for all $x\neq 0$. 
As $\Delta_x=-\D_x^2$, the fundamental solution $E(x)$ for the Dirac operator is easily obtained through the following \cite{BDBDS, BDS}: 
\begin{equation*}
E(x)=-\D_xN(x)=- \frac{1}{A_m}\frac{x}{\norm{x}^{m}}.
\end{equation*}
This expression is the so-called Cauchy kernel and, as fundamental solution for the Dirac operator, it satisfies the relation $\D_xE(x)=\delta(x)$. Note that $E(x)\in \mcC^{\infty}\brac{\mR^m_0,\mC_m}$ and because the Clifford algebra $\mC_m$ can be seen as the space of endomorphisms of the spinor space $\mS$, we thus have that $E(x)\in \mcC^{\infty}\brac{\mR^m_0,\End(\mS)}$. 
In case of the higher spin Laplace operator, which acts on functions taking values in $\mathcal{H}_k$, the fundamental solution is then expected to belong to the function space $\mathcal{C}^{\infty}\brac{\mR_0^m,\End\brac{\mcH_k}}$. We also refer to \cite{Eelbode}, where this was done for the case of invariant operators acting on half-integer higher spin fields. We start our search for the fundamental solution of $\mcD_k$ with the following observation:
\begin{proposition}\label{Prop4}
For every $H_k(u) \in \mcH_k\brac{\mR^m,\mC}$, the function 
\begin{equation}\label{label1}
E_k(x;u):= \norm{x}^{2-m}H_k(\omega u \omega) = \norm{x}^{2-m-2k}H_k(xux),
\end{equation}
where $\omega=\norm{x}$, belongs to $\mcC^{\infty}\brac{\mR_0^m,\mathcal{H}_k}$ and has a singularity of degree $(2-m)$ in the origin $x=0$ in $\mR^m$. Furthermore, $E_k(x;u)$ belongs to the kernel of the operator $\mathcal{D}_k$ (for $x \neq 0$).
\end{proposition}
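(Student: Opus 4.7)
My plan is to recognise $E_k(x;u)$ as the harmonic inversion of the trivially $\mcD_k$-annihilated, $x$-independent function $H_k(u)$, and then to leverage the conformal intertwining relation $\mcJ_R \mcD_k \mcJ_R = \norm{x}^4 \mcD_k$ from equation (\ref{inversion_crucial}) to transfer that trivial null-solution into a non-trivial one with the desired singular behaviour at the origin. The key observation is that $H_k(u)$, regarded as an element of $\mcC^\infty\brac{\mR^m,\mcH_k}$ that does not depend on $x$, satisfies $\mcD_k H_k = 0$ for the banal reason that both $\Delta_x$ and $\inner{\D_u,\D_x}$ annihilate any $x$-independent function.

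Before applying the intertwiner I would first verify the identification $E_k = \mcJ_R[H_k]$ and the claimed value space. Writing $x = \norm{x}\omega$ with $\omega \in S^{m-1}$, the Clifford relations give $xux = \norm{x}^2\, \omega u \omega$, so the $k$-homogeneity of $H_k$ yields both $H_k\brac{xux/\norm{x}^2} = H_k(\omega u \omega)$ and $H_k(xux) = \norm{x}^{2k} H_k(\omega u \omega)$, reconciling the two displayed expressions for $E_k$ and making the singularity of order $2-m$ manifest. Comparing with the definition of $\mcJ_R$,
\begin{equation*}
\mcJ_R[H_k](x,u) \;=\; \norm{x}^{2-m}\, H_k\!\brac{\frac{xux}{\norm{x}^2}} \;=\; \norm{x}^{2-m}\, H_k(\omega u \omega) \;=\; E_k(x;u).
\end{equation*}
Since $\omega u \omega = u - 2\inner{u,\omega}\omega$ is the orthogonal reflection of $u$ across the hyperplane perpendicular to $\omega$, the space $\mcH_k$ is preserved fibrewise, so $E_k(x;u) \in \mcH_k$ for each fixed $x \neq 0$. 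Smoothness on $\mR^m_0$ is then clear since $H_k(xux)$ is polynomial in $x$ and $\norm{x}^{2-m-2k}$ is smooth away from the origin.

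The kernel assertion now follows in one stroke. Using the involutive property $\mcJ_R^2 = \mathds{1}$ to rewrite the intertwiner as $\mcD_k \mcJ_R = \mcJ_R\brac{\norm{x}^4 \mcD_k}$ and applying it to $H_k$ gives
\begin{equation*}
\mcD_k E_k(x;u) \;=\; \mcD_k \mcJ_R[H_k](x,u) \;=\; \mcJ_R\!\brac{\norm{x}^4\, \mcD_k H_k}(x,u) \;=\; 0 \qquad (x \neq 0),
\end{equation*}
which is the result. The only real subtlety I foresee is the Clifford-algebraic bookkeeping behind $xux = \norm{x}^2 \omega u \omega$ together with the compensating homogeneity factor $\norm{x}^{2k}$; apart from that, all the hard work has already been done in proving equation (\ref{inversion_crucial}) in Section \ref{Construction}, so the proof amounts to noting that the present construction is the conformal-inversion image of a manifest $\mcD_k$-null-solution.
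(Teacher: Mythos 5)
Your proof is correct, and it takes a genuinely different route from the paper's. You recognise $E_k = \mcJ_R[H_k]$ (where $H_k$, viewed as an $x$-independent element of $\mcC^\infty(\mR^m,\mcH_k)$, is trivially annihilated by $\mcD_k$ since every term of the operator carries at least one $x$-derivative) and then transport this through the intertwining relation $\mcJ_R\mcD_k\mcJ_R = \norm{x}^4\mcD_k$ of equation (\ref{inversion_crucial}). Your Clifford-algebraic bookkeeping is sound: $\omega u\omega = u - 2\inner{u,\omega}\omega$ is an orthogonal reflection, so $\mcH_k$ is preserved fibrewise, and the homogeneity identity $H_k(xux) = \norm{x}^{2k}H_k(\omega u\omega)$ reconciles the two displayed forms of $E_k$. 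The paper, by contrast, does not invoke the inversion at this point at all: it reduces to the highest weight vector $\Phi(x;u) = \norm{x}^{2-m-2k}\inner{xux,2\f_1}^k$ via $\Spin(m)$-invariance and irreducibility of $\mcH_k$, and then verifies $\mcD_k\Phi = 0$ for $x\neq 0$ by a direct term-by-term calculation. What your approach buys is brevity and conceptual clarity, reducing the proposition to the already-established conformal invariance. What the paper's brute-force computation buys is the explicit identities for $\Delta_x$, $\inner{u,\D_x}\inner{\D_u,\D_x}$ and $\norm{u}^2\inner{\D_u,\D_x}^2$ acting on $\Phi$, which are then reused verbatim (with the exponent $2-m$ deformed to a general $\alpha$) to derive equation (\ref{label2}) and the distributional identity underlying the fundamental solution. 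Your argument would not by itself produce that deformed computation, so if you were writing the whole section you would eventually need to carry it out anyway; but as a proof of the proposition as stated, your route is valid and more elegant.
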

\begin{proof}
It is clear that $E_k(x;u)\in\mcC^{\infty}\brac{\mR_0^m,\mathcal{H}_k}$, as it is homogeneous of degree $k$ in the dummy variable $u \in \mR^m$. In order to prove that (\ref{label1}) belongs to the kernel of the higher spin Laplace operator for $x\neq 0$ we will rely on the fact that $\mathcal{H}_k$ is an irreducible $\Spin(m)$-representation generated by the highest weight vector $\inner{u,2\f_1}^k$. As $\mcD_k$ is a $\Spin(m)$-invariant operator, it will be sufficient to prove the statement for 
\begin{equation*}
\Phi(x;u):= \norm{x}^{2-m-2k}\inner{xux,2\f_1}^k = \norm{x}^{2-m-2k}\brac{\norm{x}^2\inner{u,2\f_1}-2\inner{u,x}\inner{x,2\f_1}}^k,
\end{equation*}
 where we used the fact that $xux=u\norm{x}^2-2\inner{u,x}x$. To calculate the action of $\mcD_k$ on $\Phi(x;u)$, we use the following relations, which can be verified by direct calculations:
 \begin{align*}
 \Delta_x\norm{x}^{\alpha}=&\: \alpha\brac{m+\alpha-2}\norm{x}^{\alpha-2} \\
 \Delta_x\inner{xux,2\f_1}^k=&\: 4k(k-1)\norm{x}^{2-m-2k}\norm{u}^2\inner{x,2\f_1}^2\inner{xux,2\f_1}^{k-2} \\
 &+2k(2k+m-4)\norm{x}^{2-m-2k}\inner{xux,2\f_1}^{k-1}\inner{u,2\f_1} \\
 \inner{\D_u,\D_x}\inner{xux,2\f_1}^k=&-2k(2k+m-2)\inner{x,2\f_1}\inner{xux,2\f_1}^{k-1}.
 \end{align*}
The action of the Laplace operator on $\Phi(x,u)$ for $x\neq 0$ then gives:
\begin{align*}
\Delta_x\Phi(x;u)=&-2k(2-m-2k)\norm{x}^{-m-2k}\inner{xux,2\f_1}^k \\
&+2k(2k+m-4)\norm{x}^{2-m-2k}\inner{xux,2\f_1}^{k-1}\inner{u,2\f_1} \\
&+4k(k-1)\norm{x}^{2-m-2k}\norm{u}^2\inner{x,2\f_1}^2\inner{xux,2\f_1}^{k-2} \\
&+2\sum_j\brac{\D_{x_j} \norm{x}^{2-m-2k}}\brac{\D_{x_j}\inner{xux,2\f_1}^k} \\
=&\: 2k(2-m-2k)\norm{x}^{-m-2k}\inner{xux,2\f_1}^k \\
&+2k(2k+m-4)\norm{x}^{2-m-2k}\inner{xux,2\f_1}^{k-1}\inner{u,2\f_1} \\
&+4k(k-1)\norm{x}^{2-m-2k}\norm{u}^2\inner{x,2\f_1}^2\inner{xux,2\f_1}^{k-2}.
\end{align*}
The action of $\inner{u,\D_x}\inner{\D_u,\D_x}$ is given by:
\begin{align*}
\inner{u,\D_x}\inner{\D_u,\D_x}\Phi(x;u)=&-k(2k+m-2)\inner{u,\D_x}\inner{x,2\f_1}\norm{x}^{2-m-2k}\inner{xux,2\f_1}^{k-1} \\
=&\: k(2k+m-2)^2\inner{x,2\f_1}\inner{u,x}\norm{x}^{-m-2k}\inner{xux,2\f_1}^{k-1} \\
&-k(2k+m-2)\inner{u,2\f_1}\norm{x}^{2-m-2k}\inner{xux,2\f_1}^{k-1} \\
&+2k(k-1)(2k+m-2)\norm{u}^2\inner{x,2\f_1}^2\norm{x}^{2-m-2k}\inner{xux,2\f_1}^{k-2},
\end{align*}
whereas the action of the last term from $\mcD_k$ leads to
\begin{equation*}
\norm{u}^2\inner{\D_u,\D_x}\Phi(x,u)=k(k-1)(2k+m-2)(2k+m-4)\norm{u}^2\inner{x,2\f_1}^2\norm{x}^{2-m-2k}\inner{xux,2\f_1}^{k-2}.
\end{equation*}
Putting everything together with the proper constants gives zero, which proves the statement. 
\end{proof}
\noindent
Note that up until now, we have excluded the point-wise singularity of $E_k(x;u)$ at $x=0$. In order to investigate this singularity, we will use results from distribution theory (more precisely, we will use the Riesz potentials on $\mR^m$). Take arbitrary $\alpha \in \mC$ fixed and consider the function $E_k^{\alpha}(x;u):=\norm{x}^{\alpha-2k}\inner{xux,2\f_1}^k$. This is then again a function in $\mcC^{\infty}\brac{\mR_0^m,\mathcal{H}_k}$. Under the action of the higher spin Laplace operator, tedious but similar calculations as the ones in the proof of proposition \ref{Prop4} give: 
\begin{equation}\label{label2}
\begin{split}
\mcD_kE_k^{\alpha}(x;u)=&(\alpha+m-2)\brac{\alpha+\frac{4k}{2k+m-2}}\norm{x}^{\alpha-2k-2}\inner{xux,2\f_1}^k \\
+&(\alpha+m-2)(\alpha+m)\frac{4k}{2k+m-2}\inner{u,x}\inner{x,2\f_1}\norm{x}^{\alpha-2k-2}\inner{xux,2\f_1}^{k-1} \\
+&\frac{4k(k-1)(\alpha+m)(\alpha+m-2)}{(2k+m-2)(2k+m-4)} \norm{u}^2\inner{x,2\f_1}^2\norm{x}^{\alpha-2k}\inner{xux,2\f_1}^{k-2}
\end{split}
\end{equation}
For $\alpha=2-m$, we again observe that $E_k^{\alpha}(x;u)$ is in the kernel of $\mathcal{D}_k$ and has a pointwise singularity of degree $(2-m)$ at the origin. As $E_k^{\alpha}(x;u)$ belongs to the space $L_1^{\mathrm{loc}}\brac{\mR^m,\mathcal{H}_k}$ of $\mcH_k$-valued locally integrable functions on $\mR^m$ for $\alpha \in \mC$ with $\mathfrak{R}(\alpha) > -m-2$, it thus defines a distribution on the space $\mathcal{D}\brac{\mR^m,\mathcal{H}_k}$ of $\mathcal{H}_k$-valued test functions (functions in $\mcC^{\infty}\brac{\mR^m,\mathcal{H}_k}$ with compact support). For arbitrary for $\gamma$ with $\mathfrak{R}(\gamma)>-m$, we then consider the distribution $\norm{x}^{\gamma}$ whose action on test functions $\phi \in\mathcal{D}\brac{\mR^m}$ is defined by
\begin{equation*}
\inner{\norm{x}^{\gamma},\phi}=\int_{\mR^m}\norm{x}^{\gamma}\phi(x)\mathrm{d}x\ .
\end{equation*}  
The following result will be used, see e.g. \cite{Helgason}:
\begin{lemma}\label{Helg}
The mapping $\gamma \mapsto \norm{x}^{\gamma}$ can be uniquely extended to a meromorphic mapping from $\mC$ to the space of tempered distributions on $\mR^m$ (i.e. holomorphic on $\mC$, except for a few isolated points). The poles are the points $\gamma=-m-2a$ (for all $a\in \mN$) and they are all simple. 
\end{lemma}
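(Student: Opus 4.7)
The plan is to establish the meromorphic continuation by the classical polar-coordinate expansion. For $\mathfrak{R}(\gamma) > -m$, the pairing $\inner{\norm{x}^\gamma,\phi} := \int_{\mR^m} \norm{x}^\gamma \phi(x)\,\mathrm{d}x$ converges absolutely for every Schwartz test function $\phi \in \mathcal{S}(\mR^m)$, and depends holomorphically on $\gamma$: indeed the derivative under the integral sign is $\log\norm{x}\cdot\norm{x}^\gamma\phi(x)$, which is still locally integrable near $0$ and Schwartz-decaying at infinity, so Morera combined with Fubini gives holomorphy on the half-plane $\mathfrak{R}(\gamma) > -m$. The task is thus to extend this holomorphic family to all of $\mC$ and identify its singularities.

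Introducing polar coordinates $x = r\omega$ with $r \geq 0$ and $\omega \in S^{m-1}$, define the spherical average
\begin{equation*}
\psi_\phi(r) := \int_{S^{m-1}} \phi(r\omega)\,\mathrm{d}\sigma(\omega).
\end{equation*}
The substitution $\omega \mapsto -\omega$ shows $\psi_\phi(-r) = \psi_\phi(r)$, so $\psi_\phi$ extends to a smooth \emph{even} function on $\mR$. Consequently its Taylor expansion at the origin contains only even powers of $r$:
\begin{equation*}
\psi_\phi(r) = \sum_{a=0}^{N} c_{2a}(\phi)\, r^{2a} + r^{2N+2}\,R_N(r),
\end{equation*}
with $R_N$ smooth. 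This evenness is the key structural input, for it is exactly what rules out the `odd' candidate poles $\gamma = -m-1, -m-3, \ldots$ that an arbitrary Taylor expansion would otherwise produce.

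Splitting the radial integral at $r=1$, the tail $\int_1^\infty r^{\gamma+m-1}\psi_\phi(r)\,\mathrm{d}r$ is entire in $\gamma$ by the Schwartz decay of $\phi$. For the contribution from $[0,1]$, substituting the Taylor expansion yields
\begin{equation*}
\int_0^1 r^{\gamma+m-1}\psi_\phi(r)\,\mathrm{d}r \;=\; \sum_{a=0}^{N} \frac{c_{2a}(\phi)}{\gamma+m+2a} \;+\; \int_0^1 r^{\gamma+m+2N+1} R_N(r)\,\mathrm{d}r,
\end{equation*}
and the remainder integral is holomorphic on the half-plane $\mathfrak{R}(\gamma) > -m-2N-2$. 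Since $N$ may be chosen arbitrarily large, this exhibits the desired meromorphic extension of $\gamma \mapsto \norm{x}^\gamma$ to the whole complex plane, whose only singularities are simple poles located exactly at $\gamma = -m-2a$ for $a \in \mN$. Uniqueness of the extension is then immediate from the identity theorem applied on the initial half-plane $\mathfrak{R}(\gamma) > -m$.

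The main technical subtlety I expect is verifying that each residue functional $\phi \mapsto c_{2a}(\phi)$ is continuous in the Schwartz topology, so that the residues at the poles genuinely define tempered distributions, and that the remainder $R_N$ can be controlled uniformly in $\phi$ by a finite number of Schwartz seminorms. Both facts reduce to identifying $c_{2a}(\phi)$, up to a combinatorial constant depending only on $m$ and $a$, with $\Delta^a\phi(0)$ via the classical Pizzetti (spherical mean-value) formula, together with a routine Taylor-remainder estimate for $R_N$.
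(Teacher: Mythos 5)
Your proof is correct. The paper does not actually prove this lemma---it simply cites Helgason's \emph{Groups and Geometric Analysis}---and your argument is precisely the standard one found there and in other classical references (Gelfand--Shilov, H\"ormander): pass to polar coordinates, observe that the spherical average $\psi_\phi$ is an even smooth function of $r$ (so its Taylor series has only even powers, which kills the would-be poles at $\gamma=-m-1,-m-3,\ldots$), split the radial integral at $r=1$, Taylor-expand the near part to push the domain of holomorphy arbitrarily far to the left, and invoke the identity theorem for uniqueness. You also correctly flag the two points a fully rigorous write-up must settle: that the residue functionals $\phi \mapsto c_{2a}(\phi)$ are tempered distributions, and that the remainder is controlled by finitely many Schwartz seminorms. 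The first follows from the Pizzetti formula $c_{2a}(\phi) = \frac{A_m}{2^{2a}\,a!\,\Gamma(m/2+a)/\Gamma(m/2)}\,\Delta^a\phi(0)$ (any constant proportional to $\Delta^a\phi(0)$ suffices), and the second from the integral form of Taylor's remainder together with the bound of $\sup_{|x|\leq 1}|\partial^\alpha\phi(x)|$ by Schwartz seminorms. Nothing essential is missing; this is a faithful reconstruction of the argument the paper is pointing to.
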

\noindent
For $\gamma \in \mC\setminus\set{m+2a,-2b: a,b\in \mN}$ we then introduce the Riesz potential by means \mbox{of (see \cite{Helgason} page 135-136 for more details)}
\begin{equation*}
I_x^{\gamma}:=\frac{\Gamma\brac{\frac{m-\gamma}{2}}}{2^{\gamma}\pi^{\frac{m}{2}}\Gamma\brac{\frac{\gamma}{2}}}\norm{x}^{-m+\gamma}.
\end{equation*}
This distribution acts on test functions $\phi$ through a convolution product, and one has that $I_x^{0}\phi=\lim_{\gamma \to 0}I_x^{\gamma}\phi=\phi(0)$. Note that the poles of $\norm{x}^{-m+\gamma}$ are cancelled by the poles of $\Gamma\brac{\frac{\gamma}{2}}$. The Riesz potential for $\gamma=2$ can be seen as some sort of inverse of the Laplace operator $\Delta_x$, because it satisfies $I_x^{\gamma}\Delta_x\phi=\Delta_xI_x^{\gamma}\phi=-I_x^{\gamma-2}\phi$ in distributional sense. A repeated application of this relation then leads to the relation $I_x^{\gamma}=(-1)^a\Delta_x^aI_x^{\gamma+2a}$, for all $a \in \mN$, so we can define 
\begin{equation*}
I_x^{-2a}=(-1)^a\Delta_x^a\delta\brac{x},
\end{equation*}
with $\delta\brac{x}$ the Dirac-delta distribution on $\mR^m$. This thus defines an analytic continuation of the mapping $\gamma \mapsto I_x^{\gamma}$ to a holomorphic function with poles in $\set{\gamma=m+2a: a\in \mN}$, which are then precisely the poles of $\Gamma\brac{\frac{m-\gamma}{2}}$. Our findings can be reformulated, in the sense that we can analytically extend the mapping $\gamma \mapsto \norm{x}^{-m+\gamma}$ to $\mC\setminus\set{-2a: a\in \mN}$, according to lemma \ref{Helg}. Its singularities are the simple poles, with residues
\begin{equation*}
\underset{\gamma=-2a}{\mathrm{Res}}\norm{x}^{-m+\gamma}=\underset{\gamma=-2a}{\mathrm{Res}}\brac{\frac{2^{\gamma}\pi^{\frac{m}{2}}\Gamma\brac{\frac{\gamma}{2}}}{\Gamma\brac{\frac{m-\gamma}{2}}}I_x^{\gamma}} 
=\frac{2^{-2a}\pi^{\frac{m}{2}}}{\Gamma\brac{\frac{m}{2}+a}}\underset{\gamma=-2a}{\mathrm{Res}}\brac{\Gamma\brac{\frac{\gamma}{2}}}I_x^{-2a}.
\end{equation*}
In view of the fact that 
\begin{equation*}
\underset{\gamma=-2a}{\mathrm{Res}}\Gamma\brac{\frac{\gamma}{2}}=\lim_{\gamma \to -2a}(\gamma +2a)\Gamma\brac{\frac{\gamma}{2}}=2\frac{(-1)^a}{a!},
\end{equation*}
we then find
\begin{equation*}
\underset{\gamma=-2a}{\mathrm{Res}}\norm{x}^{-m+\gamma}=\frac{2^{-2a+1}\pi^{\frac{m}{2}}}{\Gamma\brac{\frac{m}{2}+a}a!}\Delta_x^a\delta\brac{x}.
\end{equation*}
This implies that the mapping $\alpha \mapsto E_1^{\alpha}$ is holomorphic in $\mC \setminus \set{-m-2a: a\in \mN}$. Moreover, the poles at the values $\set{-m-2,\ldots,-m+2,-m}$ are removable singularities so the following proposition was proved:
\begin{proposition}
The mapping $\alpha \mapsto E_1^{\alpha}(x;u)=\norm{x}^{\alpha-2}H_1(x\:u\:x)$ can be holomorphically extended to the set $\mC \backslash \set{-m-2a: a\in \mN}$.
\end{proposition}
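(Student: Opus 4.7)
The plan is to split $E_1^\alpha$ into two distributional pieces, each of which extends meromorphically in $\alpha$ via Lemma~\ref{Helg}, and then show that the one candidate pole that lies outside the set $\{-m-2a:a\in\mN\}$ is removable. First, using the Clifford identity $xux=\norm{x}^2 u - 2\inner{u,x}\,x$, I would decompose
\begin{equation*}
E_1^{\alpha}(x;u)\;=\;\norm{x}^{\alpha}\,H_1(u)\;-\;2\inner{u,x}\,\norm{x}^{\alpha-2}\,H_1(x),
\end{equation*}
which separates the $\alpha$-dependence into the two explicit distributional factors $\norm{x}^\alpha$ and $\norm{x}^{\alpha-2}$, multiplied by $\alpha$-independent polynomials in $(x,u)$.

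Next, I would apply Lemma~\ref{Helg} to each factor: $\norm{x}^\alpha$ extends meromorphically to $\mC$ with simple poles precisely at $\alpha=-m-2a$ ($a\in\mN$), while by a trivial shift $\norm{x}^{\alpha-2}$ extends meromorphically with simple poles at $\alpha=-m+2-2a$. Multiplication by the $\alpha$-independent polynomials $H_1(u)$ and $\inner{u,x}H_1(x)$ preserves meromorphicity, so $E_1^\alpha$, viewed as an $\mcH_1$-valued distributional family on $\mR^m$, has at worst simple poles contained in
\begin{equation*}
\{-m+2\}\,\cup\,\{-m-2a:a\in\mN\}.
\end{equation*}
Everything therefore reduces to proving that the extra candidate at $\alpha=-m+2$ is actually a removable singularity.

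For that step, I would write the degree-one harmonic as $H_1(x)=\sum_\ell c_\ell\,x_\ell$, so that the cofactor of $\norm{x}^{\alpha-2}$ in the second term becomes $-2\inner{u,x}H_1(x)=-2\sum_{j,\ell}u_j c_\ell\,x_j x_\ell$, a polynomial in $x$ vanishing to order two at the origin. By the Riesz-potential residue computed just before the statement, one has
\begin{equation*}
\underset{\alpha=-m+2}{\mathrm{Res}}\,\norm{x}^{\alpha-2}\;=\;\underset{\gamma=-m}{\mathrm{Res}}\,\norm{x}^{\gamma}\;=\;\frac{2\pi^{m/2}}{\Gamma(m/2)}\,\delta(x),
\end{equation*}
a nonzero scalar multiple of the Dirac delta. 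Combined with the elementary distributional identity $x_j x_\ell\,\delta(x)=0$, this forces the principal part of the Laurent expansion of the second term at $\alpha=-m+2$ to vanish. Hence $\alpha=-m+2$ is removable, and $E_1^\alpha$ extends holomorphically to $\mC\setminus\{-m-2a:a\in\mN\}$.

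I expect the main obstacle to be this final distributional cancellation: one must keep careful track of the $u$-dependence (which decouples here since $H_1(u)$ and $H_1(x)$ are linear in their respective variables) and be sure that the residue of $\norm{x}^{\alpha-2}$ is really a pure delta at $\alpha=-m+2$ rather than a derivative of a delta, for otherwise the quadratic prefactor $x_j x_\ell$ would not kill it. Once that single identity is in place, the rest of the argument is a straightforward application of Lemma~\ref{Helg}, together with the observation that the other candidate poles are contained in the excluded set $\{-m-2a:a\in\mN\}$ and therefore need no further justification.
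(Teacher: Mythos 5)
Your proof is correct and follows the paper's approach: expand $H_1(xux)=\norm{x}^2H_1(u)-2\inner{u,x}H_1(x)$, invoke Lemma~\ref{Helg} together with the residue formula derived just before the proposition, and observe that the only candidate pole outside $\set{-m-2a:a\in\mN}$, namely $\alpha=-m+2$, is removable because the residue of $\norm{x}^{\alpha-2}$ there is a pure multiple of $\delta(x)$, which is annihilated by the quadratic polynomial $\inner{u,x}H_1(x)$. Your writeup is in fact more explicit than the paper's, which merely asserts the conclusion after establishing the residue formula; in particular you correctly isolate $\alpha=-m+2$ as the unique point requiring a cancellation argument.
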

\noindent
This means that equation (\ref{label2}) holds in distributional sense, for $\mathfrak{R}(\alpha) > -m-1$. Hence, with this restriction on $\alpha$, we find that
\begin{align*}
\mcD_k\Phi(x,u)=&\lim_{\alpha \to 2-m}(\alpha+m-2)\brac{\alpha+\frac{4k}{2k+m-2}}\norm{x}^{\alpha-2k-2}\inner{xux,2\f_1}^k  \\
&+ \lim_{\alpha \to 2-m} (\alpha+m-2)\frac{4k(\alpha+m)}{2k+m-2}\norm{x}^{\alpha-2k-2}\inner{u,x}\inner{x,2\f_1} \inner{xux,2\f_1}^{k-1}  \\
&+ \lim_{\alpha \to 2-m} \frac{4k(k-1)(\alpha+m)(\alpha+m-2)}{(2k+m-2)(2k+m-4)} \norm{x}^{\alpha-2k}\norm{u}^2\inner{x,2\f_1}^2\inner{xux,2\f_1}^{k-2},  
\end{align*}
which leads to
\begin{equation}\label{label3}
\begin{split}
\mcD_k\Phi(x,u)=&\brac{2-m+\frac{4k}{2k+m-2}}\frac{2^{-2k+1}\pi^{\frac{m}{2}}}{\Gamma\brac{\frac{m}{2}+k}k!}\Delta_x^k\delta\brac{x} \inner{xux,2\f_1}^k  \\
+&\frac{8k}{2k+m-2}\frac{2^{-2k+1}\pi^{\frac{m}{2}}}{\Gamma\brac{\frac{m}{2}+k}k!}\Delta_x^k\delta\brac{x}\inner{u,x}\inner{x,2\f_1} \inner{xux,2\f_1}^{k-1}  \\ 
+&\frac{8k(k-1)}{(2k+m-2)(2k+m-4)}\frac{2^{-2k+3}\pi^{\frac{m}{2}}}{\Gamma\brac{\frac{m}{2}+k-1}(k-1)!}\Delta_x^{k-1}\delta\brac{x} \\
&\times \norm{u}^2\inner{x,2\f_1}^2\inner{xux,2\f_1}^{k-2}.
\end{split}
\end{equation} 
In view of the fact that $\inner{\delta,\phi}=\phi(0)$, we get:
\begin{align*}
\inner{(\Delta_x^k\delta\brac{x})\inner{xux,2\f_1}^k,\phi}&=\inner{(\Delta_x^k\delta\brac{x}),\inner{xux,2\f_1}^k\phi} \\
&=\inner{\delta\brac{x},\Delta_x^k\inner{xux,2\f_1}^k\phi} \\
&=\inner{\delta\brac{x},\brac{\Delta_x^k\inner{xux,2\f_1}^k}\phi+\ldots},
\end{align*}
where the dots indicate all the other terms coming from the action of the operator $\Delta_x^k$. They can be safely ignored, since we still have to act with the distribution $\delta\brac{x}$ which will make all of these terms disappear. We thus get that
\begin{equation*}
\inner{(\Delta_x^k\delta\brac{x})\inner{xux,2\f_1}^k,\phi}=\inner{\delta\brac{x},\brac{\Delta_x^k\inner{xux,2\f_1}^k}\phi}= \inner{[\Delta_x\inner{xux,2\f_1}^k]\delta\brac{x},\phi}.
\end{equation*}
A similar reasoning can be applied for the other two terms which means that equation (\ref{label3}) reduces to
\begin{equation}\label{label4}
\begin{split}
\mcD_k\Phi(x,u)=&\brac{2-m+\frac{4k}{2k+m-2}}\frac{2^{-2k+1}\pi^{\frac{m}{2}}}{\Gamma\brac{\frac{m}{2}+k}k!}\brac{\Delta_x^k\inner{xux,2\f_1}^k}\delta\brac{x}   \\
+&\frac{8k}{2k+m-2}\frac{2^{-2k+1}\pi^{\frac{m}{2}}}{\Gamma\brac{\frac{m}{2}+k}k!}\brac{\Delta_x^k\inner{u,x}\inner{x,2\f_1} \inner{xux,2\f_1}^{k-1}}\delta\brac{x}  \\ 
+&\frac{8k(k-1)}{(2k+m-2)(2k+m-4)}\frac{2^{-2k+3}\pi^{\frac{m}{2}}}{\Gamma\brac{\frac{m}{2}+k-1}(k-1)!} \norm{u}^2\\
&\times \brac{\Delta_x^{k-1}\inner{x,2\f_1}^2\inner{xux,2\f_1}^{k-2}}\delta\brac{x}.
\end{split}
\end{equation} 
In order to calculate the action of $\Delta_x^k$ on the given polynomials, we need a few lemmas:
\begin{lemma}
For $H_a(x)\in \mcH_a$ and $H_b(x)\in \mcH_b$, where $a\geq b$, one has that $H_a(x)H_b(x) \in \ker \Delta_x^{b+1}$, i.e. 
\begin{equation*}
H_a(x)H_b(x)\in \mcH_{a+b}\oplus \norm{x}^2\mcH_{a+b-2}\oplus \ldots \oplus \norm{x}^{2b}\mcH_{a-b}
\end{equation*}
\end{lemma}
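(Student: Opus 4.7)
The lemma naturally splits into two claims: first, the membership $H_a H_b \in \ker \Delta_x^{b+1}$, and second, the explicit Fischer-type description of that kernel. I would handle them in that order.

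For the first claim, the plan is to induct on $b$, exploiting the Leibniz rule for $\Delta_x$ on a product. The base case $b=0$ is immediate since $H_b$ is then a constant and $H_a H_b \in \mcH_a \subset \ker \Delta_x$. For the inductive step, write
\begin{equation*}
\Delta_x(H_a H_b) \;=\; (\Delta_x H_a) H_b \;+\; 2\langle \D_x H_a,\D_x H_b\rangle \;+\; H_a(\Delta_x H_b) \;=\; 2\sum_{i=1}^{m} (\D_{x_i} H_a)(\D_{x_i} H_b),
\end{equation*}
using harmonicity of $H_a$ and $H_b$. Each $\D_{x_i}H_a$ lies in $\mcH_{a-1}$ and each $\D_{x_i}H_b$ in $\mcH_{b-1}$, with $a-1\geq b-1$, so the induction hypothesis gives $(\D_{x_i}H_a)(\D_{x_i}H_b) \in \ker\Delta_x^{b}$. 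Summing, $\Delta_x(H_a H_b)\in\ker\Delta_x^{b}$, which means $H_aH_b\in\ker\Delta_x^{b+1}$.

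For the second claim, I invoke the classical Fischer decomposition: any homogeneous polynomial of degree $n=a+b$ has a unique representation $P = \sum_{j=0}^{\lfloor n/2\rfloor} \norm{x}^{2j}H_{n-2j}$ with $H_{n-2j}\in\mcH_{n-2j}$. To identify which $j$ can actually contribute when $P=H_a H_b$, I use the $\mathfrak{sl}(2)$-triple of proposition \ref{sl2}: on each line $\Span\{\norm{x}^{2j}H_k : j\geq 0\}$, the Laplace operator acts as a lowering operator, and a direct computation (or the standard $\mathfrak{sl}(2)$ weight argument) gives
\begin{equation*}
\Delta_x\bigl(\norm{x}^{2j}H_k\bigr) \;=\; 2j(2k+2j+m-2)\,\norm{x}^{2j-2}H_k.
\end{equation*}
Iterating shows that $\Delta_x^{b+1}$ annihilates $\norm{x}^{2j}H_k$ precisely when $j\leq b$. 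Since the Fischer decomposition is unique and preserved (componentwise) by any power of $\Delta_x$, the condition $\Delta_x^{b+1}(H_a H_b)=0$ forces all components with $j>b$ to vanish, leaving exactly
\begin{equation*}
H_a H_b \;\in\; \mcH_{a+b}\oplus\norm{x}^2\mcH_{a+b-2}\oplus\cdots\oplus\norm{x}^{2b}\mcH_{a-b}.
\end{equation*}

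The only subtle point is the final upper index $a-b$: one must check that the lowest-degree slot in the Fischer decomposition is indeed $a-b$ and not smaller. This is automatic because the non-vanishing components have index $n-2j = a+b-2j$ for $0\leq j\leq b$, which ranges from $a+b$ down to $a-b$, and the dominant-weight condition $a\geq b$ ensures $a-b\geq 0$ so that $\mcH_{a-b}$ is a genuine irreducible summand. I expect no real obstacle in the argument; the inductive step and the $\mathfrak{sl}(2)$-computation are both short, and the cleanest presentation is the two-step structure above.
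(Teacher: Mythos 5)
Your proof is correct, and it follows essentially the same route as the paper: the paper's (one-line) argument is precisely "iterate $\Delta_x(H_aH_b) = 2\sum_j (\D_{x_j}H_a)(\D_{x_j}H_b)$," which is the content of your induction on $b$. You simply spell out what the paper leaves implicit — the base case, the observation that partial derivatives of harmonics stay harmonic with the dominance $a-1\geq b-1$ preserved, and the translation of $\ker\Delta_x^{b+1}$ into the truncated Fischer decomposition via the $\mathfrak{sl}(2)$ identity $\Delta_x(\norm{x}^{2j}H_k)=2j(2k+2j+m-2)\norm{x}^{2j-2}H_k$.
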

\begin{proof}
The results follows from iteration of the fact that the action of the Laplace operator leads to $\Delta_x\brac{H_a(x)H_b(x)}=2\sum_{j=1}^m (\D_{x_j}H_a(x))(\D_{x_j}H_b(x))$.
\end{proof}
\begin{lemma}\label{Lemma_calc}
For all integers $k\geq 2$, we have:
\begin{equation*}
\Delta_x^{k-1}\inner{x,2\f_1}^2\inner{xux,2\f_1}^{k-2}=0
\end{equation*} 
\end{lemma}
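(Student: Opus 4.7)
The plan is to introduce complex coordinates in the $(x_1,x_2)$-plane adapted to the null vector $2\f_1 = e_1 - ie_2$, and to exploit the identity $\inner{2\f_1,2\f_1}=0$. Set $z = x_1 - ix_2$, $\bar z = x_1 + ix_2$, $w = \inner{u,2\f_1} = u_1 - iu_2$, and split $x = (x_1,x_2,x')$ and $u = (u_1,u_2,u')$ with $x',u' \in \mR^{m-2}$. A direct expansion of $xux = \norm{x}^2 u - 2\inner{u,x}x$ in this basis yields, after the $\bar z$-terms cancel,
\begin{equation*}
\inner{xux,2\f_1} \;=\; w\norm{x'}^2 - \bar w\, z^2 - 2\inner{u',x'}\,z.
\end{equation*}
The crucial feature is the absence of $\bar z$ on the right-hand side; equivalently, $\D_{\bar z}\inner{xux,2\f_1}=0$, which is ultimately a manifestation of $\inner{2\f_1,2\f_1}=0$. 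Since $\inner{x,2\f_1}^2 = z^2$ is also $\bar z$-free, the polynomial
\begin{equation*}
P(x;u) \;:=\; \inner{x,2\f_1}^2\inner{xux,2\f_1}^{k-2}
\end{equation*}
satisfies $\D_{\bar z}P = 0$.

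Next I would decompose the Laplacian as $\Delta_x = 4\D_z\D_{\bar z} + \Delta_{x'}$ with $\Delta_{x'} = \D_{x_3}^2 + \ldots + \D_{x_m}^2$. Because $\Delta_{x'}$ commutes with $\D_{\bar z}$, it preserves the subspace of $\bar z$-free polynomials, so a straightforward induction gives $\Delta_x^j P = \Delta_{x'}^j P$ for every $j \geq 0$. In particular, $\Delta_x^{k-1} P = \Delta_{x'}^{k-1} P$, and the Laplacian in all of $\mR^m$ has effectively been reduced to the Laplacian in the $(m-2)$-dimensional subspace spanned by $x_3,\ldots,x_m$.

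The argument then closes by a degree count: from the explicit formula, $\inner{xux,2\f_1}$ has degree at most $2$ in the variables $x'$ (attained only through the $w\norm{x'}^2$-term), so $P$ has $x'$-degree at most $2(k-2) = 2k-4$, whereas $\Delta_{x'}^{k-1}$ decreases the $x'$-degree by $2(k-1) = 2k-2$. Hence $\Delta_{x'}^{k-1}P = 0$, and the lemma follows. The only genuine calculation in this plan is the simplification of $\inner{xux,2\f_1}$ into the $\bar z$-free form displayed above; I do not anticipate any substantive obstacle beyond it, since everything else is either a commutation of operators or a comparison of homogeneity degrees.
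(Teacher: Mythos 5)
Your proof is correct and takes a genuinely different route from the paper's. The paper proceeds combinatorially: it expands $\inner{x,2\f_1}^2\inner{xux,2\f_1}^{k-2}$ via the binomial theorem into a sum of monomials $\norm{x}^{2j}\overline{z}_1^{\,k-j}\inner{u,x}^{k-j-2}\overline{w}_1^{\,j}$, decomposes each power $\inner{u,x}^{k-j-2}$ into harmonic components in $x$, then invokes the preceding lemma (that $H_aH_b\in\ker\Delta_x^{b+1}$ for $a\geq b$) and tracks the maximal power of $\norm{x}^2$ that can occur, concluding that the whole expression sits in a sum of subspaces $\norm{x}^{2i}\mcH_j$ with $i\leq k-2$, hence is killed by $\Delta_x^{k-1}$. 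You replace all of this with a single structural observation: because $2\f_1$ is a null vector, the polynomial $P$ is $\bar z$-free, so the $(x_1,x_2)$-part $4\D_z\D_{\bar z}$ of the Laplacian annihilates it and $\Delta_x$ collapses to $\Delta_{x'}$ on $P$; a one-line degree count in $x'$ (at most $2k-4$, lowered by $2k-2$) then closes the proof. Both arguments exploit the same numerical coincidence — $P$ has total $x$-degree $2k-2$, exactly the order of $\Delta_x^{k-1}$ — but your reduction to the transverse variables dispenses with the harmonic decompositions and $\norm{x}^2$-bookkeeping entirely and makes the vanishing conceptually transparent. The only notational caveat: your $z=x_1-ix_2$ is the paper's $\overline{z}_1$ (the paper sets $z_1=x_1+ix_2$), so a reader cross-referencing the two should note the convention flip.
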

\begin{proof}
Recalling that $\inner{xux,2\f_1}=\norm{x}^2\overline{w}_1-2\inner{u,x}\overline{z}_1$ with $z_1=x_1+ix_2$ and $w_1=u_1+iu_2$, we have for $k \geq 2$:
\begin{equation*}
\inner{x,2\f_1}^2\inner{xux,2\f_1}^{k-2}=\sum_{j=0}^{k-2} {k-2 \choose j} (-2)^{k-j-2}\norm{x}^{2j}\overline{z}_1^{k-j}\inner{u,x}^{k-j-2}\overline{w}_1^j.
\end{equation*}
The term $\inner{u,x}^{k-j-2}$ can be decomposed into harmonics in $x$. We have two different cases depending on the parity of $k-j$. For $k-j$ even, we have
\begin{equation*}
\overline{z}_1^{k-j}\inner{u,x}^{k-j-2}=\overline{z}_1^{k-j}\brac{H_{k-j-2}+ \ldots + \norm{x}^{k-j-2}H_0},
\end{equation*}
with $H_i \in \mcH_i$. Using the previous lemma, the term $\overline{z}_1^{k-j}H_{k-j-2}$ can be written as
\begin{equation*}
\overline{z}_1^{k-j}H_{k-j-2}=H'_{2k-2j-2}+\ldots+\norm{x}^{2k-2j-4}H'_{2}.
\end{equation*}
Similarly, we can write
\begin{equation*}
\norm{x}^2\overline{z}_1^{k-j}H_{k-j-4}=\norm{x}^2H^{\ast}_{2k-2j-4}+\ldots+\norm{x}^{2k-2j-6}H^{\ast}_{4},
\end{equation*}
which means that 
\begin{equation*}
\norm{x}^{2j}\overline{z}_1^{k-j}\inner{u,x}^{k-j-2}=\norm{x}^{2j}\widetilde{H}_{2k-2j-2}+ \ldots + \norm{x}^{2k-4}\widetilde{H}_2,
\end{equation*}
since $1 \leq j \leq k-2$, the maximal power of $\norm{x}^2$ that can appear is the one for $j=k-2$, i.e. 
\begin{equation*}
\norm{x}^{2k-4}\overline{z}_1^{k-j}\inner{u,x}^{k-j-2}=\norm{x}^{2k-4}\widetilde{H}_{2k-2j-2}+ \ldots + \norm{x}^{2k-4}\widetilde{H}_2,
\end{equation*}
which is clearly in the kernel of $\Delta_x^{k-1}$. The case where $k-j$ is odd can be computed in a similar way which completes the proof of the lemma. 
\end{proof}
\begin{lemma}
For all positive integers $k$, we have: 
\begin{equation*}
\Delta_x^k\inner{xux,2\f_1}^k=2^{2k-1} k! (2k+m-4)\frac{\Gamma\brac{k+\frac{m}{2}-2}}{\Gamma\brac{\frac{m}{2}-1}}\inner{u,2\f_1}^k.
\end{equation*}
\end{lemma}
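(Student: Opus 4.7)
The plan is to prove the formula by induction on $k$, bootstrapping from the single-Laplacian formula already derived in the proof of Proposition \ref{Prop4} together with the vanishing statement of Lemma \ref{Lemma_calc}. Write $Q := \inner{xux,2\f_1}$, $a := \inner{u,2\f_1}$ and $b := \inner{x,2\f_1}$, so that the identity derived a few pages earlier can be recast as
\begin{equation*}
\Delta_x Q^k \;=\; 4k(k-1)\,\norm{u}^2\, b^2\, Q^{k-2} \;+\; 2k(2k+m-4)\, a\, Q^{k-1}.
\end{equation*}
(The base case $k=1$ of the claim is just $\Delta_x Q = 2(m-2)a$, which specialises the formula above and matches the claimed constant $2^{1}\cdot 1! \cdot (m-2)\cdot \Gamma(m/2-1)/\Gamma(m/2-1)$.)

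For the inductive step, assume the formula holds for $k-1$. Apply $\Delta_x^{k-1}$ to both sides of the displayed relation. Since $a$ and $\norm{u}^{2}$ are constants in $x$, they commute with $\Delta_x$, giving
\begin{equation*}
\Delta_x^k Q^k \;=\; 4k(k-1)\,\norm{u}^2\,\Delta_x^{k-1}\!\bigl(b^2 Q^{k-2}\bigr) \;+\; 2k(2k+m-4)\, a\,\Delta_x^{k-1} Q^{k-1}.
\end{equation*}
The first summand vanishes identically by Lemma \ref{Lemma_calc}, so only the second term survives. Substituting the induction hypothesis for $\Delta_x^{k-1}Q^{k-1}$ then yields
\begin{equation*}
\Delta_x^k Q^k \;=\; 2^{2k-2}\, k!\,(2k+m-4)(2k+m-6)\,\frac{\Gamma(k+\tfrac{m}{2}-3)}{\Gamma(\tfrac{m}{2}-1)}\, a^{k}.
\end{equation*}

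It remains to match this with the target constant. The identity $\Gamma(k+\tfrac{m}{2}-2)=(k+\tfrac{m}{2}-3)\,\Gamma(k+\tfrac{m}{2}-3)$ together with $2(k+\tfrac{m}{2}-3)=2k+m-6$ rewrites the coefficient exactly as $2^{2k-1}k!(2k+m-4)\Gamma(k+\tfrac{m}{2}-2)/\Gamma(\tfrac{m}{2}-1)$, completing the induction. The only genuinely non-trivial input is Lemma \ref{Lemma_calc}, which is why this approach is clean: the combinatorial explosion that one would get from expanding $\Delta_x^k$ directly on the binomial expansion of $(|x|^2 a - 2\inner{u,x}b)^k$ is entirely absorbed into the observation that the $\norm{u}^2 b^2 Q^{k-2}$ term cannot survive $k-1$ additional Laplacians. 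The main obstacle is therefore conceptual rather than computational: recognising that iterating the single-step Laplacian formula, combined with the right vanishing lemma, collapses an a priori messy recursion into a one-term relation governed by a simple Gamma function shift.
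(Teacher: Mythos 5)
Your proof is correct and takes essentially the same route as the paper: both apply $\Delta_x^{k-1}$ to the one-step Laplacian formula from the proof of Proposition \ref{Prop4}, invoke Lemma \ref{Lemma_calc} to annihilate the $\norm{u}^2\inner{x,2\f_1}^2\inner{xux,2\f_1}^{k-2}$ contribution, and then iterate the surviving one-term recursion. You package the iteration as a clean induction and carry the Gamma-function bookkeeping along at each step, whereas the paper iterates informally and simplifies the resulting product $2^{k}k!(2k+m-4)(2k+m-6)\cdots m(m-2)$ at the end; you also quietly corrected a stray $\norm{x}^{2-m-2k}$ factor that appears as a typo in the paper's restatement of the one-step formula.
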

\begin{proof}
In the proof of proposition \ref{Prop4}, it was shown that
\begin{align*}
\Delta_x^k\inner{xux,2\f_1}^k=& 2k(2k+m-4)\inner{u,2\f_1}\Delta_x^{k-1}\inner{xux,2\f_1}^{k-1} \\
&+4k(k-1)\norm{x}^{2-m-2k}\norm{u}^2\Delta_x^{k-1}\inner{x,2\f_1}^2\inner{xux,2\f_1}^{k-2} \\
=&2k(2k+m-4)\inner{u,2\f_1}\Delta_x^{k-1}\inner{xux,2\f_1}^{k-1},
\end{align*}
where the second equality follows from the previous lemma. Repeating these steps leads to 
\begin{equation*}
\Delta_x^k\inner{xux,2\f_1}^k=2^{k} k! (2k+m-4)(2k+m-6)\ldots m(m-2)\inner{u,2\f_1}^k,
\end{equation*} 
which can be simplified to obtain the desired result. 
\end{proof}
\begin{lemma}
For all positive integers $k$, we have: 
\begin{equation*}
\Delta_x^k\inner{u,x}\inner{x,2\f_1}\inner{xux,2\f_1}^{k-1}=2^{2k-1} k!\frac{\Gamma\brac{k+\frac{m}{2}-2}}{\Gamma\brac{\frac{m}{2}-1}}\inner{u,2\f_1}^k.
\end{equation*}
\end{lemma}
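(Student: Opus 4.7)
The plan is to reduce this computation to the previous lemma (where $\Delta_x^k\inner{xux,2\f_1}^k$ was computed) by rewriting the polynomial on the left-hand side. From the decomposition $\inner{xux,2\f_1}=\norm{x}^2\inner{u,2\f_1}-2\inner{u,x}\inner{x,2\f_1}$ already used in the proof of Proposition \ref{Prop4}, we obtain the elementary identity
\begin{equation*}
\inner{u,x}\inner{x,2\f_1}\inner{xux,2\f_1}^{k-1}=\tfrac{1}{2}\norm{x}^2\inner{u,2\f_1}\inner{xux,2\f_1}^{k-1}-\tfrac{1}{2}\inner{xux,2\f_1}^k.
\end{equation*}
Applying $\Delta_x^k$ and pulling the constant factor $\inner{u,2\f_1}$ out of the first summand, the proof reduces to computing $\Delta_x^k\bigl(\norm{x}^2\inner{xux,2\f_1}^{k-1}\bigr)$, since the second summand is given by the previous lemma.

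For the first summand I would exploit the $\mathfrak{sl}(2)$ structure from Proposition \ref{sl2}. A standard induction based on $[\Delta_x,\norm{x}^2]=4\mE_x+2m$ (or equivalently on the $\mathfrak{sl}(2)$ commutators $[X,Y]=H$ and $[H,X]=2X$) yields
\begin{equation*}
[\Delta_x^k,\norm{x}^2]=4k\,\Delta_x^{k-1}\!\left(\mE_x+\tfrac{m}{2}-k+1\right).
\end{equation*}
Since $\inner{xux,2\f_1}^{k-1}$ is homogeneous of degree $2(k-1)$ in $x$, the operator $\Delta_x^k$ annihilates it, and the Euler operator acts as $2(k-1)$. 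Therefore
\begin{equation*}
\Delta_x^k\bigl(\norm{x}^2\inner{xux,2\f_1}^{k-1}\bigr)=2k(2k+m-2)\,\Delta_x^{k-1}\inner{xux,2\f_1}^{k-1}.
\end{equation*}
The right-hand side is then evaluated by the preceding lemma (applied at level $k-1$), which gives a closed expression in terms of $\inner{u,2\f_1}^{k-1}$.

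Combining these two pieces and using $\Gamma(k+\tfrac{m}{2}-2)=(k+\tfrac{m}{2}-3)\,\Gamma(k+\tfrac{m}{2}-3)$, i.e.\ $\Gamma(k+\tfrac{m}{2}-3)=\tfrac{2}{2k+m-6}\Gamma(k+\tfrac{m}{2}-2)$, the factor $(2k+m-6)$ cancels with the denominator and one is left with a bracket $(2k+m-2)-(2k+m-4)=2$, which precisely produces the constant $2^{2k-1}k!\,\Gamma(k+\tfrac{m}{2}-2)/\Gamma(\tfrac{m}{2}-1)$ in the statement. The only real point of care is the bookkeeping of Gamma factors and the verification that no lower-order harmonic corrections obstruct the cancellation; since everything is landing in the one-dimensional space spanned by $\inner{u,2\f_1}^k$, this is a matter of matching scalar constants.
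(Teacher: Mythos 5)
Your proof is correct, but it takes a genuinely different route than the paper. The paper expands $\Delta_x^k\inner{u,x}\inner{x,2\f_1}\inner{xux,2\f_1}^{k-1}$ directly, obtaining four terms; it then shows that two of them vanish (one by the earlier vanishing lemma, the other by an analogous expansion in $\overline{z}_1$, $\overline{w}_1$) and solves the resulting two-term recursion by induction. You instead use the identity $2\inner{u,x}\inner{x,2\f_1}=\norm{x}^2\inner{u,2\f_1}-\inner{xux,2\f_1}$ to rewrite the left-hand side as
\begin{equation*}
\tfrac12\inner{u,2\f_1}\,\Delta_x^k\bigl(\norm{x}^2\inner{xux,2\f_1}^{k-1}\bigr)-\tfrac12\,\Delta_x^k\inner{xux,2\f_1}^k,
\end{equation*}
dispose of the first summand via the $\mathfrak{sl}(2)$ commutator $[\Delta_x^k,\norm{x}^2]=4k\,\Delta_x^{k-1}(\mE_x+\tfrac{m}{2}-k+1)$ together with the fact that $\Delta_x^k$ kills a degree-$2(k-1)$ polynomial, and then feed both summands into the previous lemma (at levels $k-1$ and $k$). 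I checked the Gamma bookkeeping: you get $2^{2k-2}k!\big[(2k+m-2)-(2k+m-4)\big]\Gamma(k+\tfrac{m}{2}-2)/\Gamma(\tfrac{m}{2}-1)\inner{u,2\f_1}^k$, which simplifies to the claim. The gain over the paper's proof is that you sidestep the ad hoc vanishing arguments needed for the extra terms (no fresh appeal to lemma \ref{Lemma_calc} or its variant for $\inner{u,x}\inner{x,2\f_1}^3\inner{xux,2\f_1}^{k-3}$), at the mild cost of invoking the standard $\mathfrak{sl}(2)$ commutator calculus, which the paper already sets up in Proposition \ref{sl2}. Both proofs ultimately rest on the previous lemma, which itself uses \ref{Lemma_calc}; your simplification applies only to the present step.
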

\begin{proof}
A straightforward computation shows that 
\begin{align*}
\Delta_x^k\inner{u,x}\inner{x,2\f_1}\inner{xux,2\f_1}^{k-1}=&2\inner{u,2\f_1}\Delta_x^{k-1}\inner{xux,2\f_1}^{k-1} \\
&+4(k-1)(k-2)\norm{u}^2\Delta_x^{k-1}\inner{u,x}\inner{x,2\f_1}^3\inner{xux,2\f_1}^{k-3} \\
&+4(k-1)\norm{u}^2\Delta_x^{k-1}\inner{x,2\f_1}^2\inner{xux,2\f_1}^{k-2} \\
&+2(k-1)(2k+m-6)\inner{u,2\f_1} \\
&\times \Delta_x^{k-1}\inner{u,x}\inner{x,2\f_1}\inner{xux,2\f_1}^{k-2}.
\end{align*}
The third term is zero, which was proven in lemma \ref{Lemma_calc}. Using the same notations as in the proof of that lemma, we have
\begin{equation*}
\inner{u,x}\inner{x,2\f_1}^3\inner{xux,2\f_1}^{k-3}=\sum_{j=0}^{k-3} {k-3 \choose j} (-2)^{k-j-3}\norm{x}^{2j}\overline{z}_1^{k-j}\inner{u,x}^{k-j-2}\overline{w}_1^j.
\end{equation*}
This means that also $\Delta_x^{k-1}\inner{u,x}\inner{x,2\f_1}^3\inner{xux,2\f_1}^{k-3}=0$. The remaining expression is given by
\begin{align*}
\Delta_x^k\inner{u,x}\inner{x,2\f_1}\inner{xux,2\f_1}^{k-1}=&2\inner{u,2\f_1}\Delta_x^{k-1}\inner{xux,2\f_1}^{k-1} \\
&+2(k-1)(2k+m-6)\inner{u,2\f_1} \\
&\times \Delta_x^{k-1}\inner{u,x}\inner{x,2\f_1}\inner{xux,2\f_1}^{k-2} \\
=&2^k (k-1)!(2k+m-6)(2k+m-8)\ldots m(m-2)\inner{u,2\f_1}^k \\
&+2(k-1)(2k+m-6)\inner{u,2\f_1} \\
&\times \Delta_x^{k-1}\inner{u,x}\inner{x,2\f_1}\inner{xux,2\f_1}^{k-2}.
\end{align*}
The proof then follows from induction. 
\end{proof}
\noindent
Putting everything together, we find 
\begin{equation*}
\begin{split}
\mcD_k\Phi(x,u)=&\brac{\brac{2-m+\frac{4k}{2k+m-2}}(2k+m-4)+\frac{8k}{2k+m-2}} \\
&\times \frac{\pi^{\frac{m}{2}}\Gamma\brac{k+\frac{m}{2}-2}}{\Gamma\brac{\frac{m}{2}-1}\Gamma\brac{\frac{m}{2}+k}}\delta\brac{x}\inner{u,2\f_1}^k,
\end{split}
\end{equation*} 
which can be simplified to 
\begin{equation}\label{label5}
\begin{split}
\mcD_k\Phi(x,u)=&\frac{4(4-m)\pi^{\frac{m}{2}}}{(2k+m-4)\Gamma\brac{\frac{m}{2}-1}}\delta\brac{x}\inner{u,2\f_1}^k,
\end{split}
\end{equation} 
We have thus reached the following conclusion, in which we use the notation $R(\omega)$ for the reflection $f(u) \mapsto R(\omega)f(u) = f(\omega\:u\:\omega)$: 
\begin{theorem}
The distribution
\begin{equation*}
e_k\brac{x}:=\frac{(2k+m-4)\Gamma\brac{\frac{m}{2}-1}}{4(4-m)\pi^{\frac{m}{2}}}\norm{x}^{2-m}R\brac{\omega} \in \mathcal{C}^{\infty}\brac{\mR_0^m,\mathrm{End}\brac{\mathcal{H}_k}}
\end{equation*}
satisfies, for every $H_k\brac{u} \in \mathcal{H}_k$, the following equation in distributional sense: 
\begin{equation*}
\mcD_ke_k\brac{x}H_k\brac{u}=\delta\brac{x}H_k\brac{u}.
\end{equation*}
\end{theorem}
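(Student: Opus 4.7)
The proof assembles the machinery already developed in this section. By construction $e_k(x)H_k(u) = c_{m,k}\norm{x}^{2-m}H_k(\omega u\omega)$ with $c_{m,k} := (2k+m-4)\Gamma\brac{\frac{m}{2}-1}/\brac{4(4-m)\pi^{m/2}}$. My plan is therefore to establish the distributional identity $\mcD_k[e_k(x)H_k(u)] = \delta(x)H_k(u)$ first for the highest weight vector $H_k^0(u) := \inner{u,2\f_1}^k$ of $\mcH_k$, and then extend to an arbitrary $H_k$ using the Spin$(m)$-equivariance of the construction together with the irreducibility of $\mcH_k$ as a Spin$(m)$-module.

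Since $xux = \norm{x}^2\omega u\omega$, the auxiliary function $\Phi(x,u)$ appearing in the previous computations satisfies
$$\Phi(x,u) = \norm{x}^{2-m-2k}\inner{xux,2\f_1}^k = \norm{x}^{2-m}H_k^0(\omega u\omega) = c_{m,k}^{-1}\,e_k(x)H_k^0(u).$$
Identity (\ref{label5}), which was established in the distributional sense via the analytic continuation of $E_k^{\alpha}$ through the Riesz potential framework, then reads precisely
$$\mcD_k\bigl[e_k(x)H_k^0(u)\bigr] = \delta(x)H_k^0(u),$$
so the theorem is already proved for the highest weight vector.

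To extend this identity to arbitrary $H_k \in \mcH_k$, I invoke the Spin$(m)$-invariance of $\mcD_k$. For $s \in \Spin(m)$ with associated rotation $r(v) = \bar{s}vs$, the elementary relations $\norm{\bar{s}xs} = \norm{x}$ and $(\bar{s}xs)(\bar{s}us)(\bar{s}xs) = \bar{s}(xux)s$ yield $\omega_{r(x)}r(u)\omega_{r(x)} = r(\omega u\omega)$, and therefore
$$H(s)\bigl[e_k(x)H_k(u)\bigr] = e_k(x)\bigl(H(s)[H_k]\bigr)(u).$$
Applying $H(s)$ to the distributional identity for $H_k^0$, using $\mcD_k H(s) = H(s)\mcD_k$ together with the rotation-invariance of $\delta(x)$, gives the same identity with $H_k^0$ replaced by $H(s)[H_k^0]$. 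Since $\mcH_k$ is an irreducible Spin$(m)$-module generated by $H_k^0$, linearity extends the identity to every $H_k \in \mcH_k$.

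The genuinely hard step---the analytic continuation of $\alpha \mapsto E_k^{\alpha}$ past its pole at $\alpha = 2-m$ and the combinatorial identification of the resulting $\delta$-distribution contributions---was already overcome in the lemmas that led to (\ref{label5}). What remains, as above, is a two-step reduction: specialization to the highest weight vector, followed by a routine equivariance argument that spreads the answer over the irreducible target $\mcH_k$.
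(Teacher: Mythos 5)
Your proof is correct and follows essentially the same route as the paper: the distributional identity is established for the highest weight vector $\inner{u,2\f_1}^k$ via the Riesz-potential analytic continuation culminating in equation (\ref{label5}), and then extended to arbitrary $H_k \in \mcH_k$ by $\Spin(m)$-equivariance and irreducibility. The only difference is presentational: you spell out the equivariance step for the final distributional identity explicitly, whereas the paper invokes this reduction once in Proposition \ref{Prop4} and leaves it implicit afterwards.
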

\noindent
Let us then introduce the Fischer inner product on $\mathcal{H}_k$, by means of
\begin{equation*}
\comm{f,g}_F:=\left.\brac{\overline{f(\D_u)}g(u)}\right|_{u=0},
\end{equation*}
where we used the notation $\overline{\cdot}$ for complex conjugation. In order to obtain the fundamental solution for the higher spin Laplace operator, we will let the distribution $e_k\brac{x}$ act on the reproducing kernel $K_k(u,v)$ for $\mcH_k$ with respect to this inner product, satisfying the defining relation $\comm{K_k(u,v),H_k(u)}_F=H_k(v)$, for each $H_k(u) \in \mcH_k$. The reproducing kernel for $\mcH_k$ is given by a so-called Gegenbauer polynomial, i.e. $K_k(u,v):=\norm{u}^{2k}\norm{v}^{2k}C_k^{(\frac{m}{2}-1)}\brac{\frac{\inner{u,v}}{\norm{u}\norm{v}}}$, see e.g. \cite{AAR}, page 302. Hence, we have obtained our main result: 
\begin{theorem}
The fundamental solution for the higher spin Laplace operator $\mcD_k$ is defined as
\begin{equation*}
E_k\brac{x;u,v}:=e_k\brac{x}K_k(u,v) =\frac{(2k+m-4)\Gamma\brac{\frac{m}{2}-1}}{4(4-m)\pi^{\frac{m}{2}}}\norm{x}^{2-m}K_k(\omega \, u\, \omega,v)
\end{equation*}
\end{theorem}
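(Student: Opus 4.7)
The plan is to reduce this statement to the immediately preceding theorem, which already gives the singular distribution $e_k(x)$ whose action on any single element $H_k(u)\in\mcH_k$ inverts $\mcD_k$ against a Dirac mass. The trick is to treat $K_k(u,v)$ not as one function but as a $v$-parametrised family of elements of $\mcH_k$. Concretely, I would first note that for each fixed $v\in\mR^m$ the Gegenbauer expression $K_k(u,v)=|u|^{2k}|v|^{2k}C_k^{(m/2-1)}\bigl(\langle u,v\rangle/(|u||v|)\bigr)$ lies in $\mcH_k$ as a function of $u$ -- this is the classical characterisation of the reproducing kernel of $\mcH_k$ with respect to the Fischer pairing, cited from \cite{AAR}. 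Since $\mcD_k$ acts only in the variables $(x,u)$, treating $v$ as an inert parameter, the preceding theorem applies pointwise in $v$ and yields
$$\mcD_k E_k(x;u,v)\;=\;\mcD_k\bigl(e_k(x)\,K_k(u,v)\bigr)\;=\;\delta(x)\,K_k(u,v).$$

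The second step is to upgrade this to the genuine fundamental-solution property, namely that convolution with $E_k$ inverts $\mcD_k$ on $\mcH_k$-valued test functions. For $\phi(x,v)\in\mcC^{\infty}_c(\mR^m,\mcH_k)$ I would compute, using the Fischer pairing in the variable $v$,
$$\int_{\mR^m}\bigl[\mcD_k^{\,x}E_k(x-y;u,v),\,\phi(y,v)\bigr]_F\,dy\;=\;\bigl[K_k(u,v),\phi(x,v)\bigr]_F\;=\;\phi(x,u),$$
where the first equality uses the distributional identity above and the second is the defining reproducing property $[K_k(\cdot,v),H_k(\cdot)]_F=H_k(v)$ combined with the symmetry $K_k(u,v)=K_k(v,u)$ of the Gegenbauer kernel. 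This is precisely the requirement for $E_k$ to be a fundamental solution: the $\mcH_k$-valued Dirac mass $\delta(x)K_k(u,v)$ plays the role that $\delta(x)$ plays in the scalar theory, because it reproduces arbitrary $\mcH_k$-valued data via the Fischer pairing.

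I do not expect any genuine obstacle here, because essentially all of the analysis has been done in the proof of the preceding theorem (the distributional extension of $|x|^{\alpha-2k}\langle xux,2\f_1\rangle^k$, the cancellations using the lemmas on $\Delta_x^k$ acting on these polynomials, and the precise determination of the normalisation constant $(2k+m-4)\Gamma(m/2-1)/[4(4-m)\pi^{m/2}]$). The only point that needs a small amount of care is bookkeeping of variables: one must keep $u$ as the target variable of the $\mcH_k$-valued output, use $v$ as the integration/pairing variable of the reproducing kernel, and note that the reflection $R(\omega)$ in $e_k(x)$ acts on the $u$-slot only, so it commutes past the $K_k(u,v)$ factor in the trivial sense that it is already incorporated in $e_k(x)K_k(\omega u\omega,v)$.
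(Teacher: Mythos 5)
Your proposal is correct and follows essentially the same route as the paper: the authors likewise observe that $K_k(\cdot,v)\in\mcH_k$ for fixed $v$, apply the preceding theorem ($\mcD_k e_k(x)H_k(u)=\delta(x)H_k(u)$) to this family, and invoke the Fischer reproducing property to conclude. Your extra convolution computation makes explicit the ``fundamental solution'' property that the paper leaves implicit, but it is the same argument unpacked, not a different one.
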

\begin{remark}
Note that there are two special values for $k$: when $k=0$, we get the classical Laplace operator, i.e. $\mcD_0=\Delta_x$. The fundamental solution then becomes $E_k\brac{x}=c_0\norm{x}^{2-m}$, where the constant $c_0$ equals
\begin{equation*}
c_0=\frac{(m-4)\Gamma\brac{\frac{m}{2}-1}}{4(4-m)\pi^{\frac{m}{2}}}=-\frac{\Gamma\brac{\frac{m}{2}-1}}{4\pi^{\frac{m}{2}}}=\frac{\Gamma\brac{\frac{m}{2}}}{2(2-m)\pi^{\frac{m}{2}}}=\frac{1}{(2-m)A_m}.
\end{equation*}
For $k=1$, we get the generalised Maxwell operator, see e.g. \cite{ER}. The fundamental solution of $\mcD_1$ is given by $E_1\brac{x;u,v}=c_1\norm{x}^{2-m}K_1(\omega \, u\, \omega,v)$, where the constant $c_1$ is given by
\begin{equation*}
c_1=\frac{\brac{\frac{m}{2}-1}\Gamma\brac{\frac{m}{2}-1}}{2(4-m)\pi^{\frac{m}{2}}}=\frac{1}{(4-m)A_m},
\end{equation*}
which nicely corresponds to the one that was found in \cite{ER}.
\end{remark}
%%%%%%%%%%%%%%%%%%%%%%%%%%%%%%%%%%%%
%%%%%%%%%%%%%%%%%%%%%%%%%%%%%%%%%%%%
%%%%%%%%%%%%%%%%%%%%%%%%%%%%%%%%%%%%
%%%%%%%%%%%%%%%%%%%%%%%%%%%%%%%%%%%%
%%%%%%%%%%%%%%%%%%%%%%%%%%%%%%%%%%%%
%%%%%%%%%%%%%%%%%%%%%%%%%%%%%%%%%%%%

\section{Connection with the Rarita-Schwinger operator}
\label{RS}
% Analogue of Maxwell paper

Since the higher spin Laplace operator is the higher spin version of the Laplace operator one can wonder whether there is a connection with the Rarita-Schwinger operator, the simplest higher spin version of the Dirac operator. This question is inspired by the fact that 
\begin{equation}\label{refinement}
\ker_k\Delta_x \otimes \mS  \cong  \ker_{k}\D_x \oplus \ker_{k-1}\D_x\ ,  
\end{equation}
a relation which is known as the monogenic Fischer refinement (see \cite{BDS}): 
\begin{theorem}\label{F-refinement} 
If $H_k\brac{u} \in \mcH_k\brac{\mR^m,\mS}$ is a spinor-valued $k$-harmonic polynomial, we have that
\begin{equation*}
H_k\brac{u}=M_k\brac{u}+uM_{k-1}\brac{u},
\end{equation*}
with $M_j\brac{u}\in \mcM_j\brac{\mR^m,\mS}$. Both polynomials are uniquely determined by
\begin{align*}
M_{k-1}\brac{u}&=p_0H_k\brac{u}=-\frac{1}{2k+m-2}\D_uH_k\brac{u} \\
M_{k}\brac{u}&=p_1H_k\brac{u}=\brac{1+\frac{1}{2k+m-2}u\D_u}H_k\brac{u}.
\end{align*}
\end{theorem}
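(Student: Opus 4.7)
The plan is to verify that the two candidate projections $p_0, p_1$ given in the statement really do land in the spaces of monogenic polynomials of the correct homogeneity, that their sum (after multiplying $p_0 H_k$ by $u$) reproduces $H_k$, and that the decomposition is unique. Everything reduces to one computational input, namely the Weyl anticommutation relation
\[
\D_u u + u \D_u \;=\; -m - 2\mE_u,
\]
which follows directly from $e_ae_b + e_be_a = -2\delta_{ab}$ by a short index calculation, and from the fact that $\Delta_u = -\D_u^2$ (proposition-free consequence of the Clifford relations). In particular, since $H_k \in \mcH_k(\mR^m,\mS)$ means $\D_u^2 H_k = 0$, the polynomial $\D_u H_k$ is automatically monogenic of degree $k-1$, so $M_{k-1} := p_0 H_k$ lies in $\mcM_{k-1}(\mR^m,\mS)$ with no further work.

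The heart of the argument is checking that $M_k := p_1 H_k$ is monogenic. Applying $\D_u$ to $M_k$ gives
\[
\D_u M_k \;=\; \D_u H_k + \frac{1}{2k+m-2}\,\D_u\, u\, \D_u H_k.
\]
Use the anticommutation relation to rewrite $\D_u u = -m - 2\mE_u - u\D_u$ and act on $\D_u H_k$: this polynomial has degree $k-1$ and, by the observation above, is annihilated by $\D_u$, so the last term drops out and $\mE_u$ contributes the eigenvalue $k-1$. One is left with $\D_u u\, \D_u H_k = -(2k+m-2)\D_u H_k$, which cancels the first term and yields $\D_u M_k = 0$. The decomposition identity $H_k = M_k + u M_{k-1}$ is then immediate: substituting the explicit expressions for $p_0 H_k$ and $p_1 H_k$ makes the $u\D_u H_k$ contributions cancel.

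For uniqueness, assume $M_k + u M_{k-1} = 0$ with $M_j \in \mcM_j(\mR^m,\mS)$. Apply $\D_u$ and again use $\D_u u = -m - 2\mE_u - u\D_u$ on $M_{k-1}$, which has degree $k-1$ and is monogenic; this gives $-(2k+m-2)M_{k-1} = 0$, and since $2k+m-2 > 0$ we conclude $M_{k-1} = 0$ and then $M_k = 0$.

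There is no real obstacle here; the only thing to be careful about is bookkeeping of degrees when commuting $\D_u$ past $u$, since $\mE_u$ acts by different eigenvalues on $H_k$ and on $\D_u H_k$. In particular, one must apply the anticommutation relation to $\D_u H_k$ (degree $k-1$), not to $H_k$ itself, which is exactly what produces the factor $2k+m-2$ matching the normalisation constants in $p_0$ and $p_1$.
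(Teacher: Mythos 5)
Your argument is correct and complete. The paper itself does not supply a proof of this theorem---it simply states the monogenic Fischer refinement and cites \cite{BDS}---so there is no paper proof to compare against, but your derivation is the standard one: starting from the Clifford anticommutation identity $\D_u u + u\D_u = -m - 2\mE_u$ (which you correctly derive from $e_ae_b + e_be_a = -2\delta_{ab}$), you verify that $\D_u H_k$ is automatically monogenic of degree $k-1$ because $\D_u^2 H_k = -\Delta_u H_k = 0$, then check $\D_u M_k = 0$ by applying the anticommutation relation to the degree-$(k-1)$ polynomial $\D_u H_k$ so that $\mE_u$ contributes $k-1$ and the remaining $u\D_u^2 H_k$ term vanishes, yielding exactly the factor $-(2k+m-2)$ that cancels the normalisation. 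The reconstruction $M_k + uM_{k-1} = H_k$ is then an algebraic tautology, and uniqueness follows from the same computation applied to a hypothetical relation $M_k + uM_{k-1} = 0$. Your caution about applying the anticommutation relation to $\D_u H_k$ rather than to $H_k$ is exactly the right point to flag, since that is where the eigenvalue $k-1$ of $\mE_u$ enters and the constant $2k+m-2$ is produced.
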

\noindent 
To investigate this, we recall the construction of the Rarita-Schwinger operator, see also \cite{BSSVL1, EVL}. Theorem \ref{F-refinement} allows us to construct the Rarita-Schwinger operator, which is an operator acting on functions taking values in the space of monogenic functions (this space again plays the role of the higher spin fields): 
\begin{theorem}
Let $f(x;u)$ be a function in $\mcC^{\infty}\brac{\mR^m,\mcM_k}$, then the Rarita-Schwinger operator 
\begin{equation*}
\mcR_k:\mcC^{\infty}\brac{\mR^m,\mcM_k} \longrightarrow  \mcC^{\infty}\brac{\mR^m,\mcM_k},
\end{equation*} 
is the unique (up to a multiplicative constant) conformally invariant first-order differential operator defined as
\begin{equation*}
\mcR_kf(x;u):=\brac{1+\frac{u\D_u}{m+2k-2}}\D_xf(x;u).
\end{equation*}
\end{theorem}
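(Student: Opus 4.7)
The plan is to construct $\mcR_k$ in complete analogy with the construction of the higher spin Laplace operator $\mcD_k$ carried out in section \ref{Construction}. First I would establish well-definedness. For $f \in \mcC^{\infty}(\mR^m, \mcM_k)$ the Dirac operator $\D_x$ acts only in the $x$-variable, so $\D_x f$ remains homogeneous of degree $k$ in $u$ and harmonic in $u$ (because $\Delta_u = -\D_u^2$ commutes with $\D_x$, and monogenicity in $u$ implies harmonicity in $u$). However $\{\D_x,\D_u\} = -2\inner{\D_x,\D_u}$, so $\D_x f$ is generally not monogenic in $u$ and must be projected back. The monogenic Fischer refinement (theorem \ref{F-refinement}) decomposes the target as $\mcH_k(\mR^m,\mS) = \mcM_k(\mR^m,\mS) \oplus u\,\mcM_{k-1}(\mR^m,\mS)$, with projection onto the $\mcM_k$-summand given precisely by $p_1 = 1 + \frac{u\D_u}{m+2k-2}$. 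Composing $p_1$ with $\D_x$ yields the stated formula, so that by construction $\mcR_k f \in \mcC^{\infty}(\mR^m,\mcM_k)$.

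For conformal invariance I would mirror the strategy of section \ref{Construction} and produce a Lie algebra of first-order generalised symmetries isomorphic to $\so(1,m+1)$. Rotational equivariance $[\mcR_k,\,L_{ij}^x + L_{ij}^u - \frac{1}{2}e_{ij}] = 0$ is automatic, since $\D_x$ and $p_1$ are assembled from $\Spin(m)$-invariant constituents; commutation with translations $\D_{x_j}$ is trivial; and the Euler operator satisfies $\mcR_k \mE_x = (\mE_x + 1)\mcR_k$. The essential new ingredient is a spinor-valued analogue $\mcI$ of the harmonic inversion $\mcJ_R$, combining the Kelvin inversion $x \mapsto x/\norm{x}^2$ twisted by a Cauchy-kernel factor (of the form $\omega/\norm{x}^{m-1}$ with $\omega = x/\norm{x}$) with the reflection $u \mapsto \omega u\omega$ on the dummy variable; once one verifies the intertwining identity $\mcI\,\mcR_k\,\mcI = \pm\norm{x}^2 \mcR_k$, the first-order counterpart of $\mcJ_R \mcD_k \mcJ_R = \norm{x}^4 \mcD_k$, the operators $\mcI\,\D_{x_j}\,\mcI$ become generalised symmetries and close the symmetry algebra to $\so(1,m+1)$.

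Uniqueness up to a multiplicative constant then follows from a Fegan-type argument: any $\Spin(m)$-equivariant first-order differential operator $\mcC^{\infty}(\mR^m,\mcM_k) \to \mcC^{\infty}(\mR^m,\mcM_k)$ has principal symbol realised as a $\Spin(m)$-equivariant map $\mR^m \otimes \mcM_k \to \mcM_k$, and a Klimyk-style decomposition of $\mR^m \otimes \mcM_k$ into $\Spin(m)$-irreducibles shows that the summand with highest weight $(k+\frac{1}{2},\frac{1}{2},\ldots,\frac{1}{2})$ appears with multiplicity one; matching the conformal weights on source and target then pins down the lower-order correction uniquely, forcing the resulting operator to be a scalar multiple of $\mcR_k$. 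The step I expect to be most delicate is the verification of the special-conformal intertwining property: one must track carefully how the twistor-type correction $\frac{u\D_u}{m+2k-2}$ interacts with the left Clifford multiplications built into the spinor inversion, and in particular how the $\omega$-conjugation on $u$ commutes through $\D_x$. This is the direct parallel of the appendix computations for $\mcD_k$, and I would expect it to require a similar appendix-style treatment.
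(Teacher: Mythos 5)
The paper does not actually prove this theorem: it is stated as a quoted result, with the construction attributed to the references \cite{BSSVL1, EVL}. Your proposal correctly reconstructs what that construction looks like and it runs directly parallel to the section \ref{Construction} derivation of $\mcD_k$ -- well-definedness via the $\mcM_k$-projection $p_1$ of theorem \ref{F-refinement} applied after $\D_x$, conformal invariance via a spinor-valued Kelvin inversion $\mcI$ (with factor $x/\norm{x}^m$ and conjugation $u\mapsto\omega u\omega$) together with the first-order intertwining relation $\mcI\mcR_k\mcI = \pm\norm{x}^2\mcR_k$ that is the analogue of $\mcJ_R\mcD_k\mcJ_R = \norm{x}^4\mcD_k$, and uniqueness by a Fegan-type multiplicity-one argument on the tensor product $\mR^m\otimes\mcM_k$. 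All three ingredients are sound, and your identification of the special-conformal intertwining check as the delicate step is accurate; one should be aware that, compared to the $\mcD_k$ case, the computations involve left Clifford multiplications that do not commute with $\D_x$ and so must be handled via the anticommutator $\{\D_x,\D_u\} = -2\inner{\D_x,\D_u}$ and the operator algebra of $u\D_u$ acting on $\mcM_k$-valued functions, which is precisely the bookkeeping carried out in \cite{BSSVL1}.
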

\noindent
The kernel of the Rarita-Schwinger operator $\mcR_k$ is given by (see \cite{BSSVL1} for a detailed proof):
\begin{equation*}
\mathrm{ker}_{\ell}\mcR_k\cong \bigoplus_{i=0}^{k} \bigoplus_{j=0}^{k-i} \mcS_{\ell-i+j,k-i-j}.
\end{equation*}
Now, recalling theorem \ref{ker_decomp} and using standard tensor product decomposition rules, one has that
\begin{equation}
\ker_{\ell}\mcD_k \otimes \mS \cong \ker_{\ell}\mcR_k \oplus \ker_{\ell-1}\mcR_k\oplus\ker_{\ell}\mcR_{k-1} \oplus\ker_{\ell-1}\mcR_{k-1}.
\label{decomposition4}
\end{equation}
At this point, the last statement is merely an isomorphism between irreducible representations for $\so(m)$, but we will prove that this is actually an equality. This then generalises expression (\ref{refinement}) to our higher spin case. First of all, it is clear that $p_1+up_0=\mathrm{Id}$, so we can write $\mcD_k=\mcD_k\brac{p_1+up_0}$ and we will find expressions for each of these terms. On $\mcM_k$-valued functions, the square of the Rarita-Schwinger operator reads:
\begin{align*}
\mcR_k^2 =& -\Delta_x+\frac{4}{(2k+m-2)^2}u\D_x\inner{\D_u,\D_x} -\frac{2}{2k+m-2}\set{u,\D_x}\inner{\D_u,\D_x} \\
&-\frac{4}{(2k+m-2)^2}\norm{u}^2\inner{\D_u,\D_x}^2 \\
=&-\mcD_k+\frac{4}{2k+m-2}\brac{\frac{1}{2k+m-4}-\frac{1}{2k+m-2}}\norm{u}^2\inner{\D_u,\D_x}^2 \\
&+\frac{4}{(2k+m-2)^2}u\D_x\inner{\D_u,\D_x},
\end{align*}
which means that 
\begin{equation*}
\mcD_k p_1=\brac{-\mcR_k^2+\frac{4}{(2k+m-2)^2}\brac{\frac{2\norm{u}^2}{2k+m-4}\inner{\D_u,\D_x}+u\D_x}\inner{\D_u,\D_x}}p_1.
\end{equation*}
Using the fact that 
\begin{equation*}
\inner{\D_u,\D_x}\mcR_k=\brac{\frac{2k+m-4}{2k+m-2}\D_x-\frac{2}{2k+m-2}u\inner{\D_u,\D_x}}\inner{\D_u,\D_x},
\end{equation*}
which was shown (and exploited) in \cite{BSSVL1}, we can also write this expression as:
\begin{equation}\label{pi1}
\mcD_k p_1=\brac{-\mcR_k+\frac{4}{(2k+m-2)(2k+m-4)}u\inner{\D_u,\D_x}}\mcR_kp_1.
\end{equation}
The projection $p_1$ ensures that each of the operators appearing at the right-hand side is well-defined: we have that $\mcR_k$ (respectively $u\inner{\D_u,\D_x}$) maps $\mcM_k$-valued functions to $\mcM_k$-valued functions (respectively $u\mcM_{k-1}$-valued functions). Also, from equation (\ref{pi1}), it is clear that we have for $f\in\mcC^{\infty}\brac{\mR^m,\mcM_k}$:
\begin{equation*}
\mcR_kf=0 \implies \mcR_{k-1}\inner{\D_u,\D_x}f\ =\ 0\ =\ \mcD_kf\ .
\end{equation*}
Next, we calculate 
\begin{align*}
\mcD_k up_0 &=\brac{\Delta_x-\frac{4}{2k+m-2}\brac{\inner{u,\D_x}-\frac{\norm{u}^2}{2k+m-4}\inner{\D_u,\D_x}}\inner{\D_u,\D_x}}up_0 \\
&=-\brac{u\mcR_{k-1}+\frac{4}{2k+m-2}\brac{\inner{u,\D_x}-\frac{\norm{u}^2}{2k+m-4}\inner{\D_u,\D_x}}}\mcR_{k-1}p_0,
\end{align*}
The projection $p_0$ again ensures that each of the operators appearing at the right-hand side is well-defined. 
The first term between brackets (respectively the second operator) maps $\mcM_{k-1}$-valued functions to $u\mcM_{k-1}$-valued functions (respectively $\mcM_k$-valued functions). Also, from this equation it is clear that for $f(x,u) =uf_0$ with $f_0 \in\ker\mcR_{k-1}$ one has that
\begin{equation*}
\mcD_k f = \mcD_k up_0(uf_0)=0.
\end{equation*}
The higher spin Laplace operator can thus be written as a combination of the following operators: 
\begin{align*}
-\mcR_k^2p_1&:\mcH_k\otimes \mS \longrightarrow \mcM_k \\
\frac{4}{(2k+m-2)(2k+m-4)}u\inner{\D_u,\D_x}\mcR_kp_1&:\mcH_k\otimes \mS \longrightarrow u\mcM_{k-1} \\
-u\mcR_{k-1}^2p_0&:\mcH_k\otimes \mS \longrightarrow u\mcM_{k-1} \\
-\frac{4}{2k+m-2}\brac{\inner{u,\D_x}-\frac{\norm{u}^2}{2k+m-4}\inner{\D_u,\D_x}}\mcR_{k-1}p_0&:\mcH_k\otimes \mS \longrightarrow \mcM_k \\
\end{align*}
Let us then construct the explicit embedding maps for decomposition (\ref{decomposition4}), turning the isomorphism into an equality. Recall that if $f \in \mcC^{\infty}\brac{\mR^m,\mcM_k} \cap \ker \mcR_k$, we have that $\mcD_kf=0$, which means that $\ker_{\ell}\mcR_k$ can be embedded with the identity map. Another straightforward embedding is the embedding of the space $\ker_{\ell}\mcR_{k-1}$, since $\mcD_1u f=0$ for functions $f(x)\in\ker_{\ell}\mcR_{k-1}$. In order to embed the space $\mathrm{ker}_{\ell-1}\mcR_k$, we note that one needs an embedding map which is homogeneous of degree one in $x$. In view of the fact that $\mcD_k f = \mcD_k p_1 f$, we easily find: 
\begin{equation*}
f \in \ker_{\ell-1}\mcR_k\ \Rightarrow\ \mcD_k x f = x\mcD_kf+2\brac{\D_x-\frac{2}{2k+m-2}u\inner{\D_u,\D_x}}f=0\ .
\end{equation*}
It is thus clear that a multiplication with $x$ does the job. For the final space $\ker_{\ell-1}\mcR_{k-1}$, we need an embedding map which is homogeneous of degree $(1,1)$ in $(x,u)$. To do so, we will make use of another conformally invariant operator, which is also known as a twistor operator and is defined as:
\begin{equation*}
p_1\pi_k\inner{u,\D_x}:\mcC^{\infty}\brac{\mR^m,\mcM_{k-1}}\longrightarrow \mcC^{\infty}\brac{\mR^m,\mcM_k}.
\end{equation*}
This twistor operator is explicitly given by, see e.g. \cite{Eelbode2} for more information:
\begin{equation*}
p_1\pi_k\inner{u,\D_x}:=\inner{u,\D_x}+\frac{1}{2k+m-2}\brac{u\D_x-\norm{u}^2\inner{\D_u,\D_x}}
\end{equation*}
The following result provides us with the desired combination for the embedding of $\ker_{\ell-1}\mcR_{k-1}$ into $\ker_{\ell}\mcD_{k}$: 
\begin{lemma}
The space $\ker_{\ell-1}\mcR_{k-1}$ can be embedded in the kernel of the higher spin Laplace operator (acting on spinor-valued functions) by means of: 
\begin{equation*}
\mcJ_R\Delta_x\mcJ_R\,p_1\pi_k\inner{u,\D_x}:\ker_{\ell-1}\mcR_{k-1} \longhookrightarrow \ker_{\ell}\mcD_k \otimes \mS.
\end{equation*}
Here $p_1\pi_k\inner{u,\D_x}$ is a twistor operator acting on functions taking values in the space of monogenics of degree $k-1$.
\end{lemma}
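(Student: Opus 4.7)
The plan is to verify two things: (i) the composition lands in $\ker_\ell\mcD_k\otimes\mS$, and (ii) it is injective on $\ker_{\ell-1}\mcR_{k-1}$. I begin with a book-keeping of degrees. A function $f\in\ker_{\ell-1}\mcR_{k-1}$ is $(\ell-1)$-homogeneous in $x$ and takes values in $\mcM_{k-1}\subset\mcH_{k-1}\otimes\mS$. The twistor $p_1\pi_k\inner{u,\D_x}$ drops the $x$-degree by one and lifts the target to $\mcM_k\subset\mcH_k\otimes\mS$, after which $\mcJ_R\Delta_x\mcJ_R$ raises the $x$-degree by two. The overall output is thus $\ell$-homogeneous in $x$ and takes values in $\mcH_k\otimes\mS$, as required.

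For the inclusion in $\ker\mcD_k$ I first note that $\mcJ_R\Delta_x\mcJ_R$ preserves $\ker\mcD_k$: this follows from the conformal invariance identity $\mcJ_R\mcD_k\mcJ_R=\norm{x}^4\mcD_k$ established in Proposition \ref{special conformal}, combined with the trivial $[\Delta_x,\mcD_k]=0$. It therefore suffices to show that $g:=p_1\pi_k\inner{u,\D_x}f$ satisfies $\mcD_k g=0$. Since $g$ is $\mcM_k$-valued, the decomposition $\mathrm{Id}=p_1+up_0$ gives $p_1g=g$ and $up_0g=0$, so $\mcD_k g=\mcD_k p_1 g$, and formula (\ref{pi1}) reduces the problem to verifying that $\mcR_k g=0$.

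The crux of the argument is therefore an intertwining relation of the form
\[
\mcR_k\,p_1\pi_k\inner{u,\D_x}\ =\ c\,p_1\pi_k\inner{u,\D_x}\,\mcR_{k-1},
\]
on $\mcC^\infty(\mR^m,\mcM_{k-1})$, for some constant $c$ depending on $k$ and $m$. Both sides are second-order conformally invariant operators between isomorphic $\so(m)$-modules, and representation-theoretic considerations (Schur's lemma applied componentwise) force them to be proportional, possibly up to a correction term which itself annihilates $\ker\mcR_{k-1}$. The constant $c$ can then be pinned down by evaluation on the highest-weight vector, or alternatively by a direct expansion using the explicit expressions
\[
\mcR_k=\brac{1+\frac{u\D_u}{2k+m-2}}\D_x,\qquad p_1\pi_k\inner{u,\D_x}=\inner{u,\D_x}+\frac{u\D_x-\norm{u}^2\inner{\D_u,\D_x}}{2k+m-2}.
\]
Once this identity is available, applying it to $f\in\ker\mcR_{k-1}$ yields $\mcR_k g=0$, hence $\mcD_k g=0$ and therefore $\mcD_k\bigl(\mcJ_R\Delta_x\mcJ_R g\bigr)=0$.

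Finally, injectivity follows from $\so(m)$-equivariance of the whole construction: since $\ker_{\ell-1}\mcR_{k-1}$ is a multiplicity-free sum of irreducibles, Schur's lemma forces the composition to be either zero or injective on each summand. The non-vanishing can be verified by computing the image of a highest weight vector and matching it against the corresponding piece of $\ker_\ell\mcD_k\otimes\mS$ predicted by the decomposition (\ref{decomposition4}). I expect the principal technical obstacle to be the intertwining identity of the third paragraph and the verification that the overall constant is non-zero; both amount to careful manipulation of the explicit first-order operators involved, but nothing conceptually deeper should be needed.
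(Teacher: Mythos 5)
Your proposal follows essentially the same skeleton as the paper's proof: reduce to showing that $p_1\pi_k\inner{u,\D_x}$ maps $\ker\mcR_{k-1}$ into $\ker\mcR_k$, use the fact that $\ker\mcR_k\subset\ker\mcD_k$ (which follows from formula (\ref{pi1})), and then apply $\mcJ_R\Delta_x\mcJ_R$ to restore the $x$-homogeneity; the degree bookkeeping and this reduction are all correct. However, the representation-theoretic shortcut you propose for the key intertwining step has a gap. A composition of two conformally invariant operators is in general \emph{not} conformally invariant---the conformal weight of the intermediate bundle must line up---so it is not automatic that both $\mcR_k\,p_1\pi_k\inner{u,\D_x}$ and $p_1\pi_k\inner{u,\D_x}\,\mcR_{k-1}$ are invariant, and you do not check this. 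Moreover, ``Schur's lemma applied componentwise'' does not determine second-order $\so(m)$-equivariant operators between $\mcC^\infty(\mR^m,\mcM_{k-1})$ and $\mcC^\infty(\mR^m,\mcM_k)$: these function spaces decompose into infinitely many $\so(m)$-irreducibles, many with multiplicities, so $\so(m)$-equivariance alone does not pin down such an operator up to scale. The correct uniqueness statement concerns homomorphisms of generalized Verma modules (equivalently, conformally invariant operators between fixed homogeneous bundles), and invoking it responsibly requires checking its hypotheses, not merely citing Schur.

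As you yourself anticipate, the intertwining identity---together with whatever correction term arises, in analogy with the dual-twistor relation quoted from \cite{BSSVL1}---must in the end be established by direct computation using the explicit first-order formulas for $\mcR_k$ and $p_1\pi_k\inner{u,\D_x}$; that is precisely the ``tedious but straightforward calculation'' the paper invokes, and until it is carried out the argument is not complete. On the other hand, your remark on injectivity, via equivariance into the multiplicity-free target predicted by (\ref{decomposition4}) anchored by evaluation on a highest-weight vector, is a sensible addition that the paper leaves implicit.
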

\begin{proof}
A tedious but straightforward calculation shows that $p_1\pi_k\inner{u,\D_x}$ maps $\ker_{\ell-1}\mcR_{k-1}$ to $\ker_{\ell-2}\mcR_k$. Since a solution of $\mcR_k$ is automatically a solution of $\mcD_k$, we only need an operator of homogeneity $(2,0)$ in $(x,u)$ to fix the homogeneity in $x$. It is clear from section \ref{PolSol} that $\mcJ_R\Delta_x\mcJ_R$ is the desired operator. 
\end{proof}
\noindent
Let us then restate the conclusion: 
\begin{theorem}
The space $\ker_\ell\mcD_k \otimes \mS$ of $\ell$-homogeneous polynomial null solutions for the higher spin Laplace operator $\mcD_k$ acting on the space $\mcC^\infty(\mR^m,\mcH_k \otimes \mS)$ decomposes as follows: 
\begin{align*}
\ker_{\ell}\mcD_k \otimes \mS &= \ker_{\ell}\mcR_k \oplus x\ker_{\ell-1}\mcR_k\oplus u\ker_{\ell}\mcR_{k-1} \\
&\oplus\big(\mcJ_R\Delta_x\mcJ_R\,p_1\pi_k\inner{u,\D_x}\big)\ker_{\ell-1}\mcR_{k-1}.
\end{align*}
\end{theorem}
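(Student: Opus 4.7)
The plan is to use the abstract $\so(m)$-isomorphism (\ref{decomposition4}) to fix the multiplicities on the right, and to promote it to an equality of subspaces by checking (a) that each of the four explicit maps in the statement lands in $\ker_{\ell}\mcD_k \otimes \mS$, (b) that each is injective, and (c) that the four resulting images are independent and together exhaust $\ker_{\ell}\mcD_k \otimes \mS$. Step (c) will be a consequence of (a), (b) and a dimension count against (\ref{decomposition4}).

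For step (a), the four well-definedness statements are essentially already contained in the computations preceding the theorem. The identity map on $\ker_{\ell}\mcR_k$ lands in $\ker\mcD_k$ because $\mcD_k p_1 = \bigl(-\mcR_k + \tfrac{4}{(2k+m-2)(2k+m-4)} u\inner{\D_u,\D_x}\bigr)\mcR_k p_1$ forces $\mcD_k f = 0$ as soon as $\mcR_k f = 0$. The companion formula for $\mcD_k u p_0$ gives $\mcD_k(u f_0) = 0$ whenever $f_0 \in \ker\mcR_{k-1}$, and the displayed identity $\mcD_k xf = x\mcD_k f + 2\bigl(\D_x - \tfrac{2}{2k+m-2}u\inner{\D_u,\D_x}\bigr)f$ kills $xf$ when $f \in \ker_{\ell-1}\mcR_k$. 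Finally, the preceding lemma records that $p_1\pi_k\inner{u,\D_x}$ sends $\ker_{\ell-1}\mcR_{k-1}$ into $\ker_{\ell-2}\mcR_k \subset \ker_{\ell-2}\mcD_k$; composing with $\mcJ_R\Delta_x\mcJ_R$, which preserves $\ker\mcD_k$ by $[\Delta_x,\mcD_k]=0$ and (\ref{inversion_crucial}) while raising $x$-homogeneity by two, lands in $\ker_{\ell}\mcD_k \otimes \mS$.

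For step (b), the first three maps are injective for trivial reasons: the identity, and left Clifford multiplication by the non-zero variable $x$ or $u$, all have trivial kernel on polynomial function spaces. The fourth map is $\so(m)$-equivariant, so by Schur's lemma it is either zero or injective on each irreducible simplicial-monogenic summand $\mcS_{\ell-1-i+j,k-1-i-j}$ of $\ker_{\ell-1}\mcR_{k-1}$; non-vanishing on each such summand is the content of the ``tedious but straightforward'' computation indicated in the preceding lemma, which can be tested on the highest-weight vectors analogous to $v_{\ell-1-i+j,k-1-i-j}$ from Section \ref{CA}. Combined with the fact that $\mcJ_R$ is an involution and $\Delta_x$ is injective on the non-harmonic pieces of the resulting Fischer decomposition, this delivers injectivity of the full composite.

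For step (c), Schur's lemma now identifies each of the four embedded images $V_a \subset \ker_{\ell}\mcD_k \otimes \mS$ with the corresponding Rarita-Schwinger kernel, so by (\ref{decomposition4}) one has $\sum_a \dim V_a = \dim(\ker_{\ell}\mcD_k \otimes \mS)$. The trivial inequalities $\dim(V_1+V_2+V_3+V_4) \leq \sum_a \dim V_a$ (with equality iff the sum is direct) and $V_1+V_2+V_3+V_4 \subseteq \ker_{\ell}\mcD_k \otimes \mS$ then force both to be equalities, yielding the direct-sum property and the spanning simultaneously. The one genuinely non-routine piece of the strategy is the non-vanishing in step (b) of $p_1\pi_k\inner{u,\D_x}$ on each irreducible summand of $\ker_{\ell-1}\mcR_{k-1}$; everything else is bookkeeping against (\ref{decomposition4}).
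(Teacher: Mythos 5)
Your steps (a) and (b) follow the paper's route exactly: the same four embedding operators, justified by the same identities $\mcD_k p_1 = (-\mcR_k+\cdots)\mcR_k p_1$, $\mcD_k u p_0 = -(u\mcR_{k-1}+\cdots)\mcR_{k-1}p_0$, $\mcD_k(xf) = x\mcD_k f + 2(\cdots)f$, and the lemma on the twistor composition. Deferring the non-vanishing check in (b) is in the spirit of the source, although the clause about $\mcJ_R$ being an involution and $\Delta_x$ being injective on the ``non-harmonic pieces'' is too loose to establish injectivity of $\mcJ_R\Delta_x\mcJ_R$ on the relevant image (recall that $\mcJ_R\Delta_x\mcJ_R$ annihilates $\mcJ_R$-transforms of $x$-harmonics, so injectivity there is a genuine claim, not a triviality). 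The serious gap, however, is in step (c). You record two facts: $\dim(V_1+V_2+V_3+V_4)\leq\sum_a\dim V_a=\dim(\ker_\ell\mcD_k\otimes\mS)$, and $V_1+V_2+V_3+V_4\subseteq\ker_\ell\mcD_k\otimes\mS$. Both bound $\dim(V_1+\cdots+V_4)$ from \emph{above} by the same number; neither supplies a lower bound. Nothing ``forces both to be equalities'' --- it is entirely consistent with both facts that the sum is not direct and simultaneously fails to fill $\ker_\ell\mcD_k\otimes\mS$. To finish, you must independently prove either directness or spanning.

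And this is not a formality, because the right-hand side of (\ref{decomposition4}) is not multiplicity-free as an $\so(m)$-module. For instance, $\mcS_{\ell-1,k-1}$ occurs in $\ker_\ell\mcR_k$ (take $(i,j)=(1,0)$ in $\ker_\ell\mcR_k\cong\bigoplus_{i,j}\mcS_{\ell-i+j,k-i-j}$) and again in $\ker_{\ell-1}\mcR_{k-1}$ (take $(i,j)=(0,0)$). On that multiplicity-two isotypic component, Schur's lemma is silent about whether the copy embedded by the identity is transverse to the copy embedded by $\mcJ_R\Delta_x\mcJ_R\,p_1\pi_k\inner{u,\D_x}$; one would have to check linear independence there, and on every other repeated weight, e.g.\ on highest-weight vectors or by applying a separating operator such as $\inner{\D_u,\D_x}$. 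In fairness, the paper itself passes over this point without comment --- it constructs the four embeddings and then states the theorem --- so your instinct to add a closing argument is sound, but the dimension count as written does not supply the missing step.
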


%%%%%%%%%%%%%%%%%%%%%%%%%%%%%%%%%%%%
%%%%%%%%%%%%%%%%%%%%%%%%%%%%%%%%%%%%
%%%%%%%%%%%%%%%%%%%%%%%%%%%%%%%%%%%%
%%%%%%%%%%%%%%%%%%%%%%%%%%%%%%%%%%%%
%%%%%%%%%%%%%%%%%%%%%%%%%%%%%%%%%%%%
%%%%%%%%%%%%%%%%%%%%%%%%%%%%%%%%%%%%
%%%%%%%%%%%%%%%%%%%%%%%%%%%%%%%%%%%%
%%%%%%%%%%%%%%%%%%%%%%%%%%%%%%%%%%%%
%%%%%%%%%%%%%%%%%%%%%%%%%%%%%%%%%%%%
%%%%%%%%%%%%%%%%%%%%%%%%%%%%%%%%%%%%
%%%%%%%%%%%%%%%%%%%%%%%%%%%%%%%%%%%%
%%%%%%%%%%%%%%%%%%%%%%%%%%%%%%%%%%%%

\appendix
\section{Explicit proof of conformal invariance and ellipticity}
\label{ConfEll}
In this appendix, we first consider the conformal invariance in more detail. The following technical result was already mentioned in expression (\ref{special_transfo}): 
\begin{proposition}\label{propA1}
The following property holds for all $1 \leq j \leq m$:
\begin{equation*}
\mathcal{J}_R\D_{x_j}\mathcal{J}_R = 2\inner{u,x}\D_{u_j}-2u_j\inner{x,\D_u}+\norm{x}^2\D_{x_j}-x_j\brac{2\mE_x+m-2}
\end{equation*}
\end{proposition}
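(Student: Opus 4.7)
The approach is direct chain-rule computation: apply the composite operator to a generic smooth $\mcH_k$-valued function $f(x,u)$ and match the output term by term with the claimed formula. The computation has three natural stages (inner $\mcJ_R$, derivative $\partial_{x_j}$, outer $\mcJ_R$), and the point is that all negative powers of $|x|$ produced along the way cancel against the conformal-weight prefactors $|x|^{2-m}$.

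First, use the Clifford identity $xux = |x|^2 u - 2\langle u,x\rangle x$ to replace the Clifford product by a purely scalar substitution, so that $\mcJ_R[f](x,u) = |x|^{2-m} f(y,v)$ with $y_i = x_i/|x|^2$ and $v_i = u_i - 2\langle u,x\rangle x_i/|x|^2$. Then apply $\partial_{x_j}$ via the Leibniz and chain rules, for which one only needs
\begin{equation*}
\partial_{x_j}|x|^{2-m} = (2-m)x_j|x|^{-m},\qquad \partial_{x_j}y_i = \frac{\delta_{ij}}{|x|^2} - \frac{2x_ix_j}{|x|^4},
\end{equation*}
\begin{equation*}
\partial_{x_j}v_i = -\frac{2u_jx_i}{|x|^2} - \frac{2\langle u,x\rangle\delta_{ij}}{|x|^2} + \frac{4\langle u,x\rangle x_ix_j}{|x|^4}.
\end{equation*}
Collect the result as an explicit linear combination of $f(y,v)$, $(\partial_{y_i}f)(y,v)$ and $(\partial_{v_i}f)(y,v)$ with rational coefficients in $(x,u)$.

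Second, apply $\mcJ_R$ once more and exploit the involutive property $\mcJ_R^2 = \mathrm{id}$: the inner arguments $(y,v)$ collapse back to $(x,u)$, so that $(\partial_{y_i}f)(y,v)$ becomes $\partial_{x_i}f$ and $(\partial_{v_i}f)(y,v)$ becomes $\partial_{u_i}f$, both evaluated at $(x,u)$. The external rational coefficients undergo the substitution $x\mapsto y$, $u\mapsto v$ and are multiplied by $|x|^{2-m}$. The two auxiliary identities that streamline this step are $|y| = 1/|x|$ and $\langle v,y\rangle = -\langle u,x\rangle/|x|^2$, both immediate from the definitions. Using them, every power of $|x|$ cancels, leaving a polynomial expression in $(x,u)$.

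The only obstacle is combinatorial bookkeeping: one has to carefully distinguish external coefficients (subject to the outer substitution) from arguments of $f$ (which collapse under $\mcJ_R^2 = \mathrm{id}$), and to cleanly group a moderate number of cross terms. As sanity checks for the final grouping, the $-(m-2)x_j$ piece of the right-hand side arises entirely from $\partial_{x_j}|x|^{2-m}$; the $-2x_j\mE_x$ piece arises from the $-2x_ix_j/|x|^4$ contribution to $\partial_{x_j}y_i$; and the two $u$-derivative pieces $2\langle u,x\rangle\partial_{u_j}$ and $-2u_j\langle x,\partial_u\rangle$ arise from the three-term expansion of $\partial_{x_j}v_i$ after the second $\mcJ_R$-pullback, with the $\langle u,x\rangle x_j/|x|^2$-type contributions cancelling to ensure no $|x|^{-2}$ remainders.
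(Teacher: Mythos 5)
Your proposal is correct, and it follows the same overall strategy as the paper: apply the composite operator to a generic function, expand via the Leibniz and chain rules, apply $\mcJ_R$ a second time using $\mcJ_R^2 = \mathrm{id}$ to collapse the inner arguments, and simplify. The one noteworthy difference is in how the $x$-dependence is handled. The paper restricts to a polynomial $P_q$ of fixed homogeneity $q$ in $x$ so that the prefactor $|x|^{-2q}$ can be pulled out, turning the first argument back into plain $x$; this lets them immediately recognise the terms not involving the dummy variable as the classical Kelvin-inversion identity $\mcJ\D_{x_j}\mcJ = |x|^2\D_{x_j} - x_j(2\mE_x + m - 2)$, leaving only the $\dot v$-contribution to be computed by hand. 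You instead keep a general $f$ and track $\D_{x_j}y_i$ explicitly alongside $\D_{x_j}v_i$; this is self-contained and avoids both the homogeneity reduction and the appeal to the $\mC$-valued case, at the cost of a few extra cross terms. Your identities $|y| = 1/|x|$ and $\langle v,y\rangle = -\langle u,x\rangle/|x|^2$ and your observation that the $\langle u,x\rangle x_j/|x|^2$ contributions cancel are exactly the checks needed to see that the final result is polynomial, and they are correct.
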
 
\begin{proof}
Suppose $P_q(x,u)\in\mcP_{q}\brac{\mR^m,\mcH_k}$, i.e. $\mE_xP_q=qP_q$. The action of $\mathcal{J}_R$ on $P_q$ is given by:
\begin{equation*}
\mathcal{J}_R P_q=\norm{x}^{2-m}P_q\brac{\frac{x}{\norm{x}^2},\frac{xux}{\norm{x}^2}} :=\norm{x}^{2-m-2q}P_q\brac{x,v(x)}\ ,
\end{equation*}
where $v(x)$ is defined through the second equality. We thus get: 
\begin{equation*}
\begin{split}
\partial_{x_j}\mathcal{J}_R P_q=&\norm{x}^{2-m-2q}\brac{\dot{\partial_{x_j}}P_q\brac{\dot{x},v(x)}+\dot{\partial_{x_j}}P_q\brac{x,\dot{v}(x)}}\\ 
&+ (2-m-2q)x_j\norm{x}^{-m-2q}P_q.
\end{split}
\end{equation*}
Here, the dot on the argument of the function and partial derivative is to point out that the partial derivative only acts on that argument. The first and the last term are the same as for the $\mC$-valued case (using the inversion $\mcJ$), which leads to the generalised symmetry
\begin{equation*}
\mcJ \partial_{x_j} \mcJ = \norm{x}^2\partial_{x_j} - x_j(2\mE_x+ m - 2). 
\end{equation*}
The remaining term gives:
\begin{align*}
&\mathcal{J}_R\brac{\norm{x}^{2-m-2q} \dot{\partial_{x_j}}P_q\brac{x,\dot{v}(x)}} \\
&=\mathcal{J}_R\brac{\norm{x}^{2-m-2q} \sum_{i=1}^m\frac{\partial v_i}{\partial_{x_j}} \dot{\partial_{v_i}}P_q\brac{x,\dot{v}(x)}} \\
&=\mathcal{J}_R\brac{\norm{x}^{2-m-2q}\sum_{i=1}^m\frac{\partial}{\partial_{x_j}}\brac{u_i-2\frac{\inner{x,u}}{\norm{x}^2}x_i} \dot{\partial_{v_i}}P_q\brac{x,\dot{v}(x)}} \\
&=-2\mathcal{J}_R\brac{\norm{x}^{2-m-2q}\sum_{i=1}^m\left(\frac{u_j x_i}{\norm{x}^2} +\frac{\inner{x,u}\delta_{ij}}{\norm{x}^2}- 2\frac{\inner{x,u}x_ix_j}{\norm{x}^4}}\dot{\partial_{v_i}}P_q\brac{x,\dot{v}(x)}\right) \\
&=2\left(-\brac{u_j-2\frac{\inner{x,u}x_j}{\norm{x}^2}}\inner{x,\D_u}+\inner{x,u}\partial_{u_j}-2\frac{\inner{x,u}\inner{x,\D_u}}{\norm{x}^2}\right)
\end{align*}
Simplifying the last term completes the proof.
\end{proof}
\noindent
We then arrive at the main proposition, stating that the special conformal transformations are generalised symmetries of the higher spin Laplace operator:
\begin{proposition}
The special conformal transformations
\begin{equation*}
\mathcal{J}_R\partial_{x_j}\mathcal{J}_R := 2\inner{u,x}\partial_{u_j}-2u_j\inner{x,\D_u}+\norm{x}^2\partial_{x_j}-x_j\brac{2\mE_x+m-2},
 \end{equation*} 
 with $j\in\set{1,\cdots,m}$ are generalised symmetries of the higher spin Laplace operator. 
 \label{special conformal2}
\end{proposition}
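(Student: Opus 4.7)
The goal is to establish the generalised symmetry $\mcD_k \, \mcJ_R \D_{x_j} \mcJ_R = \delta_2 \, \mcD_k$ for some first-order operator $\delta_2$. The strategy hinges on the Kelvin-type identity
\begin{equation*}
\mcJ_R \mcD_k \mcJ_R = \nx^4 \mcD_k,
\end{equation*}
which is the higher spin analogue of the classical relation $\mcJ \Delta_x \mcJ = \nx^4 \Delta_x$ for the scalar Laplacian. Once this identity is in place, the remainder of the argument is formal: it uses only the involution property $\mcJ_R^2 = \mathds{1}$, the translation invariance $[\mcD_k, \D_{x_j}] = 0$, and the elementary commutator $[\D_{x_j}, \nx^4] = 4 x_j \nx^2$.

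For the Kelvin-type identity itself, I would apply both sides to an arbitrary $f \in \mcC^{\infty}(\mR^m, \mcH_k)$ and expand $\mcD_k$ according to its explicit definition. Since $\mcJ_R$ factorises as the scalar Kelvin inversion $\mcJ$ in $x$ composed with the reflection $R(\omega): u \mapsto \omega u \omega$ on the dummy variable, one can leverage the scalar identity $\mcJ \Delta_x \mcJ = \nx^4 \Delta_x$ and then track how $R(\omega)$ intertwines with the remaining operators $\inner{u, \D_x}$ and $\inner{\D_u, \D_x}$ that build $\mcD_k$. A more efficient route is to invoke $\Spin(m)$-equivariance of $\mcD_k$ to reduce the verification to a single highest weight vector, or to test on the one-parameter family $\nx^{\alpha-2k}\inner{xux, 2\f_1}^k$ already handled in Section \ref{FunSol}, whose transformation under $\mcJ_R$ is essentially known by construction.

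Once the identity is available, the symmetry follows quickly. From $\mcJ_R^2 = \mathds{1}$ one obtains the companion forms $\mcD_k \mcJ_R = \mcJ_R \nx^4 \mcD_k$ and $\nx^4 \mcD_k \mcJ_R = \mcJ_R \mcD_k$. Using the first of these together with $[\mcD_k, \D_{x_j}] = 0$,
\begin{align*}
\mcD_k \mcJ_R \D_{x_j} \mcJ_R &= \mcJ_R \nx^4 \D_{x_j} \mcD_k \mcJ_R = \mcJ_R \bigl(\D_{x_j} \nx^4 - 4 x_j \nx^2\bigr) \mcD_k \mcJ_R.
\end{align*}
The first piece $\mcJ_R \D_{x_j} \nx^4 \mcD_k \mcJ_R$ collapses via the second companion form to $\mcJ_R \D_{x_j} \mcJ_R \mcD_k$. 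For the second piece, conjugation of a pure $x$-multiplier satisfies $\mcJ_R \, g(x) \, \mcJ_R = g(x/\nx^2)$, so that $\mcJ_R x_j \nx^2 \mcJ_R = x_j/\nx^4$, and inserting $\mcJ_R^2 = \mathds{1}$ together with one more use of the Kelvin-type identity yields $\mcJ_R x_j \nx^2 \mcD_k \mcJ_R = (x_j/\nx^4)\nx^4 \mcD_k = x_j \mcD_k$. Assembling the two pieces gives
\begin{equation*}
\mcD_k \, \mcJ_R \D_{x_j} \mcJ_R = \bigl(\mcJ_R \D_{x_j} \mcJ_R - 4 x_j\bigr) \, \mcD_k,
\end{equation*}
exhibiting $\mcJ_R \D_{x_j} \mcJ_R$ as a generalised symmetry with companion $\delta_2 = \mcJ_R \D_{x_j} \mcJ_R - 4 x_j$.

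The main obstacle is clearly the Kelvin-type identity. The reflection $R(\omega)$ entangles derivatives in $x$ and $u$ in a way that has no scalar counterpart, and the cancellation between $\mcJ_R \Delta_x \mcJ_R$ and the $\mcJ_R$-conjugate of the twistor composition $\pi_k \inner{u, \D_x}\inner{\D_u, \D_x}$ is precisely what pins down the coefficient $-\frac{4}{2k+m-2}$ in the definition of $\mcD_k$. Organising this bookkeeping without descending into a brute force calculation, for instance by restricting to a highest weight vector or to the one-parameter test family mentioned above, is the only genuine difficulty.
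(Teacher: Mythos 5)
Your formal derivation of the generalised symmetry from the Kelvin-type identity $\mcJ_R\mcD_k\mcJ_R = \nx^4\mcD_k$ is correct and cleanly executed: the three ingredients you isolate ($\mcJ_R^2 = \mathds{1}$, $[\mcD_k,\D_{x_j}] = 0$, and $[\D_{x_j},\nx^4] = 4x_j\nx^2$) do combine exactly as claimed to yield $\mcD_k\,\mcJ_R\D_{x_j}\mcJ_R = (\mcJ_R\D_{x_j}\mcJ_R - 4x_j)\,\mcD_k$, which matches the paper's conclusion.

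The gap is that you have inverted the logical flow of the paper and left the new cornerstone unproved. In the paper, the relation (\ref{inversion_crucial}) is presented in Proposition~\ref{special conformal} as a \emph{byproduct} of Proposition~\ref{special conformal2}, and the appendix proves the latter directly by computing the three commutators $[\Delta_x, C_j]$, $[\inner{u,\D_x}\inner{\D_u,\D_x}, C_j]$, $[\norm{u}^2\inner{\D_u,\D_x}^2, C_j]$ and observing that the unwanted terms cancel once $\mE_u = k$ is used. Your proposal correctly identifies the Kelvin-type identity as the crux, but none of the three routes you sketch amounts to a proof of it. Reducing to a highest weight vector is inapplicable because the statement is an \emph{operator} identity on $\mcC^\infty(\mR^m,\mcH_k)$, not an equation between fixed vectors; testing on the one-parameter family $\nx^{\alpha-2k}\inner{xux,2\f_1}^k$ likewise cannot pin down a differential operator; and the ``leverage the scalar identity $\mcJ\Delta_x\mcJ = \nx^4\Delta_x$ and track $R(\omega)$'' route conceals the real work, since $R(\omega)$ does \emph{not} commute with $\D_{x_j}$ (as $\omega = x/\nx$). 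Indeed, the paper records in Section~\ref{PolSol} that $\mcJ_R\Delta_x\mcJ_R = \nx^4\Delta_x + 4(2k{+}m{-}4)\pi_k\inner{u,x}\inner{x,\D_u} + 4\nx^2(\cdots)$ with two additional lower-order correction terms; showing that these cancel against the $\mcJ_R$-conjugates of the twistor pieces of $\mcD_k$ is, after reorganisation, essentially the same computation as the three appendix lemmas. So while your plan is a legitimate alternative organisation of the proof, as it stands it trades the paper's direct commutator computation for an unproved assertion of equal or greater difficulty.
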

\noindent
Using the fact that 
\begin{equation*}
\comm{AB,CD}=A\comm{B,C}D+AC\comm{B,D}+\comm{A,C}DB+C\comm{A,D}B,
\end{equation*}
we first can prove the following technical lemmas:
\begin{lemma}
For all $1\leq j\leq m$, we have:
\begin{equation*}
\comm{\Delta_x,\mathcal{J}_R\partial_{x_j}\mathcal{J}_R}=-4x_j\Delta_x+4\inner{u,\D_x}\partial_{u_j}-4u_j\inner{\D_u,\D_x}.
\end{equation*}
\end{lemma}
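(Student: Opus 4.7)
The plan is to substitute the explicit expression from Proposition \ref{propA1} and expand $[\Delta_x,\cdot]$ by linearity into four contributions, relying only on the elementary identities
\begin{equation*}
[\Delta_x,x_i]=2\D_{x_i},\qquad [\Delta_x,\nx^2]=4\mE_x+2m,\qquad [\Delta_x,\mE_x]=2\Delta_x,
\end{equation*}
together with the fact that $\Delta_x$ commutes with any operator depending only on $u$ and $\D_u$.

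The first two pieces are immediate: since $\D_{u_j}$ and $\inner{x,\D_u}$ are either $x$-free or linear in $x$, one gets $[\Delta_x,2\inner{u,x}\D_{u_j}]=4\inner{u,\D_x}\D_{u_j}$ from $[\Delta_x,\inner{u,x}]=2\inner{u,\D_x}$, and similarly $[\Delta_x,-2u_j\inner{x,\D_u}]=-4u_j\inner{\D_u,\D_x}$. These already produce the two cross terms appearing on the right-hand side of the claim.

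For the remaining two pieces one uses the Leibniz rule for commutators. The third term yields
\begin{equation*}
[\Delta_x,\nx^2\D_{x_j}]=(4\mE_x+2m)\D_{x_j},
\end{equation*}
since $\Delta_x$ commutes with $\D_{x_j}$. For the fourth term, one writes
\begin{equation*}
[\Delta_x,-x_j(2\mE_x+m-2)]=-[\Delta_x,x_j](2\mE_x+m-2)-x_j[\Delta_x,2\mE_x+m-2],
\end{equation*}
which equals $-2\D_{x_j}(2\mE_x+m-2)-4x_j\Delta_x$ by the elementary identities above.

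The main (and only) bookkeeping step is then to verify that the two contributions proportional to $\D_{x_j}$ cancel. Using the commutator $[\D_{x_j},\mE_x]=\D_{x_j}$ to move $\D_{x_j}$ past $\mE_x$, one rewrites $2\D_{x_j}(2\mE_x+m-2)=(4\mE_x+2m)\D_{x_j}$, which exactly matches the $\D_{x_j}$-contribution from the third term. After cancellation only $-4x_j\Delta_x$ survives from the latter pair, and combining everything produces the announced identity
\begin{equation*}
[\Delta_x,\mcJ_R\D_{x_j}\mcJ_R]=-4x_j\Delta_x+4\inner{u,\D_x}\D_{u_j}-4u_j\inner{\D_u,\D_x}.
\end{equation*}
No genuine obstacle is expected; the only care needed is to keep $\D_{x_j}$ and $\mE_x$ in the correct order when invoking $[\D_{x_j},\mE_x]=\D_{x_j}$ so that the two $\D_{x_j}$-contributions really do cancel.
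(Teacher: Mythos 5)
Your proof is correct and matches the paper's approach — the paper simply says the lemma ``follows from straightforward calculations'' and you have carried out exactly those calculations, splitting the commutator over the four terms of $\mathcal{J}_R\partial_{x_j}\mathcal{J}_R$ and using the standard $\mathfrak{sl}(2)$-type identities $[\Delta_x,x_i]=2\partial_{x_i}$, $[\Delta_x,\norm{x}^2]=4\mE_x+2m$, $[\Delta_x,\mE_x]=2\Delta_x$, and $[\partial_{x_j},\mE_x]=\partial_{x_j}$. The cancellation of the two $\partial_{x_j}$-terms that you flag as the only bookkeeping subtlety is handled correctly.
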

\begin{proof}
Follows from straightforward calculations. 
\end{proof}
\begin{lemma}
For all $1\leq j\leq m$, we have:
\begin{align*}
\comm{\inner{u,\D_x}\inner{\D_u,\D_x},\mathcal{J}_R\partial_{x_j}\mathcal{J}_R}&=2\norm{u}^2\partial_{u_j}\inner{\D_u,\D_x}-4x_j\inner{u,\D_x}\inner{\D_u,\D_x} \\ &+\brac{\inner{u,\D_x}\partial_{u_j}-u_j\inner{\D_u,\D_x}}\brac{2\mE_u+m-2}.
\end{align*}
\end{lemma}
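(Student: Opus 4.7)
The plan is to compute the commutator by first decomposing the special conformal transformation into its four constituent summands
\begin{equation*}
T_j := \mcJ_R\D_{x_j}\mcJ_R = 2\inner{u,x}\D_{u_j} - 2u_j\inner{x,\D_u} + \nx^2\D_{x_j} - x_j(2\mE_x+m-2)\ ,
\end{equation*}
and then expanding the outer product using the Leibniz rule
\begin{equation*}
\comm{\inner{u,\D_x}\inner{\D_u,\D_x},T_j} = \inner{u,\D_x}\comm{\inner{\D_u,\D_x},T_j} + \comm{\inner{u,\D_x},T_j}\inner{\D_u,\D_x}\ .
\end{equation*}
This reduces the problem to computing, for each of the four summands of $T_j$, its commutator with the pair of first-order operators $\inner{u,\D_x}$ and $\inner{\D_u,\D_x}$. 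Every such commutator is determined by the canonical relations $[\D_{x_i},x_j] = [\D_{u_i},u_j] = \delta_{ij}$, together with the familiar identities $[\inner{\D_u,\D_x},\inner{u,x}] = \inner{u,\D_x} + \inner{x,\D_u}$, $[\inner{\D_u,\D_x},\nx^2] = 2\inner{x,\D_u}$, $[\inner{u,\D_x},\inner{u,x}] = \nuu^2$, $[\inner{u,\D_x},\inner{x,\D_u}] = \mE_x - \mE_u$, and similar expressions that can be read off from the $\spl(2)$ relations of Proposition \ref{sl2} applied in both the $x$ and $u$ variables.

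The computation then proceeds block by block. First I would treat the two ``mixed'' terms $2\inner{u,x}\D_{u_j}$ and $-2u_j\inner{x,\D_u}$, which together with the Euler piece are the source of the right-hand side: the bracket with $2\inner{u,x}\D_{u_j}$ produces terms involving $\inner{u,\D_x}\D_{u_j}$ and a factor $2\mE_u$, whereas the bracket with $-2u_j\inner{x,\D_u}$ generates terms involving $u_j\inner{\D_u,\D_x}$ together with a further Euler contribution; the $\nuu^2\D_{u_j}\inner{\D_u,\D_x}$ term arises from commuting $\inner{\D_u,\D_x}$ past $2\inner{u,x}\D_{u_j}$. Second, I would compute the brackets with the ``radial'' pieces $\nx^2\D_{x_j}$ and $-x_j(2\mE_x+m-2)$; these are precisely the terms that appear in the $\mC$-valued Kelvin computation and will, after collecting, give the $-4x_j\inner{u,\D_x}\inner{\D_u,\D_x}$ summand, since $\inner{u,\D_x}\inner{\D_u,\D_x}$ is homogeneous of degree $-2$ in $x$ and degree $0$ in $u$.

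The main technical obstacle is bookkeeping: each of the four brackets produces several terms, and the final cleanup requires repeated use of the homogeneity relations $[\inner{u,\D_x}\inner{\D_u,\D_x},\mE_x] = 2\inner{u,\D_x}\inner{\D_u,\D_x}$ (no net shift in $u$-degree) to consolidate the many summands involving $\mE_x$ and $\mE_u$ into the single factor $(2\mE_u+m-2)$ sitting at the end of the quoted formula. Once the pieces are grouped in the order (``$\nuu^2\D_{u_j}\inner{\D_u,\D_x}$'')--(``$x_j\inner{u,\D_x}\inner{\D_u,\D_x}$'')--(``$\inner{u,\D_x}\D_{u_j}-u_j\inner{\D_u,\D_x}$ times Euler''), all stray contributions cancel and the claimed identity follows. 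No new idea beyond systematic use of the canonical commutation rules is required.
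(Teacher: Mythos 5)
Your overall strategy — split $T_j := \mcJ_R\D_{x_j}\mcJ_R$ into its four summands, apply the Leibniz rule $\comm{\inner{u,\D_x}\inner{\D_u,\D_x},T_j} = \inner{u,\D_x}\comm{\inner{\D_u,\D_x},T_j} + \comm{\inner{u,\D_x},T_j}\inner{\D_u,\D_x}$, and reduce everything to canonical commutation relations — is exactly what the paper does (the paper uses the equivalent expansion $\comm{AB,CD}=A\comm{B,C}D+AC\comm{B,D}+\comm{A,C}DB+C\comm{A,D}B$). So there is no new route here; this would just be the paper's calculation carried out in a slightly different order.

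However, two of the ``familiar identities'' you propose to feed into that calculation are wrong, and both are used at an essential spot in the actual bookkeeping, so the computation as described would not close. First, you claim $[\inner{\D_u,\D_x},\inner{u,x}] = \inner{u,\D_x} + \inner{x,\D_u}$, but the correct relation is
\begin{equation*}
[\inner{\D_u,\D_x},\inner{u,x}] = \mE_u + \mE_x + m\ ,
\end{equation*}
as one sees from $[\D_{u_i}\D_{x_i}, u_jx_j] = \delta_{ij}(x_j\D_{x_i} + u_j\D_{u_i} + 1)$. (What you wrote is reminiscent of $[\Delta_x,\inner{u,x}]=2\inner{u,\D_x}$ and $[\Delta_u,\inner{u,x}]=2\inner{x,\D_u}$, but those are different operators.) This is precisely the identity that produces the $2\inner{u,\D_x}\brac{\mE_u+\mE_x+m}\partial_{u_j}$ term in the paper's intermediate display, so getting it wrong destroys the Euler bookkeeping that you correctly identify as the hard part. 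Second, you have a sign flipped: $[\inner{u,\D_x},\inner{x,\D_u}] = \mE_u - \mE_x$, not $\mE_x - \mE_u$; this too enters the paper's computation as the $-2u_j\brac{\mE_u-\mE_x}\inner{\D_u,\D_x}$ term. Finally, your heuristic that the $-4x_j\inner{u,\D_x}\inner{\D_u,\D_x}$ summand comes purely from the two ``radial'' pieces of $T_j$ is an oversimplification: the mixed pieces $2\inner{u,x}\D_{u_j}$ and $-2u_j\inner{x,\D_u}$ also generate $x_j$-contributions (via commuting $\inner{u,\D_x}$ past $x_j$, which costs a $u_j$), and these must be combined with the residue from the radial pieces before the clean $-4x_j$ factor appears. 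With the two identities corrected and that combination carried out carefully, the plan does lead to the stated result.
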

\begin{proof}
Denoting $C_j := \mathcal{J}_R\partial_{x_j}\mathcal{J}_R$, we get: 
\begin{align*}
\comm{\inner{u,\D_x}\inner{\D_u,\D_x},C_j}=&2\inner{u,\D_x}\inner{x,\D_u}\partial_{x_j}+2\inner{u,x}\partial_{x_j}\inner{\D_u,\D_x}  \\
&-\inner{u,\D_x}\partial_{u_j}\brac{2\mE_x+m-2}-2\inner{u,\D_x}x_j\inner{\D_u,\D_x} \\
&+u_j\brac{2\mE_x+m-2}\inner{\D_u,\D_x}-2x_j\inner{u,\D_x}\inner{\D_u,\D_x}  \\
&+2\inner{u,\D_x}\brac{\mE_u+\mE_x+m}\partial_{u_j}+2\norm{u}^2\partial_{u_j}\inner{\D_u,\D_x}\\
&-2\inner{u,x}\partial_{x_j}\inner{\D_u,\D_x}-2\inner{u,\D_x}\partial_{x_j}\inner{x,\D_u} \\
&-2u_j\brac{\mE_u-\mE_x}\inner{\D_u,\D_x} \\
=&2\inner{u,\D_x}\inner{x,\D_u}\partial_{x_j}-\inner{u,\D_x}\partial_{u_j}\brac{m-2} \\
&-2\brac{x_j\inner{u,\D_x}-u_j}\inner{\D_u,\D_x}+u_j\brac{m-2}\inner{\D_u,\D_x} \\
&-2x_j\inner{u,\D_x}\inner{\D_u,\D_x}+2\inner{u,\D_x}\brac{\mE_u+m}\partial_{u_j} \\
&+2\norm{u}^2\partial_{u_j}\inner{\D_u,\D_x}-2u_j\inner{\D_u,\D_x}\brac{\mE_u-1} \\
&-2\inner{u,\D_x}\brac{\inner{x,\D_u}\partial_{x_j}+\partial_{u_j}}.
\end{align*}
Simplifying the last expression completes the proof.
\end{proof}
\noindent
\begin{lemma}
For all $1\leq j\leq m$, we have:
\begin{equation*}
\comm{\norm{u}^2\inner{\D_u,\D_x}^2,\mathcal{J}_R\partial_{x_j}\mathcal{J}_R}=-4x_j\norm{u}^2\inner{\D_u,\D_x}^2+2\norm{u}^2\partial_{u_j}\inner{\D_u,\D_x} \brac{2\mE_u+m-4}.
\end{equation*}
\end{lemma}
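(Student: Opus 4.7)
The plan is to compute $\comm{\norm{u}^2\inner{\D_u,\D_x}^2,\,C_j}$, where $C_j := \mcJ_R\D_{x_j}\mcJ_R$ has the explicit form from Proposition~\ref{propA1}, by reducing it to simpler building blocks. Writing $A := \inner{\D_u,\D_x}$ for brevity, I would first apply the Leibniz rule $\comm{PQ,D} = P\comm{Q,D} + \comm{P,D}Q$ with $P = \norm{u}^2$ and $Q = A^2$, and observe that $\comm{\norm{u}^2,C_j} = 0$: of the four terms of $C_j$, only $2\inner{u,x}\D_{u_j}$ and $-2u_j\inner{x,\D_u}$ contribute, producing the cancelling quantities $-4u_j\inner{u,x}$ and $+4u_j\inner{u,x}$ respectively. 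The problem thus reduces to evaluating $\norm{u}^2\comm{A^2,C_j}$.

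To handle $\comm{A^2,C_j} = A\comm{A,C_j} + \comm{A,C_j}A$, I would first compute $\comm{A,C_j}$ from the four summands of $C_j$ using the atomic brackets $\comm{A,\inner{u,x}} = \mE_u + \mE_x + m$, $\comm{A,u_j} = \D_{x_j}$, $\comm{A,\inner{x,\D_u}} = \Delta_u$, $\comm{A,\norm{x}^2} = 2\inner{x,\D_u}$, $\comm{A,x_j} = \D_{u_j}$, and $\comm{A,\mE_x} = A$. The cross-cancellation $-2\D_{x_j}\inner{x,\D_u} + 2\inner{x,\D_u}\D_{x_j} = -2\D_{u_j}$ and a short $\mE_u$-reordering then yield the intermediate identity
\begin{equation*}
\comm{A,C_j} \ =\ (2\mE_u + m)\D_{u_j} - 2u_j\Delta_u - 2x_j A.
\end{equation*}

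Substituting this into $A\comm{A,C_j} + \comm{A,C_j}A$ and reordering by means of the three consequences of $\comm{A,\mE_u} = A$ and $\comm{\mE_u,\D_{u_j}} = -\D_{u_j}$, namely $A(2\mE_u+m) = (2\mE_u+m+2)A$, $(2\mE_u+m)\D_{u_j} = \D_{u_j}(2\mE_u+m-2)$, and $(2\mE_u+m-2)A = A(2\mE_u+m-4)$, I expect the three contributions $(2\mE_u+m+2)\D_{u_j}A$, $(2\mE_u+m)\D_{u_j}A$, and the extra $-2\D_{u_j}A$ arising when commuting $A$ past $x_j$ to fuse into $2\D_{u_j}A(2\mE_u+m-4)$, while the $-2x_jA$ piece doubles to the expected $-4x_jA^2$. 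Multiplying by $\norm{u}^2$, which commutes with $x_j$, completes the algebra.

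The residual terms $-4u_j\Delta_u A$ and $-2\D_{x_j}\Delta_u$ that fall out of the Leibniz expansion end in $\Delta_u$ acting after $A$; since $A$ sends $\mcC^{\infty}(\mR^m,\mcH_k)$ into $\mcC^{\infty}(\mR^m,\mcH_{k-1})$ and $\Delta_u$ annihilates harmonic polynomials in $u$, these terms act trivially on the target space of $\mcD_k$ and may be suppressed, yielding the stated identity. The main obstacle will be the bookkeeping of non-commuting operators during the $\mE_u$-shifts: each time $\mE_u$ is moved past $A$ or $\D_{u_j}$ a boundary term is produced, and verifying that all of these boundary terms reassemble into the clean shifted Euler factor $(2\mE_u + m - 4)$ sitting on the right of the final expression is the delicate step.
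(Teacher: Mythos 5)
Your proposal is correct: the atomic commutators are right, $\comm{\norm{u}^2,C_j}=0$ holds (the $\mp 4u_j\inner{u,x}$ contributions cancel), the intermediate identity $\comm{\inner{\D_u,\D_x},C_j}=(2\mE_u+m)\D_{u_j}-2u_j\Delta_u-2x_j\inner{\D_u,\D_x}$ is verified, the $\mE_u$-shifts fuse the three $\D_{u_j}\inner{\D_u,\D_x}$-contributions into $2\D_{u_j}\inner{\D_u,\D_x}(2\mE_u+m-4)$ exactly as claimed, and the residual terms carry a trailing $\Delta_u$ which annihilates $\mcH_k$-valued functions. This is essentially the paper's approach (a direct commutator computation in which $\Delta_u$-terms are suppressed on $\mcH_k$-valued functions); your factoring through $\comm{A,C_j}$ and $\comm{A^2,C_j}=A\comm{A,C_j}+\comm{A,C_j}A$ just organizes the same bookkeeping a bit more cleanly than the paper's one-shot $\comm{AB,CD}$ expansion.
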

\begin{proof}
Again denoting $C_j := \mathcal{J}_R\partial_{x_j}\mathcal{J}_R$, we get: 
\begin{align*}
\comm{\norm{u}^2\inner{\D_u,\D_x}^2,C_j}&=4\norm{u}^2\inner{x,\D_u}\inner{\D_u,\D_x}\partial_{x_j}- 4\norm{u}^2x_j\inner{\D_u,\D_x}^2  \\
&-2\norm{u}^2\partial_{u_j}\inner{\D_u,\D_x}\brac{2\mE_x+m-2}-4u_j\inner{u,x}\inner{\D_u,\D_x}^2 \\
&+4\norm{u}^2\brac{\mE_x+\mE_u+m+1}\inner{\D_u,\D_x}\partial_{u_j}+4u_j\inner{u,x}\inner{\D_u,\D_x}^2 \\
&-4\norm{u}^2\inner{\D_x,\D_u}\partial_{x_j}\inner{x,\D_u} \\
&=4\norm{u}^2\inner{x,\D_u}\inner{\D_u,\D_x}\partial_{x_j}- 4\norm{u}^2x_j\inner{\D_u,\D_x}^2  \\
&-2\norm{u}^2\partial_{u_j}\inner{\D_u,\D_x}\brac{2\mE_x+m-2}-4u_j\inner{u,x}\inner{\D_u,\D_x}^2 \\
&+4\norm{u}^2\partial_{u_j}\inner{\D_u,\D_x}\brac{\mE_x+\mE_u+m+1-3}+4u_j\inner{u,x}\inner{\D_u,\D_x}^2 \\
&-4\norm{u}^2\inner{x,\D_u}\inner{\D_x,\D_u}\partial_{x_j}-4\norm{u}^2\partial{u_j}\inner{\D_x,\D_u} 
\end{align*}
There is no term with $\Delta_u$ involved because we work with $\mathcal{H}_k$-valued functions. Simplifying the last expression completes the proof.
\end{proof}
\noindent
Now we can put everything together, hereby again using the notation $C_j$: 
\begin{align*}
\mathcal{D}_kC_j&=C_j\mathcal{D}_k+\comm{\mathcal{D}_k,C_j} \\
&=\brac{C_j-4x_j}\mathcal{D}_k+4\brac{\inner{u,\D_x}\partial_{u_j}-u_j\inner{\D_u,\D_x}} \\
&-\frac{4}{2k+m-2}\brac{\inner{u,\D_x}\partial_{u_j}-u_j\inner{u,\D_x}}\brac{2\mE_u+m-2}-\frac{8}{2k+m-2m}\norm{u}^2 \partial_{u_j} \inner{u,\D_x} \\
&+\frac{8}{(2k+m-2)(2k+m-4)}\norm{u}^2\D_{u_j}\inner{\D_u,\D_x}\brac{2\mE_u+m-4} \\
&=\brac{C_j-4x_j}\mathcal{D}_k\ ,
\end{align*}
which completes the proof of proposition \ref{special conformal2}. Finally, we also prove that the higher spin Laplace operator is elliptic. We first start with the definition of ellipicity: 
\begin{definition}
A linear homogeneous differential operator of second order 
\begin{equation*}
\mcD:\mcC^{\infty}\brac{\mR^m,\mV_{\lambda}}\longrightarrow \mcC^{\infty}\brac{\mR^m,\mV_{\mu}},
\end{equation*}
where $\mV_{\lambda}$ and $\mV_{\mu}$ are vector spaces, is elliptic if for every non-zero vector $x\in \mR^m$ its principle symbol, which is a linear map $\sigma_x(\mcD):\mV_{\lambda}\longrightarrow \mV_{\mu}$ obtained by replacing its partial derivatives $\D_{x_j}$ with the corresponding variables $x_j$, is a linear isomorphism. 
\end{definition}
\begin{theorem}\label{Elliptic}
The higher spin Laplace operator, which is explicitely given by 
\begin{equation*}
\mcD_k:=\Delta_x-\frac{4}{2k+m-2}\brac{\inner{u,\D_x}-\frac{\norm{u}^2}{2k+m-4}\inner{\D_u,\D_x}}\inner{\D_u,\D_x}
\end{equation*} 
is an elliptic operator if $m>4$.
\end{theorem}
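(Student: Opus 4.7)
The plan is to compute the principal symbol of $\mcD_k$ explicitly, reduce by $\Spin(m)$-equivariance to a single distinguished direction, and then diagonalise the resulting operator in $u$ using the branching $\so(m)\downarrow\so(m-1)$.

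Since every term of $\mcD_k$ is of order two in $\D_x$, formally replacing $\D_{x_j}$ by $x_j$ yields the principal symbol
\begin{equation*}
\sigma_x(\mcD_k) = \norm{x}^2 - \frac{4}{2k+m-2}\inner{u,x}\inner{\D_u,x} + \frac{4\norm{u}^2}{(2k+m-2)(2k+m-4)}\inner{\D_u,x}^2,
\end{equation*}
still a differential operator in $u$ that acts linearly on the finite-dimensional target $\mcH_k(\mR^m,\mC)$ for each fixed $x$. Because $\mcD_k$ is $\Spin(m)$-invariant, its symbol is equivariant; after factoring out $\norm{x}^2$ and rotating $x$ onto $e_m$ the problem collapses to showing that
\begin{equation*}
\tilde\sigma := 1 - \frac{4}{2k+m-2}\,u_m\D_{u_m} + \frac{4\norm{u}^2}{(2k+m-2)(2k+m-4)}\,\D_{u_m}^2
\end{equation*}
is a linear automorphism of $\mcH_k(\mR^m,\mC)$. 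A short check based on $[\Delta_u,u_m\D_{u_m}]=2\D_{u_m}^2$ and $[\Delta_u,\norm{u}^2]=4\mE_u+2m$ confirms that $\tilde\sigma$ does indeed preserve $\mcH_k$.

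I would then exploit the fact that $\tilde\sigma$ commutes with the stabiliser $\SO(m-1)$ of $e_m$. The classical branching rule gives the multiplicity-free decomposition
\begin{equation*}
\mcH_k(\mR^m,\mC) \ \cong\ \bigoplus_{j=0}^{k} W_j, \qquad W_j \cong \mcH_j(\mR^{m-1},\mC),
\end{equation*}
in which $W_j$ is spanned by the harmonic extensions of polynomials $H_j(\underline{u}) \in \mcH_j(\mR^{m-1},\mC)$, i.e.\ elements of the form $u_m^{k-j}H_j(\underline{u}) + (\text{lower powers of }u_m)$, with $\underline{u}=(u_1,\ldots,u_{m-1})$. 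Schur's lemma forces $\tilde\sigma|_{W_j} = \lambda_{k,j}\,\mathrm{Id}$, reducing the entire question to verifying $\lambda_{k,j}\neq 0$ for $j\in\set{0,\ldots,k}$. The scalar $\lambda_{k,j}$ can be read off the coefficient of $u_m^{k-j}H_j(\underline{u})$ after applying $\tilde\sigma$ to any representative of $W_j$: lower-order corrections in $u_m$ cannot feed back into $u_m^{k-j}H_j$, so only the leading monomial contributes, yielding
\begin{equation*}
\lambda_{k,j}=1-\frac{4(k-j)}{2k+m-2}+\frac{4(k-j)(k-j-1)}{(2k+m-2)(2k+m-4)}.
\end{equation*}

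The main obstacle is then the algebraic simplification of $\lambda_{k,j}$. Setting $X=2k+m-2$ and $n=k-j$, the numerator of $\lambda_{k,j}$ becomes $X^2-2(2n+1)X+4n(n+1)$, whose discriminant equals $4$; this gives the clean factorisation
\begin{equation*}
\lambda_{k,j}=\frac{(2j+m-2)(2j+m-4)}{(2k+m-2)(2k+m-4)}.
\end{equation*}
For $m>4$ and every $0\le j\le k$ all four factors on the right-hand side are strictly positive, so $\lambda_{k,j}>0$, $\tilde\sigma$ is invertible, and $\mcD_k$ is elliptic. The borderline value $m=4$ makes $\lambda_{k,0}=0$, which accounts at the symbol level for the sharpness of the hypothesis $m>4$ and explains why this dimension must be excluded.
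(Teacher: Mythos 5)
Your proof is correct, and its core ingredients — reducing by $\Spin(m)$-equivariance to a distinguished direction and then exploiting the multiplicity-free branching $\so(m)\downarrow\so(m-1)$ on $\mcH_k$ — coincide with the paper's. The mechanics of the eigenvalue computation, however, differ. The paper conjugates $\inner{\omega,\D_u}$ by the Kelvin inversion $\mcJ$ in $u$, writes an arbitrary $H_k$ as $\sum_j\brac{\mcJ\inner{\omega,\D_u}\mcJ}^jH^\ast_{k-j}$ with $\inner{\omega,\D_u}H^\ast_{k-j}=0$, and extracts the eigenvalues using the commutator identity $\comm{\inner{\omega,\D_u},\brac{\mcJ\inner{\omega,\D_u}\mcJ}^j}=-2\brac{\mcJ\inner{\omega,\D_u}\mcJ}^{j-1}\brac{2\mE_u+m-3}$ in $\mcU(\mathfrak{sl}(2))$, finally locating the zeros of the resulting quadratic in $m$. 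You instead invoke Schur's lemma on each branching component $W_j$ and read the scalar off the $u_m^{k-j}H_j(\underline u)$ leading coefficient, which avoids the $\mathfrak{sl}(2)$ machinery entirely; moreover, your closed-form factorisation $\lambda_{k,j}=\frac{(2j+m-2)(2j+m-4)}{(2k+m-2)(2k+m-4)}$ is cleaner than the paper's root-finding and makes the failure at $m=4$ (via $\lambda_{k,0}=0$) transparent at a glance. The two are of course equivalent: after the index shift $j\mapsto k-j$, your numerator $X^2-2(2n+1)X+4n(n+1)$ with $X=2k+m-2$, $n=k-j$ is exactly the paper's $(2k+m-2)(2k+m-4)-4n(2k+m-n-3)$. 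One small point worth making explicit in your write-up is that Schur's lemma requires each $W_j\cong\mcH_j(\mR^{m-1})$ to be $\so(m-1)$-irreducible, which holds since $m>4$ gives $m-1\ge 4$, so this is consistent with the standing hypothesis.
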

\begin{proof}
To prove the theorem, we will show that for fixed $x\in \mR_0^m$ the symbol of the higher spin Laplace operator, which is given by
\begin{equation*}
\sigma_x(\mcD_k)=\norm{x}^2-\frac{4}{2k+m-2}\brac{\inner{u,x}-\frac{\norm{u}^2}{2k+m-4}\inner{x,\D_u}}\inner{x,\D_u}: \mcH_k\longrightarrow\mcH_k,
\end{equation*}
is a linear isomorphism. As the symbol is clearly a linear map, it remains to be proven that the map is injective. To do so, we will need a clever choice for a basis for $\mcH_k\brac{\mR^m,\mC}$, which will be obtained using the classical CK extension for harmonic polynomials \cite{LSVL}. Therefore, we need the classical Kelvin inversion $\mcJ$, which is given by 
\begin{equation*}
\mcJ:\;\mcC^{\infty}\brac{\mR^m,\mC}\longrightarrow\mcC^{\infty}\brac{\mR_0^m,\mC}: f(u)\mapsto \mcJ\comm{f}(u):=\norm{u}^{2-m}f\brac{\frac{u}{\norm{u}^2}}
\end{equation*} 
In this case, we also have that 
\begin{equation*}
\mathfrak{sl}(2)\cong\Span\brac{\mathcal{J}\partial_{u_j}\mathcal{J},\partial_{u_j},2\mE_u+m-2},
\end{equation*}
where $\mathcal{J}\partial_{u_j}\mathcal{J}$ is explicitly given by
\begin{equation*}
\mathcal{J}\partial_{u_j}\mathcal{J}=\norm{u}^2\D_{u_j}-u_j\brac{2\mE_u+m-2}
\end{equation*}
For $x\in \mR_0^m$ fixed, we have 
\begin{equation*}
\mathcal{J}\inner{x,\D_u}\mathcal{J}=\norm{u}^2\inner{x,\D_{u_j}}-\inner{u,x}\brac{2\mE_u+m-2},
\end{equation*}
which means that we can write the symbol of $\mcD_k$ as
\begin{equation*}
\sigma_x(\mcD_k)=\norm{x}^2\brac{1+\frac{4}{(2k+m-2)(2k+m-4)}\brac{\mathcal{J}\inner{\omega,\D_u}\mathcal{J}}\inner{\omega,\D_u}}
\end{equation*}
The branching rules for the Lie algebra $\so(m)$ state that when we restrict the action on the irreducible representation with highest weight $\lambda=(k,0\ldots,0)$ to $\so(m-1)$, we get the following decomposition:
\begin{equation*}
(k,0,\ldots,0) \bigg|_{\so(m-1)}^{\so(m)}=\bigoplus_{j=0}^k(k-j,0,\ldots,0).
\end{equation*}
This means that an arbitrary harmonic polynomial $H_k(u)\in \mcH_k\brac{\mR^m,\mC}$ can be written as
\begin{equation*}
H_k(u)=\sum_{j=0}^k\brac{\mathcal{J}\inner{\omega,\D_u}\mathcal{J}}^jH^{\ast}_{k-j}(u),
\end{equation*}
where $H^{\ast}_{k-j}(u)\in\mcH_{k-j}\brac{\mR^{m},\mC}$ such that $\inner{\omega,\D_u}H^{\ast}_{k-j}(u)=0$. The right-hand side of the equation is then clearly invariant under the action of $\so(m-1)$, where $\so(m-1)$ has to be understood as the Lie algebra corresponding to the subgroup of SO$(m)$ containing rotations in the hyperplane perpendicular to $\omega\in \mR^m$. Using the relation 
\begin{equation*}
\comm{\inner{\omega,\D_u},\brac{\mathcal{J}\inner{\omega,\D_u}\mathcal{J}}^j}=-2\brac{\mathcal{J}\inner{\omega,\D_u}\mathcal{J}}^{j-1}\brac{2\mE_u+m-3},
\end{equation*}
which is a relation in the universal enveloping algebra $\mcU(\mathfrak{sl}(2))$ that can be proved by induction, the equation $\sigma_x(\mcD_k)H_k(u)=0$ leads to the following system:
\begin{equation*}
H_k(u)+\sum_{j=1}^k\brac{1-\frac{4j(2k+m-j-3)}{(2k+m-2)(2k+m-4)}}\brac{\mathcal{J}\inner{\omega,\D_u}\mathcal{J}}^jH^{\ast}_{k-j}(u)=0.
\end{equation*}
Since the polynomials $H^{\ast}_{k-j}(u)\in\mcH_k\brac{\mR^{m},\mC}$ are linearly independent for $1\leq j \leq k$, we have that either $H^{\ast}_{k-j}(u)=0$ for all $1\leq j \leq k$, which means that $\ker{\sigma_x(\mcD_k)}=0$ or that 
\begin{align*}
1-\frac{4j(2k+m-j-3)}{(2k+m-2)(2k+m-4)}&=0, \\
\iff (2k+m-4)(2k+m-2)-4j(2k+m-j-3)&=0.
\end{align*}
The latter is a polynomial of second order in $m$ and has two roots: 
\begin{equation*}
m=-2(k-j-1) \quad \text{and} \quad m=-2(k-j-2).
\end{equation*}
It is easy to see that for $k \in \mN$ fixed, only $m\leq 4$ causes trouble. This means that $\ker{\sigma_x(\mcD_k)}=0$ whenever $m>4$, which proves the statement.
\end{proof}
\noindent

%%%%%%%%%%%%%%%%%%%%%%%%%%%%%%%%%%%%%
%%%%%%%%%%%%%%%%%%%%%%%%%%%%%%%%%%%%%
%%%%%%%%%%%%%%%%%%%%%%%%%%%%%%%%%%%%%
%%%%%%%%%%%%%%%%%%%%%%%%%%%%%%%%%%%%%
%%%%%%%%%%%%%%%%%%%%%%%%%%%%%%%%%%%%%
%%%%%%%%%%%%%%%%%%%%%%%%%%%%%%%%%%%%%
%%%%%%%%%%%%%%%%%%%%%%%%%%%%%%%%%%%%%
%%%%%%%%%%%%%%%%%%%%%%%%%%%%%%%%%%%%%

%%%%%%%%%%%%%%%%%%%%%%%%%%%%%%%%%%%%%
%%%%%%%%%%%%%%%%%%%%%%%%%%%%%%%%%%%%%
%%%%%%%%%%%%%%%%%%%%%%%%%%%%%%%%%%%%%
%%%%%%%%%%%%%%%%%%%%%%%%%%%%%%%%%%%%%
%%%%%%%%%%%%%%%%%%%%%%%%%%%%%%%%%%%%%
%%%%%%%%%%%%%%%%%%%%%%%%%%%%%%%%%%%%%
%%%%%%%%%%%%%%%%%%%%%%%%%%%%%%%%%%%%%
%%%%%%%%%%%%%%%%%%%%%%%%%%%%%%%%%%%%%
\end{document}